\numberwithin{equation}{section}
\theoremstyle{plain}
\newtheorem{thm}{Theorem}[section]
\newtheorem{prop}[thm]{Proposition}
\newtheorem{lemma}[thm]{Lemma}
\newtheorem{corollary}[thm]{Corollary}
\DeclareMathOperator*{\argmax}{argmax}
\begin{document}

\begin{frontmatter}
\title{Recent progress in log-concave density estimation}
\runtitle{Log-concave density estimation}

\begin{aug}
\author{\fnms{Richard} J. \snm{Samworth}\thanksref{t1}\ead[label=e1]{r.samworth@statslab.cam.ac.uk}}
\thankstext{t1}{Supported by an EPSRC Early Career Fellowship, an EPSRC Programme grant and a grant from the Leverhulme Trust.}
\runauthor{R. J. Samworth}

\affiliation{University of Cambridge}

\address{Statistical Laboratory, Wilberforce Road, Cambridge CB3 0WB,United Kingdom \printead{e1}.}

\end{aug}

\begin{abstract}
In recent years, log-concave density estimation via maximum likelihood estimation has emerged as a fascinating alternative to traditional nonparametric smoothing techniques, such as kernel density estimation, which require the choice of one or more bandwidths.  The purpose of this article is to describe some of the properties of the class of log-concave densities on $\mathbb{R}^d$ which make it so attractive from a statistical perspective, and to outline the latest methodological, theoretical and computational advances in the area.
\end{abstract}

\begin{keyword}
\kwd{Log-concavity}
\kwd{maximum likelihood estimation}
\end{keyword}

\end{frontmatter}

\section{Introduction}

Shape-constrained density estimation has a long history, dating back at least as far as \citet{Grenander1956}, who studied the maximum likelihood estimator of a decreasing density on the non-negative half-line.  Unlike traditional nonparametric smoothing approaches, this estimator does not require the choice of any tuning parameter, and indeed it has a beautiful characterisation as the left derivative of the least concave majorant of the empirical distribution function.  Over subsequent years, a great deal of work went into understanding its theoretical properties \citep[e.g.][]{PrakasaRao1969,Groeneboom1985,Birge1989}, revealing in particular its non-standard cube-root rate of convergence.

On the other hand, the class of decreasing densities on $[0,\infty)$ is quite restrictive, and does not generalise particularly naturally to multivariate settings.  In recent years, therefore, alternative families of densities have been sought, and the class of log-concave densities has emerged as one with many attractive properties from a statistical viewpoint.  This has led to applications of the theory to a wide variety of problems, including the detection of the presence of mixing \citep{Walther2002}, filtering \citep{HenningssonAstrom2006}, tail index estimation \citep{MullerRufibach2009}, clustering \citep{CSS2010}, regression \citep{DSS2011}, Independent Component Analysis \citep{SamworthYuan2012} and classification \citep{ChenSamworth2013}.  

The main aim of this article is to give an account of the key properties of log-concave densities and their relevance for applications in statistical problems.  We focus especially on ideas of log-concave projection, which underpin the maximum likelihood approach to inference within the class.  Recent theoretical results and computational aspects will also be discussed.  For alternative reviews of related topics, see \citet{SaumardWellner2014}, which has a greater emphasis on analytic properties, and \citet{Walther2009}, with a stronger focus on modelling and applications.

\section{Basic properties}

We say that $f:\mathbb{R}^d \rightarrow [0,\infty)$ is log-concave if $\log f$ is a concave function (with the convention $\log 0 := -\infty$).  Let $\mathcal{F}_d$ denote the class of upper semi-continuous log-concave densities on $\mathbb{R}^d$ with respect to $d$-dimensional Lebesgue measure.  The upper semi-continuity is not particularly important in most of what follows, but it fixes a particular version of the density and means we do not need to worry about densities that differ on a set of zero Lebesgue measure.  We do not consider here degenerate log-concave densities whose support is contained in a lower-dimensional affine subset of $\mathbb{R}^d$.

Many standard families of densities are log-concave.  For instance, Gaussian densities with positive-definite covariance matrices and uniform densities on convex, compact sets belong to $\mathcal{F}_d$; the logistic density $f(x) = \frac{e^{-x}}{(1-e^{-x})^2}$, Beta$(a,b)$ densities with $a,b, \geq 1$, Weibull$(\alpha)$ denities with $\alpha \geq 1$, $\Gamma(\alpha,\lambda)$ densities with $\alpha \geq 1$, Gumbel and Laplace densities (amongst many others) belong to $\mathcal{F}_1$.  It is convenient to think of log-concave densities as unimodal densities with exponentially decaying tails.  Unimodality here is meant in the sense of the upper level sets being convex, though in one dimension, we have a stronger characterisation: 
\begin{lemma}[\citet{Ibragimov1956}]
A density $f$ on $\mathbb{R}$ is log-concave if and only if the convolution $f \ast g$ is unimodal for every unimodal density $g$. 
\end{lemma}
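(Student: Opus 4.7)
My plan is to handle the two implications separately, with the forward direction being the main content. The cleanest tool for both is the equivalence between log-concavity of a non-negative function on $\mathbb{R}$ and its being a P\'olya frequency function of order $2$ (PF$_2$), i.e.,
\[
f(x_1 - y_1)\, f(x_2 - y_2) \;\geq\; f(x_1 - y_2)\, f(x_2 - y_1) \qquad \text{whenever } x_1 < x_2,\ y_1 < y_2.
\]
Via this reframing, the forward direction becomes a clean consequence of the variation-diminishing property of PF$_2$ convolution kernels (Schoenberg, Karlin): if $f$ is PF$_2$, then for any function $h$ with $f * h$ well-defined, the number of sign changes satisfies $S^{-}(f * h) \leq S^{-}(h)$, and when equality holds the sign pattern is preserved.

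For the forward direction, given unimodal $g$ with mode $m$, the distributional derivative $g'$ is non-negative on $(-\infty, m)$ and non-positive on $(m, \infty)$, so $S^-(g') \leq 1$ with sign pattern $+,-$. Applying the variation-diminishing bound to $h = g'$ and noting $(f * g)' = f * g'$, I get $S^-\bigl((f * g)'\bigr) \leq 1$. Since $f * g$ is a bona fide density, $(f * g)'$ cannot be of constant sign, so exactly one sign change occurs, of the form $+$ to $-$, which is the definition of unimodality of $f * g$. A more self-contained alternative uses Khintchine's representation $g = \int U_{[a,b]}\, d\nu(a,b)$ over intervals $[a,b]\ni m$, giving $(f*g)'(x) = \int \frac{f(x-a) - f(x-b)}{b-a}\, d\nu(a,b)$; monotonicity of $f(x-a)/f(x-b)$ in $x$ (a direct consequence of log-concavity) then controls the sign structure of the mixture.

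For the converse I would argue by contrapositive: assuming $f$ is not log-concave, construct a unimodal $g$ for which $f * g$ is not unimodal. The failure of log-concavity provides points $u < v < w$ with $f(v)^{w-u} < f(u)^{w-v} f(w)^{v-u}$, i.e., the chord of $\log f$ joining $u$ and $w$ strictly exceeds $\log f(v)$. The plan is to choose $g$ as a unimodal kernel of scale comparable to $w - u$ so that, after smoothing, the convolution inherits enough of the local concave/convex/concave behaviour of $\log f$ near $[u,w]$ to develop a strict local minimum. The main obstacle is the explicit construction: since $g$ must itself have a single mode, one has less freedom than with mixtures of point masses, and verifying that $(f * g)'$ actually acquires at least two sign changes requires delicate control over how convolution with a unimodal kernel smooths the log-concavity defect of $f$. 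This is historically the more technical half of Ibragimov's argument.
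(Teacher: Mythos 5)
The paper states this lemma as a classical result of \citet{Ibragimov1956} and supplies no proof, so there is nothing to compare against line by line; I am assessing your argument on its own terms. Your forward direction follows the standard route: log-concavity of $f$ is equivalent to total positivity of order $2$ of the kernel $(x,y) \mapsto f(x-y)$, and the variation-diminishing property then bounds the number of sign changes of $(f \ast g)' = f \ast g'$ by that of $g'$. Modulo routine care (for non-smooth $g$ the derivative is a signed measure, so one either extends the variation-diminishing property to such measures or approximates $g$ by smooth unimodal densities and uses weak closedness of the unimodal class; also, $(f\ast g)'$ may have \emph{zero} sign changes, which still yields unimodality), this half is sound. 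Be warned, however, that your ``self-contained alternative'' via Khintchine's representation does not work as stated: each term $\bigl(f(x-a)-f(x-b)\bigr)/(b-a)$ does change sign once from $+$ to $-$, but the crossing point varies with $(a,b)$, and a mixture of functions each having a single $+$ to $-$ sign change can have arbitrarily many sign changes. The variation-diminishing machinery is precisely what replaces this failed termwise argument, so the ``alternative'' is not a genuine alternative.

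The real gap is the converse. You correctly identify the contrapositive strategy --- given a failure of log-concavity at $u<v<w$, build a unimodal $g$ for which $f\ast g$ acquires a local minimum --- but you explicitly stop at ``the main obstacle is the explicit construction,'' so half of the if-and-only-if is unproved. This half is not mere bookkeeping: unimodality of $f \ast U_{[0,h]}$ for every $h>0$ only says that $f(x)/f(x-h)$ crosses $1$ at most once in $x$, which is strictly weaker than the monotone-ratio property equivalent to log-concavity, so uniform windows alone cannot close the argument; one must exploit, e.g., mixtures of uniforms on intervals sharing a common endpoint (which are unimodal) or other carefully tuned kernels to convert the chord inequality at $u<v<w$ into two sign changes of $(f\ast g)'$. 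As it stands, the proposal proves one implication and records an intention for the other.
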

A more precise statement about the exponentially decaying tails is as follows:
\begin{lemma}[\citet{CuleSamworth2010}]
If $f \in \mathcal{F}_d$, then there exist $\alpha > 0$, $\beta \in \mathbb{R}$ such that $f(x) \leq e^{-\alpha\|x\| + \beta}$ for all $x \in \mathbb{R}^d$.
\end{lemma}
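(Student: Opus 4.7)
Write $\varphi := \log f$, a concave upper semi-continuous function on $\mathbb{R}^d$ with $\int_{\mathbb{R}^d} e^\varphi = 1$; the goal is equivalent to a uniform linear upper bound $\varphi(x) \leq \beta - \alpha \|x\|$. The strategy is to use integrability to produce a bounded upper level set of $\varphi$, and then to convert boundedness of that level set into uniform linear decay of $\varphi$ along every ray from the mode, via an elementary one-dimensional concavity inequality.

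First I would establish some housekeeping facts. Every upper level set $L_t := \{\varphi \geq t\}$ is convex (by concavity of $\varphi$), closed (by upper semi-continuity), and has Lebesgue measure at most $e^{-t}$, since $1 \geq \int_{L_t} e^\varphi \geq e^t \,\mathrm{vol}(L_t)$; a convex set of finite measure in $\mathbb{R}^d$ is bounded, so each $L_t$ is compact. A parallel concavity-plus-integrability argument rules out $\varphi(x) = +\infty$ at any interior point of the support (else concavity forces $\varphi = +\infty$ on a neighbourhood, contradicting integrability). Combining these with the nested compactness of $\{L_{M-1/n}\}_{n \geq 1}$ shows that $M := \sup \varphi$ is finite and attained at some $x_0$. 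After translating I may assume $x_0 = 0$, and fix $R < \infty$ with $L_{M-1} \subseteq B(0, R)$.

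The main step is a one-dimensional concavity argument along each ray. For $v \in S^{d-1}$, set $\psi_v(t) := \varphi(tv)$, a concave function of $t \geq 0$, and $T(v) := \sup\{t \geq 0 : \psi_v(t) \geq M - 1\} \in [0, R]$. If $\psi_v \equiv -\infty$ on $(0, \infty)$ the bound is trivial; otherwise concavity together with upper semi-continuity forces $\psi_v(t) \to M$ as $t \to 0^+$, so $T(v) > 0$. The standard fact that $t \mapsto (\psi_v(t) - M)/t$ is non-increasing on $(0, \infty)$, applied at $t_1 \searrow T(v)$ (where $\psi_v(t_1) < M - 1$), yields
\[
\psi_v(t) \leq M - \frac{t}{T(v)} \leq M - \frac{t}{R} \qquad \text{for all } t > T(v),
\]
while the trivial bound $\psi_v(t) \leq M$ covers $t \in [0, T(v)]$. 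Absorbing an additive constant, $\psi_v(t) \leq M + 1 - t/R$ uniformly in $v$, i.e.\ $\varphi(x) \leq M + 1 - \|x\|/R$ for every $x \in \mathbb{R}^d$, giving the claim with $\alpha = 1/R$ and $\beta = M + 1$.

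The main obstacle is the boundary housekeeping: verifying that $M$ is finite and attained (so that translating to the mode is legitimate) and that the exit time $T(v)$ is positive in directions where $\psi_v$ is not identically $-\infty$. Both rest on the interplay between upper semi-continuity, concavity, and the integrability-driven boundedness of level sets; once these are in place, the radial concavity inequality performs the whole estimate.
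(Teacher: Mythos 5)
The survey states this lemma without proof, citing \citet{CuleSamworth2010}, so there is no in-paper argument to compare against; your proof is essentially the standard one from that reference (bounded superlevel set plus linear decay of a concave function along rays), and the main estimate --- the monotone difference quotient applied at the exit time $T(v)\le R$ of the level set $L_{M-1}$, giving $\varphi(x)\le M+1-\|x\|/R$ --- is correct and complete as written.

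The one step you should tighten is the housekeeping. The assertion ``a convex set of finite measure in $\mathbb{R}^d$ is bounded'' is false as stated: a hyperplane, or a ray, is convex, unbounded and Lebesgue-null. What is true is that a convex set with \emph{nonempty interior} and finite measure is bounded. To rule out a degenerate unbounded level set you need the paper's standing assumption that $f\in\mathcal{F}_d$ is non-degenerate, so that $\mathrm{dom}(\varphi)$ contains a ball $B$ on which $\varphi\ge -K$; if some $L_t$ were unbounded, then $\mathrm{conv}(L_t\cup B)\subseteq L_{\min(t,-K)}$ would be an unbounded convex set with nonempty interior, hence of infinite measure, contradicting $\mathrm{vol}(L_s)\le e^{-s}$. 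The same ball gives the cleanest proof that $M=\sup\varphi<\infty$ (if $\varphi(x_n)\ge n$, then $\varphi\ge (n-K)/2$ on the fixed-volume ball $\tfrac12 x_n+\tfrac12 B$, so $\int e^\varphi=\infty$), which your ``nested compactness'' remark presupposes rather than establishes. With those two sentences added, the argument is sound; note also that upper semi-continuity is not actually needed for $\psi_v(t)\to M$ as $t\to 0^+$ --- concavity between $0$ and any point of finiteness already gives it.
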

Thus, in particular, random vectors with log-concave densities have moment generating functions that are finite in a neighbourhood of the origin.

One of the features of the class of log-concave densities that makes them so attractive for statistical inference is their stability under various operations.  A key result of this type is the following, due to \citet{Prekopa1973}, and with a simpler proof given in \citet{Prekopa1980}.
\begin{thm}
\label{Thm:Marg}
Let $d = d_1 + d_2$ for some $d_1,d_2 \in \mathbb{N}$, and let $f:\mathbb{R}^d \rightarrow [0,\infty)$ be log-concave.  Then 
\[
x \mapsto \int_{\mathbb{R}^{d_2}} f(x,y) \, dy
\]
is log-concave on $\mathbb{R}^{d_1}$.
\end{thm}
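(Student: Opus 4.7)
The natural plan is to derive this from the Prékopa--Leindler inequality, the functional analogue of Brunn--Minkowski. Recall the statement: if $u,v,w:\mathbb{R}^n \to [0,\infty)$ are measurable, $\lambda \in (0,1)$, and $w(\lambda y + (1-\lambda)y') \geq u(y)^\lambda v(y')^{1-\lambda}$ for all $y,y' \in \mathbb{R}^n$, then $\int w \geq \bigl(\int u\bigr)^\lambda \bigl(\int v\bigr)^{1-\lambda}$. Assuming this, the reduction to the theorem is immediate: fix $x_1,x_2 \in \mathbb{R}^{d_1}$ and $\lambda \in (0,1)$, and set $u(y):=f(x_1,y)$, $v(y):=f(x_2,y)$, $w(y):=f(\lambda x_1 + (1-\lambda)x_2, y)$. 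Log-concavity of $f$ on $\mathbb{R}^d$ gives the pointwise hypothesis $w(\lambda y + (1-\lambda)y') \geq u(y)^\lambda v(y')^{1-\lambda}$, and Prékopa--Leindler applied in $\mathbb{R}^{d_2}$ then yields exactly the required log-concavity inequality for $g(x):=\int f(x,y)\,dy$.

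The substantive work therefore lies in establishing Prékopa--Leindler. I would proceed by induction on $n$. For $n=1$, first consider the indicator case $u=\mathbbm{1}_A$, $v=\mathbbm{1}_B$, $w=\mathbbm{1}_C$ with $C \supseteq \lambda A + (1-\lambda)B$; then Prékopa--Leindler reduces to the one-dimensional Brunn--Minkowski inequality $|\lambda A + (1-\lambda)B| \geq \lambda |A| + (1-\lambda)|B|$, which in turn follows from the easy bound $|A+B| \geq |A|+|B|$ obtained by translating $A$ and $B$ so that $\sup A = \inf B = 0$, forcing $A \cup B \subseteq A+B$ with overlap only at $\{0\}$. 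The general (non-indicator) one-dimensional case is then recovered by the layer-cake identity $\int u = \int_0^\infty |\{u \geq t\}|\,dt$ together with the observation $\{w \geq t\} \supseteq \lambda \{u \geq s\} + (1-\lambda)\{v \geq r\}$ whenever $s^\lambda r^{1-\lambda} \geq t$, combined with the arithmetic-geometric mean inequality.

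For the inductive step from $n-1$ to $n$, I would slice. Writing points in $\mathbb{R}^n$ as $(s,z)$ with $s \in \mathbb{R}$ and $z \in \mathbb{R}^{n-1}$, define $U_s(z):=u(s,z)$ and similarly $V_{s'}$, $W_{s''}$. For each $s,s'$ and $s'' = \lambda s + (1-\lambda)s'$, the pointwise hypothesis on $u,v,w$ restricted to these slices satisfies the $(n-1)$-dimensional Prékopa--Leindler hypothesis, so the inductive assumption gives $\int W_{s''} \geq \bigl(\int U_s\bigr)^\lambda \bigl(\int V_{s'}\bigr)^{1-\lambda}$. Viewing $s \mapsto \int U_s$ etc.\ as functions of a single variable, this is exactly the one-dimensional Prékopa--Leindler hypothesis, and applying the base case completes the induction.

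The main obstacle is really the one-dimensional base case, since the inductive machinery is a routine Fubini argument once the right slicing is chosen; the subtle point is the passage from sets to general nonnegative functions via level sets, and the need to handle the case where one of $\int u$, $\int v$ is zero or infinite (addressed by standard truncation and limiting arguments). A purely measure-theoretic alternative, due to Prékopa (1980), avoids Brunn--Minkowski and proceeds directly by a clever rearrangement argument on $\mathbb{R}$, which would be my fall-back if the geometric route felt cumbersome to present.
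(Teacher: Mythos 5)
The paper does not prove this result: it states it as Pr\'ekopa's theorem and defers to \citet{Prekopa1973}, noting that a simpler proof appears in \citet{Prekopa1980}. So there is no internal proof to compare against, and your argument has to be judged on its own terms. On those terms it is correct and is the standard modern route: the reduction of the marginalisation statement to the Pr\'ekopa--Leindler inequality via the slices $u(y)=f(x_1,y)$, $v(y)=f(x_2,y)$, $w(y)=f(\lambda x_1+(1-\lambda)x_2,y)$ is exactly right, and the induction-on-dimension proof of Pr\'ekopa--Leindler with a one-dimensional base case is the classical one. Two details in the base case deserve more care than your sketch gives them. First, in the level-set step you need a normalisation (e.g.\ $\sup u=\sup v=1$ after rescaling, using the homogeneity of the inequality) so that you can take $s=r=t$ and have both level sets nonempty for $t\in(0,1)$; without nonemptiness the inclusion $\{w\ge t\}\supseteq \lambda\{u\ge t\}+(1-\lambda)\{v\ge t\}$ is vacuous and one-dimensional Brunn--Minkowski gives nothing. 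Second, $|A+B|\ge|A|+|B|$ as you state it needs $\sup A\in A$ and $\inf B\in B$ (otherwise $A\cup B\not\subseteq A+B$), and the Minkowski sum of measurable sets need not be measurable; the standard fix is to approximate from inside by compact sets and use inner regularity. Neither point is a gap in the approach --- they are exactly the ``standard truncation and limiting arguments'' you allude to --- but they are where the one-dimensional case actually bites. Your fall-back reference to Pr\'ekopa's 1980 argument is consistent with the citation the paper itself gives.
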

Hence, marginal densities of log-concave random vectors are log-concave.  As a simple consequence, we have
\begin{corollary}
\label{Cor:Conv}
If $f,g$ are log-concave densities on $\mathbb{R}^d$, then their convolution $f \ast g$ is a log-concave density on $\mathbb{R}^d$.  
\end{corollary}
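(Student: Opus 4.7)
The plan is to reduce the statement directly to Theorem \ref{Thm:Marg} by exhibiting the convolution as a marginal of a log-concave density on $\mathbb{R}^{2d}$.

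First I would define $h : \mathbb{R}^{2d} \to [0,\infty)$ by $h(x,y) := f(x-y)\,g(y)$, and observe that
\[
\log h(x,y) = \log f(x-y) + \log g(y),
\]
with the convention $\log 0 = -\infty$. The second summand is concave in $(x,y)$ since it depends only on $y$ and $\log g$ is concave. The first summand is the composition of the concave function $\log f$ with the linear map $(x,y) \mapsto x - y$, hence is also concave on $\mathbb{R}^{2d}$. Consequently $h$ is log-concave on $\mathbb{R}^{2d}$ (upper semi-continuity of $h$ is inherited from that of $f$ and $g$).

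Next, I would apply Theorem \ref{Thm:Marg} with $d_1 = d_2 = d$ to conclude that the marginal
\[
x \mapsto \int_{\mathbb{R}^d} h(x,y)\, dy = \int_{\mathbb{R}^d} f(x-y)\,g(y)\, dy = (f \ast g)(x)
\]
is log-concave on $\mathbb{R}^{d_1} = \mathbb{R}^d$. Finally, Fubini's theorem yields $\int_{\mathbb{R}^d}(f\ast g)(x)\, dx = \bigl(\int f\bigr)\bigl(\int g\bigr) = 1$, so $f \ast g$ is indeed a density.

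The only genuinely non-trivial ingredient is Prékopa's theorem itself, which is already assumed; the rest is a matter of verifying that the auxiliary function $h$ falls within its scope. The small subtlety worth double-checking is that composing a concave function with a linear (not merely affine) map, and then summing with another concave function of a projection, preserves concavity on the product space -- this is routine but is the only place where anything needs to be said.
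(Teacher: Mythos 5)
Your proof is correct and follows exactly the paper's route: the paper likewise defines $(x,y) \mapsto f(x-y)g(y)$, notes it is log-concave on $\mathbb{R}^{2d}$, and invokes Theorem~\ref{Thm:Marg}. Your additional verification of concavity of the sum and the normalisation via Fubini simply fills in details the paper leaves implicit.
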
 
\begin{proof}
The function $(x,y) \mapsto f(x-y)g(y)$ is log-concave on $\mathbb{R}^{2d}$, so the result follows from Theorem~\ref{Thm:Marg}.
\end{proof}
Two further straightforward stability properties are as follows:
\begin{prop}
\label{Prop:Lin}
Let $X$ have a log-concave density $f$ on $\mathbb{R}^d$.
\begin{enumerate}[(i)]
\item If $A \in \mathbb{R}^{m \times d}$ has $m \leq d$ and $\mathrm{rank}(A) = m$, then $AX$ has a log-concave density on $\mathbb{R}^m$.
\item If $X = (X_1^\top,X_2^\top)^\top$, then the conditional density of $X_1$ given $X_2=x_2$ is log-concave for each $x_2$.
\end{enumerate}
\end{prop}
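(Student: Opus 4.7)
Both parts reduce quickly to Theorem~\ref{Thm:Marg} together with the obvious fact that log-concavity is preserved under composition with affine maps.

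For part (i), the plan is to first handle the square invertible case and then extend $A$ to a $d \times d$ invertible matrix so the result becomes a marginalisation statement. Concretely, since $A$ has rank $m$, I can choose $B \in \mathbb{R}^{(d-m) \times d}$ such that $\tilde A := \begin{pmatrix} A \\ B \end{pmatrix}$ is invertible. Setting $Y := \tilde A X$, the change-of-variables formula gives the density of $Y$ as $g(y) = |\det \tilde A|^{-1} f(\tilde A^{-1} y)$, and since $\tilde A^{-1}$ is affine and $\log f$ is concave, $\log g$ is concave, so $g$ is log-concave on $\mathbb{R}^d$ (upper semi-continuity is inherited by the same argument). Now $AX$ consists of the first $m$ coordinates of $Y$, so its density is
\[
y_1 \mapsto \int_{\mathbb{R}^{d-m}} g(y_1, y_2) \, dy_2,
\]
which is log-concave on $\mathbb{R}^m$ by Theorem~\ref{Thm:Marg} (applied with $d_1 = m$, $d_2 = d - m$).

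For part (ii), fix $x_2$ in the support of the marginal density $f_{X_2}$ of $X_2$, which is itself log-concave (hence positive on an open convex set) by Theorem~\ref{Thm:Marg}. The conditional density equals
\[
f_{X_1 \mid X_2}(x_1 \mid x_2) = \frac{f(x_1, x_2)}{f_{X_2}(x_2)},
\]
which, as a function of $x_1$ alone, is a positive constant multiple of the slice $x_1 \mapsto f(x_1, x_2)$. The logarithm of this slice is the restriction of the concave function $\log f$ to the affine subspace $\mathbb{R}^{d_1} \times \{x_2\}$, and is therefore concave; upper semi-continuity is inherited from that of $f$. So the conditional density lies in $\mathcal{F}_{d_1}$.

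I do not anticipate any real obstacle: the only mildly delicate point is ensuring that the auxiliary matrix $B$ in (i) exists with $\tilde A$ invertible, which is immediate from the rank assumption (extend any basis of the row space of $A$ to a basis of $\mathbb{R}^d$), and checking in (ii) that $f_{X_2}(x_2) > 0$ so that the conditional density is well defined, which is exactly where the previously established log-concavity of $f_{X_2}$ is invoked.
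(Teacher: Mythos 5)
Your proof is correct and takes the approach the paper clearly intends: the proposition is stated there without proof as ``straightforward'', and your reduction of (i) to Theorem~\ref{Thm:Marg} via an invertible extension of $A$, together with the observation in (ii) that the restriction of the concave function $\log f$ to an affine slice is concave, mirrors exactly the paper's own one-line derivation of Corollary~\ref{Cor:Conv} from Theorem~\ref{Thm:Marg}. The only cosmetic remark is that when $m=d$ in (i) no marginalisation is needed (Theorem~\ref{Thm:Marg} as stated requires $d_2\in\mathbb{N}$), but that case is handled by the change of variables alone.
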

Together, Theorem~\ref{Thm:Marg}, Corollary~\ref{Cor:Conv} and Proposition~\ref{Prop:Lin} indicate that the class of log-concave densities is a natural infinite-dimensional generalisation of the class of Gaussian densities.  Indeed, one can argue that a grand vision in the shape-constrained inference community is to free practitioners from restrictive parametric (often Gaussian) assumptions, while retaining many of the properties of these parametric procedures that make them so convenient for use in applications.

\section{Log-concave projections}
\label{Sec:LCProjections}

Despite all of the nice properties of $\mathcal{F}_d$ described in the previous section, the class is not convex.  It is therefore by no means clear that there should exist a `closest' element of this set to a general distribution.  Nevertheless, it turns out that one can make sense of such a notion, and that the appropriate concept is that of log-concave projection.

Let $\Phi$ denote the class of upper semi-continuous, concave functions $\phi:\mathbb{R}^d \rightarrow [-\infty,\infty)$ that are coercive in the sense that $\phi(x) \rightarrow -\infty$ as $\|x\| \rightarrow \infty$.  Thus $\mathcal{F}_d = \bigl\{e^\phi:\phi \in \Phi, \int_{\mathbb{R}^d} e^\phi = 1\bigr\}$.  For $\phi \in \Phi$ and an arbitrary probability measure $P$ on $\mathbb{R}^d$, define a kind of log-likelihood functional by
\[
L(\phi,P) := \int_{\mathbb{R}^d} \phi \, dP - \int_{\mathbb{R}^d} e^\phi.
\]
Thus, instead of enforcing the (non-convex) constraint that $\phi$ should be a log-density explicitly, the functional above has the flavour of a Lagrangian, though the Lagrange multiplier is conspicuous by its absence!  Nevertheless it turns out that any maximiser $\phi^* \in \Phi$ of this functional with $L(\phi^*,P) \in \mathbb{R}$ must be a log-density.  To see this, note that if $\phi \in \Phi$ has $L(\phi,P) \in \mathbb{R}$ and $c \in \mathbb{R}$, then
\[
\frac{\partial}{\partial c} L(\phi+c,P) = 1 - e^c \int_{\mathbb{R}^d} e^\phi.
\]
Hence, at a maximum, $c = -\log\bigl(\int_{\mathbb{R}^d} e^\phi\bigr)$, which is equivalent to $\phi + c$ being a log-density. 

Theorem~\ref{Thm:ExistProj} below gives a complete characterisation of when there exists a unique maximiser of $L(\phi,P)$ over $\phi \in \Phi$.  We first require several further definitions: let $L^*(P) := \sup_{\phi \in \Phi} L(\phi,P)$ and let $\mathcal{P}_d$ denote the class of probability measures $P$ on $\mathbb{R}^d$ satisfying both $\int_{\mathbb{R}^d} \|x\| \, dP(x) < \infty$ and $P(H) < 1$ for all hyperplanes $H$.  Let $\mathcal{C}_d$ denote the class of closed, convex subsets of $\mathbb{R}^d$, for a probability measure $P$ on $\mathbb{R}^d$, let $\mathcal{C}_d(P) := \{C \in \mathcal{C}_d:P(C) = 1\}$, and let $\mathrm{csupp}(P) := \cap_{C \in \mathcal{C}_d(P)} C$ denote the convex support of $P$.  Finally, let $\mathrm{int}(C)$ denote the interior of a convex set $C$, and for a concave function $\phi:\mathbb{R}^d \rightarrow [-\infty,\infty)$, let $\mathrm{dom}(\phi) := \{x:\phi(x) > -\infty\}$ denote its effective domain.
\begin{thm}[\citet{DSS2011}] \ \\ \vspace{-0.6cm}
\label{Thm:ExistProj} 
\begin{enumerate}[(i)]
\item If $\int_{\mathbb{R}^d} \|x\| \, dP(x) = \infty$, then $L^*(P) = -\infty$.
\item If $\int_{\mathbb{R}^d} \|x\| \, dP(x) < \infty$ but $P(H) = 1$ for some hyperplane $H$, then $L^*(P) = \infty$.
\item If $P \in \mathcal{P}_d$, then $L^*(P) \in \mathbb{R}$ and there exists a unique $\phi^* \in \Phi$ that maximises $L(\phi,P)$ over $\phi \in \Phi$.  Moreover, $\mathrm{int}\bigl(\mathrm{csupp}(P)\bigr) \subseteq \mathrm{dom}(\phi^*) \subseteq \mathrm{csupp}(P)$.
\end{enumerate}
\end{thm}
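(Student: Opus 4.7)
Parts (i) and (ii) are handled by explicit constructions. For (i), if $\phi \in \Phi$ satisfies $\int e^\phi < \infty$, then applying the Cule--Samworth tail bound to $e^\phi$ (normalised if necessary) yields $\alpha > 0$ and $\beta \in \mathbb{R}$ with $\phi(x) \leq -\alpha\|x\| + \beta$, so that $\int \phi\,dP \leq -\alpha\int \|x\|\,dP + \beta = -\infty$; and if $\int e^\phi = \infty$, then $L(\phi,P) = -\infty$ trivially, since any $\phi \in \Phi$ is concave, upper semi-continuous and coercive and hence bounded above. For (ii), suppose $P$ is supported on $H = \{x : a^\top x = b\}$ with $\|a\| = 1$, and consider the test sequence
\[
\phi_n(x) = \tfrac{1}{4}\log n - n(a^\top x - b)^2 - \varepsilon\|x\|
\]
for small fixed $\varepsilon > 0$. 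Then $\int \phi_n\,dP = \tfrac{1}{4}\log n - \varepsilon\int\|x\|\,dP$, while rotating so that $a = e_1$ and applying Fubini gives $\int e^{\phi_n} = O(n^{-1/4})$, so $L(\phi_n,P) \to \infty$.

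For (iii), the differentiation displayed just before the theorem shows that I may restrict attention to log-densities $\phi$, on which $L(\phi,P) = \int\phi\,dP - 1$. The central obstacle is existence of a maximiser; uniqueness and the domain statement follow more routinely. To obtain $L^*(P) < \infty$, I would use $P \in \mathcal{P}_d$ to choose affinely independent $v_0,\dots,v_d \in \mathrm{int}(\mathrm{csupp}(P))$ and $\delta,p > 0$ with $P(B(v_j,\delta)) \geq p$ for each $j$. For any candidate log-density $\phi$ with $\int\phi\,dP > -\infty$ we have $P(\mathrm{dom}(\phi))=1$ and hence $v_j \in \mathrm{int}(\mathrm{dom}(\phi))$ by convexity of $\mathrm{dom}(\phi)$; concavity then yields $\phi \geq \sum_j \lambda_j(\cdot)\phi(v_j)$ on $\Delta := \mathrm{conv}(v_0,\dots,v_d)$, and the constraint $1 = \int e^\phi \geq \int_\Delta e^\phi$ converts this into a uniform upper bound on $\phi$ near each $v_j$, which combined with concavity bounds $\int \phi\,dP$ from above.

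For existence, I would sharpen the preceding argument along a maximising sequence $(\phi_n)$ of log-densities to obtain a uniform exponential tail bound $\phi_n(x) \leq -\alpha\|x\| + \beta$ with $\alpha,\beta$ independent of $n$; this is the main technical step. A Helly/Blaschke-type selection for concave functions then extracts a subsequence converging pointwise, on the interior of its eventual effective domain, to some $\phi^* \in \Phi$, and a Fatou argument (reverse Fatou for $\int\phi_n\,dP$, dominated above by the linear tail bound; ordinary Fatou for $\int e^{\phi_n}$) gives $L(\phi^*,P) \geq L^*(P)$. Uniqueness then follows from strict convexity of $\exp$: distinct maximisers $\phi_1^*, \phi_2^*$ would make $\tfrac{1}{2}(\phi_1^* + \phi_2^*) \in \Phi$ strictly beat $L^*(P)$, since $\int e^{(\phi_1^* + \phi_2^*)/2} < \tfrac{1}{2}\int(e^{\phi_1^*} + e^{\phi_2^*})$ unless $\phi_1^* = \phi_2^*$ almost everywhere, at which point upper semi-continuity gives pointwise equality.

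For the domain inclusions, $L(\phi^*,P) \in \mathbb{R}$ forces $P(\mathrm{dom}(\phi^*)) = 1$, so $\overline{\mathrm{dom}(\phi^*)}$ is a closed convex set containing $\mathrm{csupp}(P)$; the convex-analytic identity $\mathrm{int}(\overline{C}) = \mathrm{int}(C)$ for convex $C$ then gives $\mathrm{int}(\mathrm{csupp}(P)) \subseteq \mathrm{int}(\mathrm{dom}(\phi^*)) \subseteq \mathrm{dom}(\phi^*)$. For the reverse inclusion, set $K := \mathrm{csupp}(P)$ and compare $\phi^*$ with its truncation $\tilde\phi$ equal to $\phi^*$ on $K$ and to $-\infty$ on $K^c$: one checks $\tilde\phi \in \Phi$, $\int \tilde\phi\,dP = \int \phi^*\,dP$, and $\int e^{\tilde\phi} \leq \int e^{\phi^*}$, so $L(\tilde\phi, P) \geq L(\phi^*,P)$, whence uniqueness forces $\phi^* = -\infty$ off $K$. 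The principal obstacle throughout is the uniform tail bound on a maximising sequence, which underwrites the compactness needed for existence; the remaining steps are standard convex analysis once that is in hand.
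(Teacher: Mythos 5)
This survey states Theorem~\ref{Thm:ExistProj} as a cited result of \citet{DSS2011} and does not reproduce its proof, so there is no in-paper argument to compare against; measured against the proof in \citet{DSS2011}, your architecture is essentially the same one (explicit test functions for (i) and (ii), reduction to log-densities, a compactness/selection argument for existence, strict concavity of $\phi \mapsto -\int e^{\phi}$ for uniqueness, truncation to $\mathrm{csupp}(P)$ for the domain inclusions). Parts (i) and (ii) are correct as written, and your uniqueness and domain-inclusion arguments are sound. The problem is that the two steps you identify as central in (iii) are exactly the ones you do not actually carry out, and the sketch you give for the first of them does not work as stated.

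Concretely, for $L^*(P) < \infty$: first, points of $\mathrm{int}(\mathrm{csupp}(P))$ need not carry nearby mass (take $P$ uniform on the boundary of a disc), so you must choose the affinely independent $v_0,\ldots,v_d$ in $\mathrm{supp}(P)$. More seriously, the Jensen-type inequality $1 \geq \int_\Delta e^{\phi} \geq |\Delta|\,e^{\frac{1}{d+1}\sum_j \phi(v_j)}$ controls only the \emph{sum} $\sum_j \phi(v_j)$ from above, not each $\phi(v_j)$ individually; and even if you had individual upper bounds $\phi(v_j) \leq C$, concavity gives no upper bound on $\phi$ \emph{inside} the simplex, so $\int \phi\,dP$ is not yet bounded above --- a log-density with a tall, narrow spike at the barycentre is consistent with everything you have written down. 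The argument that actually closes this (and it is the same estimate that yields your ``main technical step'', the uniform tail bound $\phi_n(x) \leq -\alpha\|x\| + \beta$ along a maximising sequence) is the one in \citet{DSS2011}: from $P(H) < 1$ for all hyperplanes one extracts, by a compactness argument over unit directions, constants $\varepsilon, \delta > 0$ with $P(\{x : |u^\top x - c| \leq \delta\}) \leq 1 - \varepsilon$ uniformly in $(u,c)$; then for a log-density $\phi$ with $\sup \phi = M$ one examines the convex level set $\{\phi \geq M - t\}$ --- if it is not contained in a slab of width $2\delta$ its volume is bounded below, forcing $M \leq t + \log(c_d^{-1}\delta^{-d})$ via $\int e^{\phi} \leq 1$, while if it is contained in such a slab then a $P$-fraction $\varepsilon$ of the mass sees $\phi \leq M - t$, giving $\int \phi\,dP \leq M - \varepsilon t$. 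This dichotomy is what simultaneously bounds $L^*(P)$, controls $\sup \phi_n$ along a maximising sequence, and (combined with the first-moment condition) yields the uniform exponential envelope needed for your Helly selection and Fatou steps. Without it the existence proof is not complete. One further point worth flagging in the selection step: pointwise convergence on the interior of the limiting domain is not enough for the reverse Fatou inequality if $P$ charges the boundary of that domain, so $\phi^*$ should be taken as the upper semi-continuous (closed concave) hull of the limit to guarantee $\limsup_n \phi_n \leq \phi^*$ everywhere.
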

A consequence of Theorem~\ref{Thm:ExistProj} and the preceding discussion is that there exists a well-defined map $\psi^*:\mathcal{P}_d \rightarrow \mathcal{F}_d$, given by
\[
\psi^*(P) := \argmax_{f \in \mathcal{F}_d} \int_{\mathbb{R}^d} \log f \, dP.
\]
We refer to $\psi^*$ as the \emph{log-concave projection}.  In the case where $P$ is the empirical distribution of some data, this tells us that provided the convex hull of the data is $d$-dimensional, there exists a unique log-concave maximum likelihood estimator (MLE), a result first proved in \citet{Walther2002} in the case $d=1$, and \citet{CSS2010} for general $d$.  If $P$ has a log-concave density $f_0$, then $\psi^*(P) = f_0$; more generally, if $P$ has a density $f_0$ satisfying $\int_{\mathbb{R}^d} f_0 |\log f_0| < \infty$, then $\psi^*(P)$ minimises the Kullback--Leibler divergence $d_{\mathrm{KL}}^2(f_0,f) := \int_{\mathbb{R}^d} f_0 \log (f_0/f)$ over all $f \in \mathcal{F}_d$.  These statements justify the use of the term `projection'.  

\section{Computation of log-concave maximum likelihood estimators}
\label{Sec:Algorithm}

Let $X_1,\ldots,X_n \stackrel{\mathrm{iid}}{\sim} P \in \mathcal{P}_d$, and let $\mathbb{P}_n$ denote their empirical distribution.  In this section, we discuss the computation of the log-concave MLE $\hat{f}_n := \psi^*(\mathbb{P}_n)$ when the convex hull $C_n$ of $X_1,\ldots,X_n$ is $d$-dimensional.  

We initially focus on the case $d=1$, and follow the Active Set approach of \citet{DHR2007}, which is implemented in the \texttt{R} package \texttt{logcondens} \citep{DumbgenRufibach2011}.  Write $X_{(1)} \leq \ldots \leq X_{(n)}$ for the order statistics of the sample, and let $\Psi$ denote the set of functions $\psi:\mathbb{R} \rightarrow [-\infty,\infty)$ that are continuous on $[X_{(1)},X_{(n)}]$, linear on each $[X_{(k)},X_{(k+1)}]$ and $-\infty$ on $\mathbb{R} \setminus [X_{(1)},X_{(n)}]$.  Let $\Psi_{\mathrm{conc}}$ denote the concave functions in $\Psi$.  Then $\log \hat{f}_n \in \Psi_{\mathrm{conc}}$, because otherwise we could strictly increase $L(\cdot,\mathbb{P}_n)$ by replacing $\log \hat{f}_n$ with the $\psi \in \Psi_{\mathrm{conc}}$ with $\psi(X_i) = \log \hat{f}_n(X_i)$.  Since any $\psi \in \Psi$ can be identified with the vector $\underline{\psi} := \bigl(\psi(X_{(1)}),\ldots,\psi(X_{(n)})\bigr)^\top \in \mathbb{R}^n$, our objective function can be written as
\[
\tilde{L}(\underline{\psi}) = \tilde{L}(\psi_1,\ldots,\psi_n) := \frac{1}{n}\sum_{i=1}^n \psi_i - \sum_{k=1}^{n-1} \delta_k J(\psi_k,\psi_{k+1}),
\]
where $\delta_k := X_{(k+1)} - X_{(k)}$ (assumed positive for simplicity) and 
\[
J(r,s) := \int_0^1 e^{(1-t)r + ts} \, dt.
\]

For $j=2,\ldots,n-1$, let $v_j = (v_{j,1},\ldots,v_{j,n})^\top \in \mathbb{R}^n$ have three non-zero components:
\[
v_{j,j-1} := \frac{1}{\delta_{j-1}}, \quad v_{j,j} := -\frac{1}{\delta_j} - \frac{1}{\delta_{j-1}}, \quad v_{j,j+1} := \frac{1}{\delta_j}.
\]
Then our optimisation problem can be expressed as:
\[
\text{Maximise } \tilde{L}(\underline{\psi}) \quad \text{over } \underline{\psi} \in \mathcal{K} := \bigl\{\underline{\psi} \in \mathbb{R}^n:v_j^\top \underline{\psi} \leq 0 \text{ for } j=2,\ldots,n-1\bigr\}.
\]
For any $\underline{\psi} \in \mathbb{R}^n$, we can define the set of `active' constraints $A(\underline{\psi}) := \bigl\{j \in \{2,\ldots,n-1\}:v_j^\top \underline{\psi} \geq 0\bigr\}$, so that for $\underline{\psi} \in \mathcal{K}$, the inactive constraints correspond to the `knots' of $\underline{\psi}$, where $\underline{\psi}$ changes slope.  Since $\tilde{L}$ is strictly concave and infinitely differentiable, for any $A \subseteq \{2,\ldots,n-1\}$ and corresponding subspace $\mathcal{V}(A) := \bigl\{\underline{\psi} \in \mathbb{R}^n:v_j^\top\underline{\psi} = 0 \text{ for } j \in A\bigr\}$, it is straightforward to compute
\[
\tilde{\psi}(A) \in \mathcal{V}_*(A) := \argmax_{\underline{\psi} \in \mathcal{V}(A)} \tilde{L}(\underline{\psi}).
\] 
using Newton methods.  The basic idea of the Active Set approach is to start at a feasible point with a given active set of variables $A$.  We then optimise the objective under that set of active constraints, and move there if that new candidate point is feasible.  If not, we move as far as we can along the line segment joining our current feasible point to the candidate point while remaining feasible.  This new point has a strictly larger active set than our previous iterate, so we can optimise the objective under this new set of active constraints, and repeat.  More precisely, define a basis for $\mathbb{R}^n$ by $b_1 := (1)_{i=1}^n$, $b_j := \min(X_{(i)} - X_{(j)},0)_{i=1}^n$ for $j=2,\ldots,n-1$ and $b_n := (X_{(i)})_{i=1}^n$.  By considering the first-order stationarity conditions, it can be shown that any $\underline{\psi} \in \mathcal{V}_*(A)$ maximises $\tilde{L}$ over $\mathcal{K}$ if and only if $b_j^\top \nabla \tilde{L}(\underline{\psi}) \leq 0$ for all $j \in A$.  The Active Set algorithm can therefore proceed as in Algorithm~\ref{Alg:ActiveSet}.
\begin{algorithm}[htbp!]
\label{Alg:ActiveSet}
\SetAlgoLined
\IncMargin{1em}
\DontPrintSemicolon
\SetKwRepeat{Do}{while}{end while}
\KwIn{
$A \leftarrow \{2,\ldots,n-1\}$
}
{$\underline{\psi} \leftarrow \tilde{\psi}(A)$}

\While{$\max_{j \in A} b_j^\top\nabla\tilde{L}(\underline{\psi}) > 0$}{
$j^* \leftarrow \min\bigl(\argmax_{j \in A} b_j^\top \nabla \tilde{L}(\underline{\psi})\bigr)$ \\
$\underline{\psi}_{\mathrm{cand}} \leftarrow \tilde{\psi}(A \setminus \{j^*\})$ \\

\While{$\underline{\psi}_{\mathrm{cand}} \notin \mathcal{K}$}{$t^* \leftarrow \max\bigl\{t \in [0,1]:(1-t)\underline{\psi} + t \underline{\psi}_{\mathrm{cand}} \in \mathcal{K}\bigr\}$ \\
$\underline{\psi} \leftarrow (1-t^*)\underline{\psi} + t^* \underline{\psi}_{\mathrm{cand}}$ \\
$A \leftarrow A(\underline{\psi})$ \\
$\underline{\psi}_{\mathrm{cand}} \leftarrow \tilde{\psi}(A)$}

$\underline{\psi} \leftarrow \underline{\psi}_{\mathrm{cand}}$ \\
$A \leftarrow A(\underline{\psi})$}
\vskip 0.5ex
\KwOut{$\underline{\psi}$}
\vskip 1ex
\caption{Pseudo-code for an Active Set algorithm to compute $\bigl(\log \hat{f}_n(X_{(1)}),\ldots,\log \hat{f}_n(X_{(n)})\bigr)^\top$.}
\label{Algo:ADMM}
\end{algorithm}

The main points to note in this algorithm are that in each iteration of the inner \textbf{while} loop, the active set increases strictly (which ensures this loop terminates eventually), and that after each iteration of the outer \textbf{while} loop, the log-likelihood has strictly increased, and the current iterate $\underline{\psi}$ belongs to $\mathcal{K} \cap \mathcal{V}_*(A)$ for some $A \subseteq \{2,\ldots,n-1\}$.  It follows that, up to machine precision, the algorithm terminates with the exact solution in finitely many steps.  See Figure~\ref{Fig:1D}.  
\begin{figure}
\centering
\includegraphics[width=0.48\textwidth]{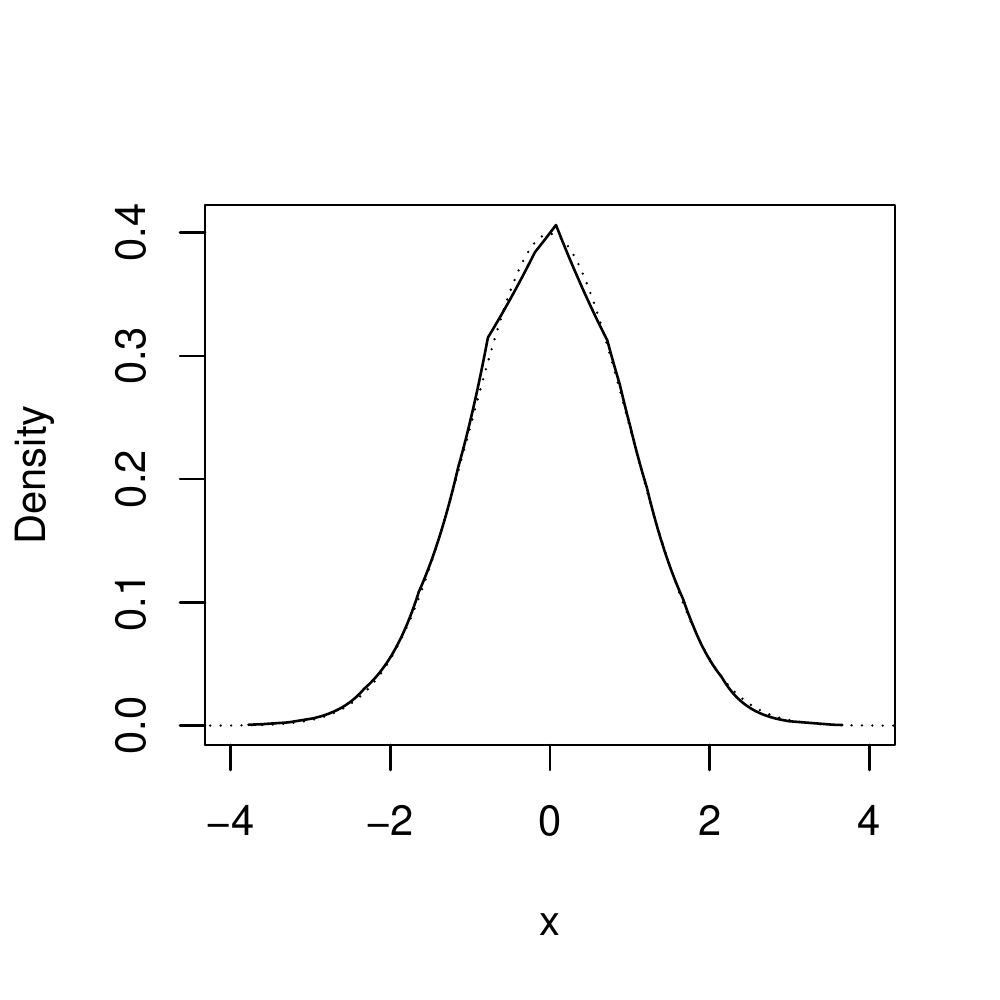} \hspace{0.2cm}
\includegraphics[width=0.48\textwidth]{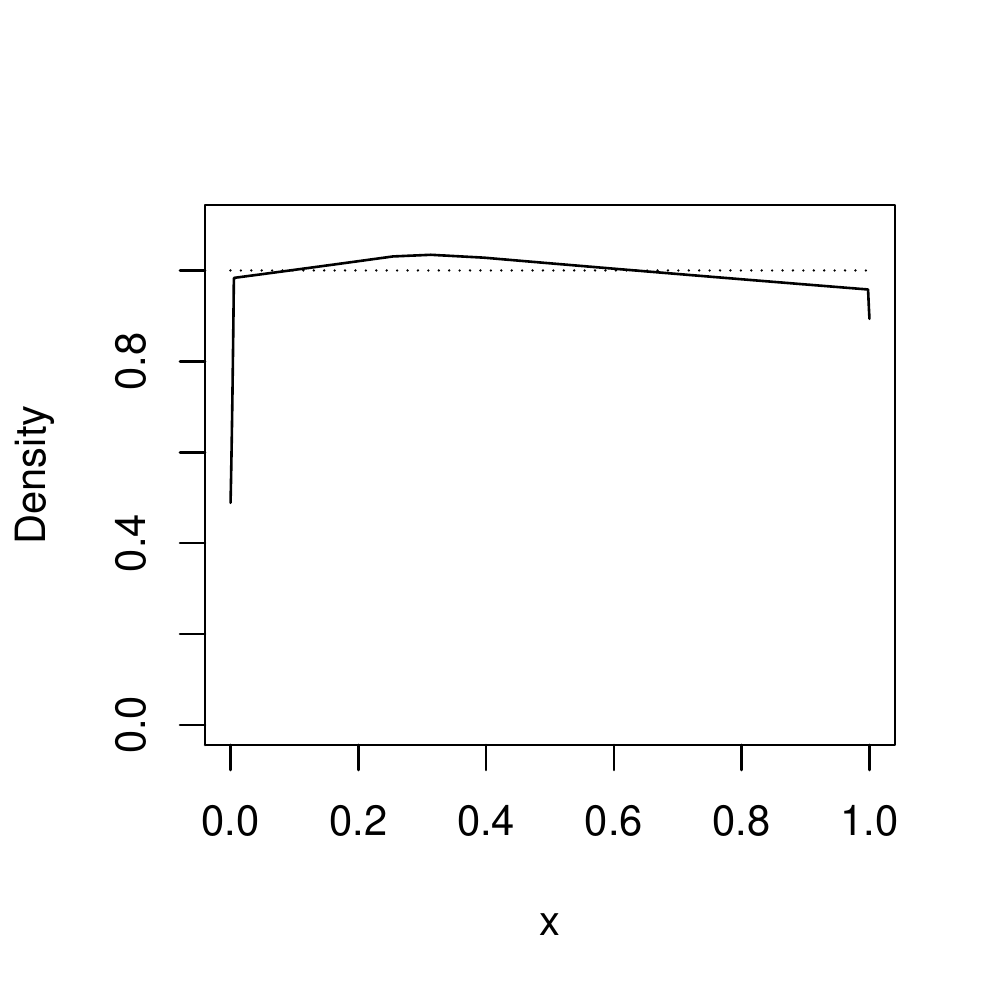}
\caption{\label{Fig:1D}Log-concave maximum likelihood estimators (solid) based on 4000 observations from a standard normal distribution (left) and the $U[0,1]$ distribution (right).  The true densities are shown as dotted lines.}
\end{figure}

For $d \geq 2$, the feasible set is much more complicated, and only slower algorithms are available.  For $y = (y_1,\ldots,y_n) \in \mathbb{R}^n$, let $\bar{h}_y:\mathbb{R}^d \rightarrow \mathbb{R}$ denote the smallest concave function with $\bar{h}_y(X_i) \geq y_i$ for $i=1,\ldots,n$; these are called \emph{tent functions} in \citet{CSS2010} (see Figure~\ref{Fig:Schematic}, which is taken from that paper).
\begin{figure}
\centering
\includegraphics[width=0.6\textwidth]{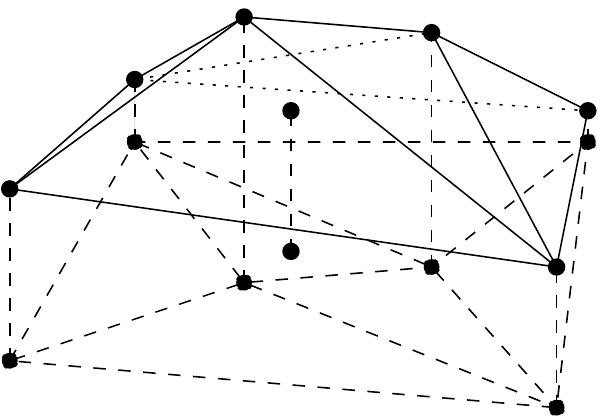}
\caption{\label{Fig:Schematic}A schematic picture of a tent function in the case $d=2$.}
\end{figure}
We can write the objective function in terms of the tent pole heights $y_1,\ldots,y_n$ as 
\[
\tau(y_1,\ldots,y_n) := \frac{1}{n}\sum_{i=1}^n \bar{h}_y(X_i) - \int_{C_n} \exp\{\bar{h}_y(x)\} \, dx.
\]
This function is hard to optimise over $(y_1,\ldots,y_n)^\top \in \mathbb{R}^n$, partly because $\tau$ is not injective.  However, \citet{CSS2010} defined the modified objective function
\[
\sigma(y_1,\ldots,y_n) := \frac{1}{n}\sum_{i=1}^n y_i - \int_{C_n} \exp\{\bar{h}_y(x)\} \, dx.
\]
Thus $\sigma \leq \tau$, but the crucial points are that $\sigma$ is concave and its unique maximum $\hat{y} \in \mathbb{R}^n$ satisfies $\log \hat{f}_n = \bar{h}_{\hat{y}}$.  Even though $\sigma$ is non-differentiable, a subgradient of $-\sigma$ can be computed at every point, so Shor's $r$-algorithm \citep{KappelKuntsevich2000} can be used, as implemented in the \texttt{R} package \texttt{LogConcDEAD} \citep{CGS2009}.  See Figure~\ref{Fig:2D}, which is taken from \citet{CSS2010}.  \citet{KoenkerMizera2010} study an alternative approximate approach based on imposing concavity of the discrete Hessian matrix of the log-density on a grid, and using a Riemann approximation to the integrability constraint.

\begin{figure}
\centering
\includegraphics[width=0.48\textwidth]{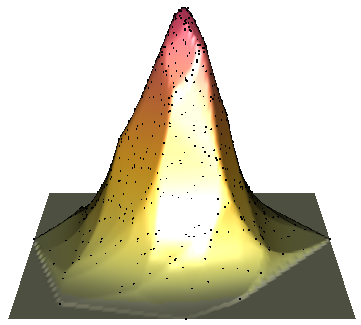} \hspace{0.2cm}
\includegraphics[width=0.48\textwidth]{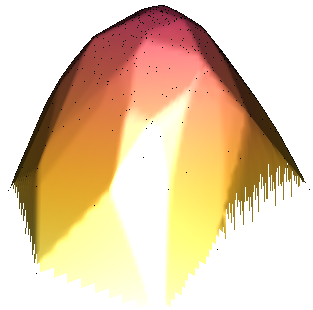}
\caption{\label{Fig:2D}The log-concave maximum likelihood estimator (left) and its logarithm (right) based on 1000 observations from a standard bivariate normal distribution.}
\end{figure}

\section{Properties of log-concave projections}
\label{Sec:Props}

For general distributions $P \in \mathcal{P}_d$, it is not possible to compute the log-concave projection $\psi^*(P)$ explicitly (though see Section~\ref{Sec:1D} below for several exceptions to this).  Nevertheless, one can say quite a lot about the properties of log-concave projections, starting with affine equivariance: 
\begin{lemma}[\citet{DSS2011}]
\label{Lemma:Affine}
Let $X \sim P \in \mathcal{P}_d$, let $A \in \mathbb{R}^{d \times d}$ be invertible, let $b \in \mathbb{R}^d$, and let $P_{A,b}$ denote the distribution of $AX + b$.  Then
\[
\psi^*(P_{A,b})(x) = \frac{1}{|\det A|} \psi^*(P)\bigl(A^{-1}(x-b)\bigr).
\]
\end{lemma}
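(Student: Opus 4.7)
The plan is to exploit the variational characterisation $\psi^*(P) = \argmax_{f \in \mathcal{F}_d} \int_{\mathbb{R}^d} \log f \, dP$ from Theorem~\ref{Thm:ExistProj}, together with the fact that the affine transformation $x \mapsto Ax + b$ induces a bijection of $\mathcal{F}_d$ onto itself that shifts the objective functional by a constant. First I would verify that $P_{A,b} \in \mathcal{P}_d$ so that $\psi^*(P_{A,b})$ is well-defined: $\int \|y\| \, dP_{A,b}(y) = \mathbb{E}\|AX+b\| \leq \|A\|_{\mathrm{op}} \, \mathbb{E}\|X\| + \|b\| < \infty$, and if some hyperplane $H$ satisfied $P_{A,b}(H) = 1$, then $A^{-1}(H - b)$ would be a hyperplane of full $P$-measure, contradicting $P \in \mathcal{P}_d$.

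Next I would introduce the map $T:\mathcal{F}_d \rightarrow \mathcal{F}_d$ defined by $(Tf)(x) := |\det A|^{-1} f\bigl(A^{-1}(x-b)\bigr)$. By the standard change of variables, $Tf$ is again a density; it is log-concave because $\log Tf$ is the sum of the constant $-\log|\det A|$ and the concave function $\log f$ composed with the affine map $x \mapsto A^{-1}(x-b)$. The inverse of $T$ is given by the analogous formula with $A$ and $b$ replaced by $A^{-1}$ and $-A^{-1}b$, so $T$ is a bijection of $\mathcal{F}_d$.

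The key identity is then a one-line change of variables: for any $f \in \mathcal{F}_d$,
\[
\int_{\mathbb{R}^d} \log (Tf)(y) \, dP_{A,b}(y) = \mathbb{E}\bigl[\log (Tf)(AX+b)\bigr] = \int_{\mathbb{R}^d} \log f \, dP - \log|\det A|.
\]
Since the additive constant $-\log|\det A|$ does not depend on $f$ and $T$ is a bijection of $\mathcal{F}_d$, a function $g \in \mathcal{F}_d$ maximises $f \mapsto \int \log f \, dP_{A,b}$ if and only if $T^{-1}g$ maximises $f \mapsto \int \log f \, dP$. By the uniqueness clause of Theorem~\ref{Thm:ExistProj}(iii) applied to both $P$ and $P_{A,b}$, this forces $\psi^*(P_{A,b}) = T\bigl(\psi^*(P)\bigr)$, which is precisely the claimed formula.

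There is no real obstacle here; the only points requiring care are the verification that $P_{A,b} \in \mathcal{P}_d$ (so that the projection on the right-hand side actually exists and is unique) and making sure the Jacobian factor $|\det A|^{-1}$ lands in the correct place so that $Tf$ integrates to one. Both are routine once the bijection $T$ is set up correctly.
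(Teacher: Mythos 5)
Your argument is correct and complete: the verification that $P_{A,b} \in \mathcal{P}_d$, the bijection $T$ of $\mathcal{F}_d$ induced by the affine map, the constant shift $-\log|\det A|$ in the objective, and the appeal to uniqueness in Theorem~\ref{Thm:ExistProj}(iii) together give exactly the claimed identity. The paper states this lemma without proof, citing \citet{DSS2011}, and your argument is essentially the standard one given there, so there is nothing further to add.
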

A generic hope for the log-concave projection is that it should preserve as many properties of the original distribution as possible.  Indeed, as we will see, such preservation results have motivated several associated methodological developments.
\begin{lemma}[\citet{DSS2011}]
\label{Lemma:Basic}
Let $P \in \mathcal{P}_d$, let $\phi^* := \log \psi^*(P)$, and let $P^*(B) := \int_B e^{\phi^*}$ for any Borel set $B \subseteq \mathbb{R}^d$.  If $\Delta:\mathbb{R}^d \rightarrow [-\infty,\infty)$ is such that $\psi^* + t\Delta \in \Phi$ for sufficiently small $t > 0$, then 
\[
\int_{\mathbb{R}^d} \Delta \, dP \leq \int_{\mathbb{R}^d} \Delta \, dP^*.
\]
\end{lemma}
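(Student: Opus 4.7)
The plan is to exploit the optimality of $\phi^* = \log \psi^*(P)$ for the functional $L(\cdot,P)$ and pass to the first-order condition by letting the perturbation parameter $t \downarrow 0$. Specifically, since $\phi^* + t\Delta \in \Phi$ for all sufficiently small $t > 0$, and since $\phi^*$ maximises $L(\cdot, P)$ over $\Phi$ with $L(\phi^*,P) \in \mathbb{R}$ by Theorem~\ref{Thm:ExistProj}(iii), we have
\[
L(\phi^* + t\Delta, P) \leq L(\phi^*, P).
\]
Writing this out and rearranging, this is equivalent to the difference-quotient inequality
\[
\int_{\mathbb{R}^d} \Delta \, dP \leq \frac{1}{t}\int_{\mathbb{R}^d} (e^{\phi^* + t\Delta} - e^{\phi^*})
\]
for all sufficiently small $t > 0$. (If the left-hand integral equals $-\infty$, the conclusion is trivial modulo checking the analogous integrability of $\Delta e^{\phi^*}$, so I focus on the case where it is finite.) The goal is then to identify the limit of the right-hand side as $t \downarrow 0$ with $\int \Delta \, e^{\phi^*} = \int \Delta \, dP^*$.

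The pointwise key observation is that, for each fixed $x$, the map $t \mapsto e^{\phi^*(x) + t\Delta(x)}$ is convex (it equals $e^{\phi^*(x)}e^{t\Delta(x)}$). Hence the integrand
\[
h_t(x) := \frac{e^{\phi^*(x) + t\Delta(x)} - e^{\phi^*(x)}}{t}
\]
is non-decreasing in $t > 0$, and $h_t(x) \downarrow \Delta(x)e^{\phi^*(x)}$ as $t \downarrow 0$ (with the usual conventions $e^{-\infty} = 0$ and $-\infty \cdot 0 = 0$ handling the points where $\phi^*(x) = -\infty$ or $\Delta(x) = -\infty$).

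To justify passage under the integral I would fix one value $t_0 > 0$ small enough that $\phi^* + t_0\Delta \in \Phi$, and use $h_{t_0}$ as an integrable majorant of $h_t$ for $0 < t \leq t_0$. Integrability of $h_{t_0}$ is immediate: its positive part is dominated by $t_0^{-1} e^{\phi^* + t_0 \Delta}$, which integrates finitely because $\phi^* + t_0\Delta \in \Phi$ (and so $\int e^{\phi^* + t_0\Delta} < \infty$ by the exponential tail bound of Lemma~2 applied to this function renormalised), while its negative part is dominated by $t_0^{-1} e^{\phi^*}$, which integrates to $t_0^{-1}$. Reverse Fatou's lemma then gives
\[
\limsup_{t \downarrow 0} \int_{\mathbb{R}^d} h_t \leq \int_{\mathbb{R}^d} \Delta \, e^{\phi^*} = \int_{\mathbb{R}^d} \Delta \, dP^*.
\]
Combining this with the optimality inequality $\int \Delta \, dP \leq \int h_t$, valid for all small $t > 0$, yields the claim.

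The main obstacle is the limit interchange, because $\Delta$ is allowed to take the value $-\infty$ and $\Delta e^{\phi^*}$ need not be integrable in the classical sense. The convexity-based monotonicity of the difference quotient is what makes this tractable: once $h_{t_0}$ is recognised as an integrable upper envelope (using $\phi^* + t_0 \Delta \in \Phi$ on one side and $\int e^{\phi^*} = 1$ on the other), reverse Fatou gives exactly what is needed, even when $\int \Delta \, dP^* = -\infty$. The rest of the argument is purely mechanical manipulation of the functional $L$.
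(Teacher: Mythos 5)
Your argument is correct, and it is essentially the proof in \citet{DSS2011} (the survey itself states the lemma without proof, deferring to that reference): one perturbs the maximiser, writes the resulting inequality as a difference quotient, and uses the monotonicity in $t$ of the difference quotient of the convex map $t \mapsto e^{\phi^*+t\Delta}$ to pass to the limit under the integral. The only point you gloss over is that $\int_{\mathbb{R}^d} \Delta \, dP$ is well defined in $[-\infty,\infty)$ to begin with; this follows because $\Delta \leq t_0^{-1}(\psi - \phi^*)$ with $\psi := \phi^* + t_0\Delta \in \Phi$ bounded above and $\phi^*$ finite and integrable $P$-almost everywhere, so the positive part of $\Delta$ is $P$-integrable --- worth a sentence, but not a gap in the substance of the argument.
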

As a special case of Lemma~\ref{Lemma:Basic}, we obtain 
\begin{corollary}
\label{Cor:Moment}
Let $P \in \mathcal{P}_d$.  Then $P$ and the log-concave projection measure $P^*$ from Lemma~\ref{Lemma:Basic} are convex ordered in the sense that 
\[
\int_{\mathbb{R}^d} h \, dP^* \leq \int_{\mathbb{R}^d} h \, dP
\]
for all convex $h:\mathbb{R}^d \rightarrow (-\infty,\infty]$.  
\end{corollary}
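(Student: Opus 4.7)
The strategy is to derive the corollary as a direct specialisation of Lemma~\ref{Lemma:Basic} with the test function $\Delta := -h$. If $h:\mathbb{R}^d \to (-\infty,\infty]$ is convex (and taken lower semi-continuous, which is the standard convention for proper convex functions, and may be ensured by passing to the l.s.c.\ envelope if necessary), then $\Delta$ is concave and upper semi-continuous with values in $[-\infty,\infty)$. Granted that $\phi^* + t\Delta \in \Phi$ for sufficiently small $t > 0$, Lemma~\ref{Lemma:Basic} gives
\[
\int_{\mathbb{R}^d} (-h) \, dP \leq \int_{\mathbb{R}^d} (-h) \, dP^*,
\]
which rearranges immediately to the stated inequality. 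The integrals are well-defined in $(-\infty,\infty]$ because $h$ admits an affine minorant, and both $P$ and $P^*$ have finite first moments (the latter by Lemma~2.2 applied to the log-concave density $e^{\phi^*}$).

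The one substantive verification is that $\phi^* + t\Delta = \phi^* - th$ actually lies in $\Phi$ for some $t>0$. Concavity of the sum is automatic, as is upper semi-continuity. For coercivity I would argue in two steps. First, by Lemma~2.2 (Cule--Samworth), there exist $\alpha > 0$ and $\beta \in \mathbb{R}$ such that $\phi^*(x) \leq -\alpha\|x\| + \beta$ for all $x \in \mathbb{R}^d$. Second, any proper convex $h$ on $\mathbb{R}^d$ is bounded below by an affine function: there exist $a \in \mathbb{R}$ and $v \in \mathbb{R}^d$ with $h(x) \geq a + \langle v,x\rangle$, which follows from a supporting hyperplane to the epigraph of $h$ at an interior point of its effective domain. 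Combining these bounds,
\[
\phi^*(x) - th(x) \leq -\alpha\|x\| + \beta - ta + t\|v\|\,\|x\|,
\]
which tends to $-\infty$ as $\|x\| \to \infty$ provided $t < \alpha/(1+\|v\|)$. (If $h \equiv +\infty$ the corollary is trivial, and if $h$ equals a finite constant then the coercivity bound is immediate.)

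The main obstacle is precisely this coercivity check: the reason the argument works is the exponential tail decay of log-concave densities, which ensures $\phi^*$ decays fast enough to absorb the at-most-linear growth of $-th$ forced by convexity of $h$. For a merely integrable convex $h$ paired with a non-log-concave reference density, the analogue of $\phi^* - th$ would not generally be coercive and Lemma~\ref{Lemma:Basic} could not be invoked. Once this admissibility of $\Delta = -h$ is established, the corollary follows formally, and as an aside one recovers that $P$ and $P^*$ share the same mean by applying the inequality to $h(x) = \pm\langle w,x\rangle$ for arbitrary $w \in \mathbb{R}^d$.
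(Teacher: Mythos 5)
Your proposal is correct and follows essentially the same route as the paper, which presents the corollary without further argument as ``a special case of Lemma~\ref{Lemma:Basic}'' obtained by taking $\Delta = -h$. The only additional content you supply is the explicit verification that $\phi^* - th \in \Phi$ for small $t>0$ (via the exponential envelope of Lemma~2.2 and the affine minorant of a proper convex function), which is exactly the admissibility check the paper leaves implicit, and your handling of the lower semi-continuity and degenerate cases is sound.
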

Applying Corollary~\ref{Cor:Moment} to $\Delta(x) = t^\top x$ for arbitrary $t \in \mathbb{R}^d$ allows us to conclude that $\int_{\mathbb{R}^d} x \, dP^*(x) = \int_{\mathbb{R}^d} x \, dP(x)$; in other words, log-concave projection preserves the mean $\mu$ of a distribution $P \in \mathcal{P}_d$.  On the other hand, we see that the projection shrinks the second moment, in the sense that $A := \int_{\mathbb{R}^d} (x-\mu)(x-\mu)^\top d(P-P^*)(x)$ is non-negative definite.  This property validates the definition of the \emph{smoothed log-concave projection}, proposed in the case $d=1$ by \citet{DumbgenRufibach2009} and studied for general $d$ in \citet{ChenSamworth2013}.  Writing $\tilde{\mathcal{P}}_d := \bigl\{P \in \mathcal{P}_d:\int_{\mathbb{R}^d} \|x\|^2 \, dP(x) < \infty\bigr\}$, this smoothed projection $\tilde{\psi}^*:\tilde{\mathcal{P}}_d \rightarrow \mathcal{F}_d$ is given by
\[
\tilde{\psi}^*(P) := \psi^*(P) \ast N_d(0,A) = \int_{\mathbb{R}^d} \psi^*(x-y) \, dN_d(0,A)(y).
\]
When $P$ is the empirical distribution of some data, $\tilde{\psi}^*(P)$ is a smooth (real analytic), fully automatic density estimator that is log-concave (cf.~Corollary~\ref{Cor:Conv}), matches the first two moments of the data and is supported on the whole of $\mathbb{R}^d$.  See Figure~\ref{Fig:Smoothed}.
\begin{figure}
\centering
\includegraphics[width=0.48\textwidth]{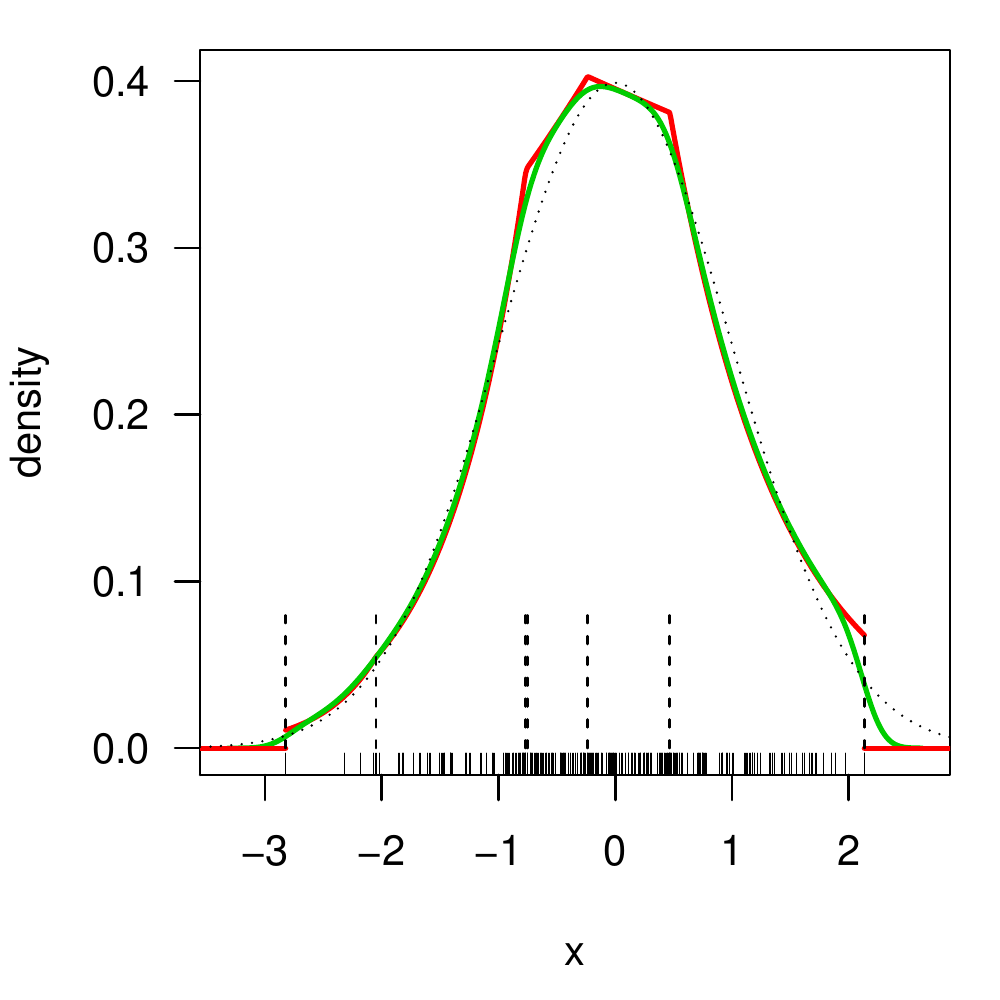} \hspace{0.2cm}
\includegraphics[width=0.48\textwidth]{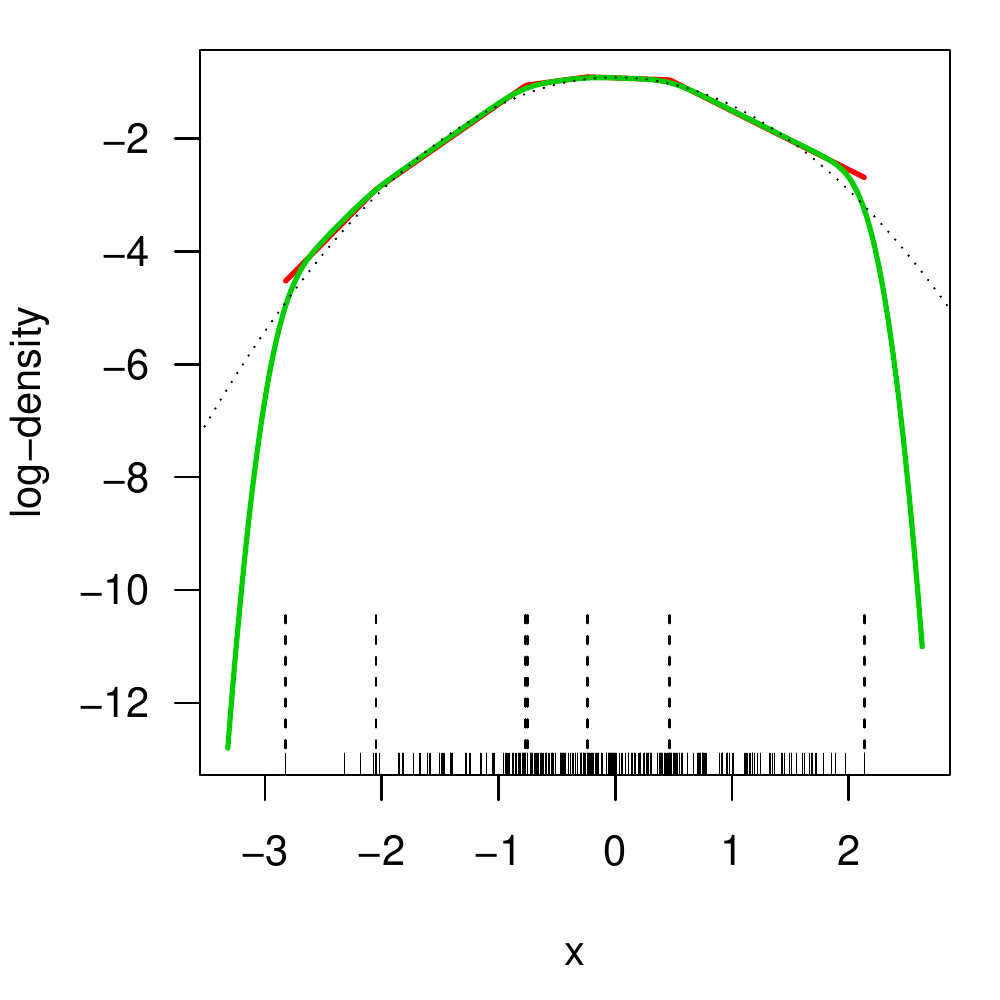}
\caption{\label{Fig:Smoothed}Left: A comparison of the original log-concave MLE (red) and smoothed log-concave MLE (green) based on 200 observations from a standard normal density (dotted).  The short vertical lines indicate the observations, and the longer, dashed vertical lines show the locations of the knots of the log-concave MLE. Right: The same comparison on the log scale.}
\end{figure}

Our next property concerns the preservation of product structure, or, in the language of random vectors, independence of components.
\begin{prop}[\citet{ChenSamworth2013}]
\label{Prop:Indep}
Let $P \in \mathcal{P}_d$ be of the form $P = P_1 \otimes P_2$ for some $P_1 \in \mathcal{P}_{d_1}$, $P_2 \in \mathcal{P}_{d_2}$ with $d_1 + d_2 = d$.  Then for $x = (x_1^\top,x_2^\top)^\top$, we have
\[
\psi^*(P)(x) = \psi^*(P_1)(x_1)\psi^*(P_2)(x_2).
\]
\end{prop}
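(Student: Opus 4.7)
The plan is to exploit the uniqueness assertion in Theorem~\ref{Thm:ExistProj} together with the variational characterisation $\psi^*(P) = \argmax_{f \in \mathcal{F}_d} \int \log f \, dP$. Write $f_i := \psi^*(P_i)$ for $i=1,2$ and set $g(x) := f_1(x_1) f_2(x_2)$. Since $\log g(x) = \log f_1(x_1) + \log f_2(x_2)$ is a sum of concave functions of disjoint coordinates, $g$ is log-concave, and Fubini gives $\int g = \int f_1 \cdot \int f_2 = 1$, so $g \in \mathcal{F}_d$. Moreover, since $P_1 \in \mathcal{P}_{d_1}$ and $P_2 \in \mathcal{P}_{d_2}$ rule out degenerate behaviour on hyperplanes, it is straightforward to check that $P = P_1 \otimes P_2 \in \mathcal{P}_d$. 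By uniqueness, it therefore suffices to prove that $g$ maximises $\int_{\mathbb{R}^d} \log f \, dP$ over $f \in \mathcal{F}_d$.

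For the main step, fix an arbitrary $f \in \mathcal{F}_d$ and define the marginal $\bar{f}_1(x_1) := \int_{\mathbb{R}^{d_2}} f(x_1, x_2) \, dx_2$. By Theorem~\ref{Thm:Marg}, $\bar{f}_1 \in \mathcal{F}_{d_1}$. For any $x_1$ with $\bar{f}_1(x_1) > 0$, the conditional density $x_2 \mapsto f(x_1, x_2)/\bar{f}_1(x_1)$ lies in $\mathcal{F}_{d_2}$ (it is a bona fide density, and log-concavity in $x_2$ is inherited from the joint concavity of $\log f$). Since $f_2 = \psi^*(P_2)$ is the unique maximiser of $\int \log(\cdot) \, dP_2$ over $\mathcal{F}_{d_2}$, we obtain
\[
\int_{\mathbb{R}^{d_2}} \log f(x_1, x_2) \, dP_2(x_2) \;-\; \log \bar{f}_1(x_1) \;\leq\; \int_{\mathbb{R}^{d_2}} \log f_2 \, dP_2.
\]

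Integrating this inequality over $x_1$ against $P_1$, using Fubini on the left and the fact that $\bar{f}_1 \in \mathcal{F}_{d_1}$ combined with $f_1 = \psi^*(P_1)$ on the right, yields
\[
\int_{\mathbb{R}^d} \log f \, dP \;\leq\; \int_{\mathbb{R}^{d_1}} \log \bar{f}_1 \, dP_1 + \int_{\mathbb{R}^{d_2}} \log f_2 \, dP_2 \;\leq\; \int_{\mathbb{R}^{d_1}} \log f_1 \, dP_1 + \int_{\mathbb{R}^{d_2}} \log f_2 \, dP_2 \;=\; \int_{\mathbb{R}^d} \log g \, dP.
\]
Hence $g$ attains the supremum over $\mathcal{F}_d$, so $g = \psi^*(P)$ by uniqueness.

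The only genuinely delicate point is handling $x_1$ with $\bar{f}_1(x_1) = 0$, where the conditional density is undefined; this is harmless because if $P_1$ charges such $x_1$ then $\int \log f \, dP = -\infty$ and the inequality is trivial, while the set of remaining boundary $x_1$ where $\bar{f}_1$ vanishes but $f(x_1, \cdot)$ does not vanish $P_2$-a.e.\ has $P_1$-measure zero (using that $\{\bar{f}_1 > 0\}$ is convex and $P_1 \in \mathcal{P}_{d_1}$ so that $P_1$ assigns no mass to the boundary of any convex set with non-empty interior, combined with the characterisation of $\mathrm{dom}(\log \bar{f}_1)$). Modulo these standard measure-theoretic checks, the argument above is the entire proof; the approach is essentially a two-stage application of the variational property of log-concave projections combined with Pr\'ekopa's marginalisation theorem.
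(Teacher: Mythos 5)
The survey states Proposition~\ref{Prop:Indep} without proof, attributing it to Chen and Samworth (2013), so there is no in-paper argument to compare against; judged on its own terms, your proof is correct. The tensorization step --- splitting $\int \log f\,dP$ into the $x_1$-marginal plus conditional pieces via Pr\'ekopa's theorem (Theorem~\ref{Thm:Marg}) and Proposition~\ref{Prop:Lin}(ii), applying the variational characterisation of $\psi^*(P_2)$ slice-by-slice and then of $\psi^*(P_1)$ to the marginal $\bar f_1$, and invoking uniqueness from Theorem~\ref{Thm:ExistProj} --- is exactly the natural route, and I believe essentially the original one. The delicate points you flag do resolve as claimed: whenever $\bar f_1(x_1)>0$ the slice $\{x_2:f(x_1,x_2)>0\}$ has positive Lebesgue measure and hence (being convex) non-empty interior, so the conditional density is genuinely a non-degenerate member of $\mathcal{F}_{d_2}$; when $\bar f_1(x_1)=0$ both sides of your slice inequality are $-\infty$; and Fubini is legitimate because $\log f$ is bounded above, so only the negative part needs Tonelli.
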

Proposition~\ref{Prop:Indep} inspires a new approach to Independent Component Analysis; see Section~\ref{Sec:ICA} below.  Incidentally, the converse of this result is false: for instance, for $q \in (0,1]$, consider a distribution $P$ supported on five points in $\mathbb{R}^2$, with
\begin{align*}
P\bigl(\{(0,0)\}\bigr) &= q, \\
P\bigl(\{(-1,-1)\}\bigr) &= P\bigl(\{(-1,1)\}\bigr) = P\bigl(\{(1,-1)\}\bigr) = P\bigl(\{(1,1)\}\bigr) = (1-q)/4.
\end{align*}
Then it can be shown that $\psi^*(P)$ is the uniform density on the square $[-1,1] \times [-1,1]$ for $q \in (0,1/3]$.

In a similar spirit, it is not necessarily the case that the log-concave projection of a marginal distribution is the corresponding marginal of a joint distribution.  For example, if $P$ is the discrete uniform distribution on the three points $\{(-1,-1), (0,3^{1/2}-1), (1,-1)\}$ in $\mathbb{R}^2$ (which form an equilateral triangle), then the log-concave projection is the continuous uniform density on the triangle, with corresponding marginal density $f_1(x_1) = (1-|x|)\mathbbm{1}_{\{|x| \leq 1\}}$ on the $x$-axis.  On the other hand, the log-concave projection of the discrete uniform distribution on $\{-1,0,1\}$ is the uniform density on $[-1,1]$.  

We conclude this section by mentioning a further property that is not preserved by log-concave projection, namely stochastic ordering.  More precisely, let $P$ and $Q$ be distributions on the real line with\footnote{I thank Min Xu and Yining Chen for helpful conversations leading to this example.} $P(\{0\}) = P(\{1\}) = 1/2$ and $Q(\{0\}) = 1/2$, $Q(\{1\}) = 2/5$, $Q(\{2\}) = 1/10$.  Then $P$ is stochastically smaller than $Q$, in the sense that the respective distribution functions $F$ and $G$ satisfy $F(x) \geq G(x)$ with strict inequality for some $x_0$.  Now $\psi^*(P)$ is the uniform density on $[0,1]$, while it can be shown using the ideas in Section~\ref{Sec:1D} below that $\psi^*(Q)(x) = e^{bx-\beta}$ for $x \in [0,2]$, where $b \in [-1.337,-1.336]$ is the unique real solution to 
\[
\frac{1}{b} - \frac{2}{e^{2b}-1} = \frac{7}{5},
\]
and where $\beta = \log\bigl(\frac{e^{2b}-1}{b}\bigr) \in [-0.3619,-0.3612]$.  In particular, $\psi^*(Q)(0) = e^{-\beta} \geq 1.4 > 1 = \psi^*(P)(0)$, so $\psi^*(P)$ is not stochastically smaller than $\psi^*(Q)$; see Figure~\ref{Fig:StochDom}.
\begin{figure}
\centering
\includegraphics[width=0.7\textwidth]{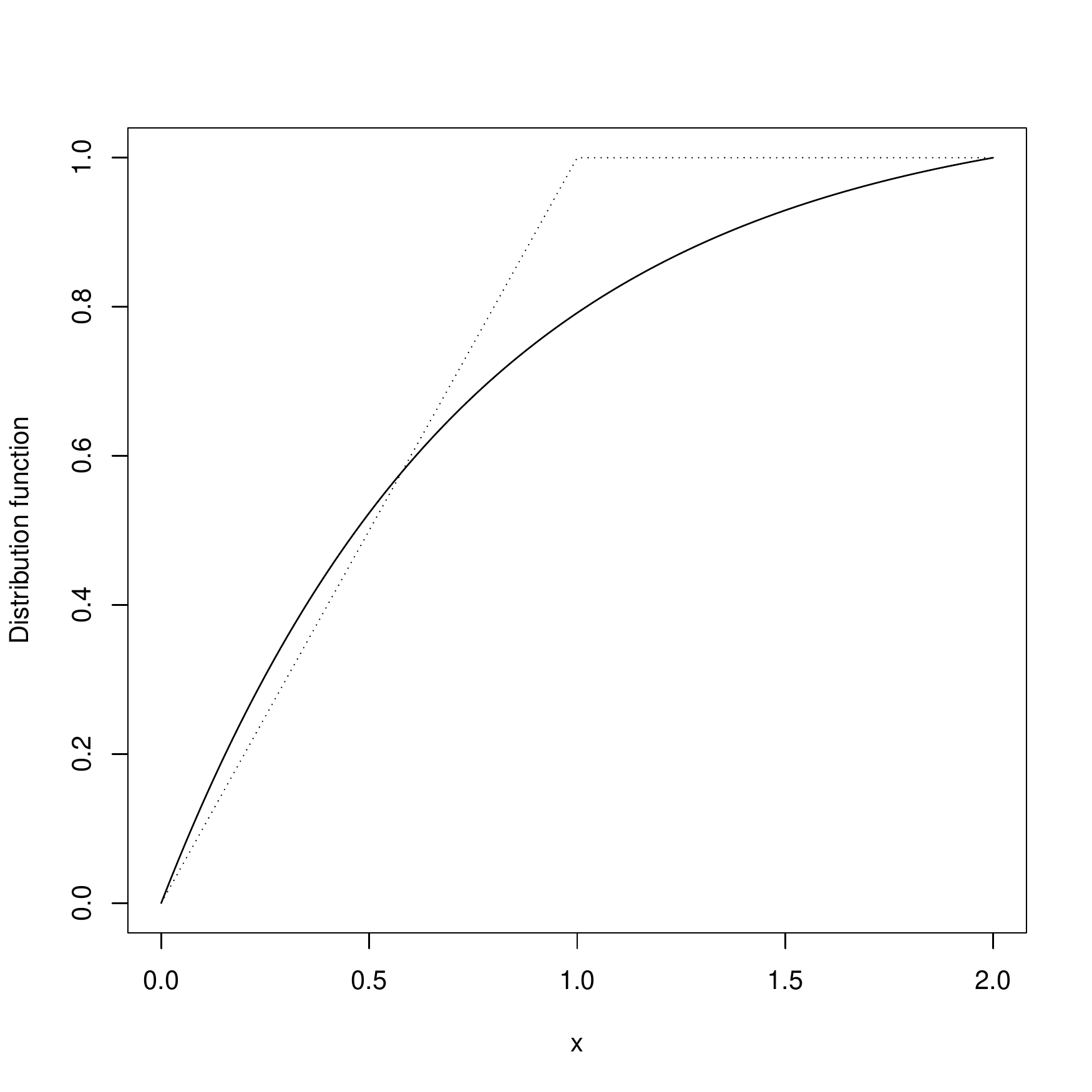}
\caption{\label{Fig:StochDom}The distribution functions corresponding to $\psi^*(P)$ (dotted) and $\psi^*(Q)$ (solid) in the stochastic ordering example at the end of Section~\ref{Sec:Props}.}
\end{figure}      

\section{The one-dimensional case}
\label{Sec:1D}

When $d=1$, the log-concave projection can be characterised in terms of its integrated distribution function.  For $\phi \in \Phi$, let
\[
\mathcal{S}(\phi) := \Bigl\{x \in \mathrm{dom}(\phi):\phi(x) > \frac{1}{2}\{\phi(x+\delta) + \phi(x-\delta)\} \ \text{for all} \ \delta > 0\Bigr\}
\]
denote the closed subset of $\mathbb{R}$ consisting of the points $x_0$ where $\phi$ is not affine in a neighbourhood of $x_0$.  
\begin{thm}[\citet{DSS2011}]
\label{Thm:1D}
Let $P \in \mathcal{P}_1$ have distribution function $F$, and let $F^*$ be a distribution function with density $f^* = e^{\phi^*} \in \mathcal{F}_1$.  Then $f^* = \psi^*(P)$ if and only if
\[
\int_{-\infty}^x \{F^*(t) - F(t)\} \, dt \left\{ \begin{array}{ll} \leq 0 & \mbox{for all $x \in \mathbb{R}$} \\
= 0 & \mbox{for all $x \in \mathcal{S}(\phi^*) \cup \{\infty\}$.} \end{array} \right. 
\]
\end{thm}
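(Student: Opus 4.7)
The plan is to reduce both directions of the characterisation to the variational inequality of Lemma~\ref{Lemma:Basic}, expressed in terms of the ``integrated discrepancy''
\[
H(x) \;:=\; \int_{-\infty}^x \{F^*(t) - F(t)\} \, dt
\]
and the curvature measure $\mu_\phi$ of a concave function $\phi$, defined by $\phi'' = -\mu_\phi$ in the distributional sense. The computational workhorse is the identity
\[
\int_{\mathbb{R}} \phi \, d(P - P^*) \;=\; \int_{\mathbb{R}} H \, d\mu_\phi,
\]
obtained from two integrations by parts (using $F - F^* = -H'$), provided the boundary contributions $[\phi' H]_{-\infty}^{\infty}$ can be shown to vanish.

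For the ``only if'' direction, I first apply Lemma~\ref{Lemma:Basic} with the test functions $\Delta_a(x) := -(a-x)_+$ for each $a \in \mathbb{R}$. Since $\Delta_a$ is concave and non-positive, $\phi^* + t\Delta_a \in \Phi$ for every $t \geq 0$; a Fubini computation yields $\int \Delta_a \, dP = -\int_{-\infty}^a F(t) \, dt$ and analogously for $P^*$, so the inequality in Lemma~\ref{Lemma:Basic} rearranges to $H(a) \leq 0$. Next, applying the lemma to the affine perturbations $\Delta(x) = \pm x$ (which remain admissible for small $t > 0$, since adding a small affine function preserves the coerciveness of $\phi^*$) gives equality of means, hence $H(\infty) = 0$. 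For equality on $\mathcal{S}(\phi^*)$, the key observation is that the rescalings $(1 \pm t)\phi^*$ both lie in $\Phi$ for small $t > 0$, so Lemma~\ref{Lemma:Basic} applied to $\pm\phi^*$ produces $\int \phi^* \, d(P - P^*) = 0$. Substituting into the identity above forces $\int H \, d\mu_{\phi^*} = 0$; combined with $H \leq 0$, $\mu_{\phi^*} \geq 0$, and the fact that $\mathrm{supp}(\mu_{\phi^*}) = \mathcal{S}(\phi^*)$ (the complement of $\mathcal{S}(\phi^*)$ being the union of maximal open intervals on which $\phi^*$ is affine, where $\mu_{\phi^*}$ vanishes), this forces $H = 0$ $\mu_{\phi^*}$-almost everywhere, and the continuity of $H$ upgrades this to $H \equiv 0$ on $\mathcal{S}(\phi^*)$.

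For the ``if'' direction, assume the two integral conditions on $H$. For any $\phi \in \Phi$, the pointwise convexity bound $e^a - e^b \leq e^a(a - b)$ yields
\[
L(\phi^*, P) - L(\phi, P) \;\geq\; \int (\phi^* - \phi) \, d(P - P^*).
\]
Applying the identity to each of $\phi^*$ and $\phi$ transforms the right-hand side into $\int H \, d\mu_{\phi^*} - \int H \, d\mu_\phi$; the first term vanishes because $H \equiv 0$ on $\mathcal{S}(\phi^*) = \mathrm{supp}(\mu_{\phi^*})$, while the second is non-positive because $H \leq 0$ and $\mu_\phi \geq 0$. Hence $L(\phi^*, P) \geq L(\phi, P)$ for every $\phi \in \Phi$, and the uniqueness assertion of Theorem~\ref{Thm:ExistProj} identifies $\phi^* = \log \psi^*(P)$.

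The hardest part is the justification of the integration-by-parts identity in the generality required: $\phi$ is merely concave, so $\phi'$ is only of bounded variation, and $P \in \mathcal{P}_1$ has only a first moment, so the decay of $H$ at $\pm \infty$ is not fast enough to automatically kill the boundary term $[\phi' H]_{-\infty}^{\infty}$ when $\phi'$ is unbounded. I would handle this by approximating any such $\phi$ by truncated and smoothed concave surrogates of compact support, verifying the identity for these, and passing to the limit using the exponential tail bound on log-concave densities stated in Section~2 to control the decay of $f^*$ and of $H$.
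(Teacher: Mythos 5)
The survey states Theorem~\ref{Thm:1D} without proof, deferring to \citet{DSS2011}; your argument follows essentially the same route as the original one there, namely reducing both directions to the perturbation inequality of Lemma~\ref{Lemma:Basic} together with the double integration by parts $\int \phi \, d(P-P^*) = \int H \, d\mu_\phi$, and your ``if'' direction via the pointwise convexity bound $e^{\phi^*}-e^{\phi} \leq e^{\phi^*}(\phi^*-\phi)$ is a clean way of avoiding any appeal to a converse of Lemma~\ref{Lemma:Basic} (which, as stated, only gives necessity). Two points need repair. First, the identification $\mathrm{supp}(\mu_{\phi^*}) = \mathcal{S}(\phi^*)$ is false at finite endpoints of $\mathrm{dom}(\phi^*)$: if $f^*$ is, say, uniform on $[0,1]$, then $\mathcal{S}(\phi^*)=\{0,1\}$ while $\mu_{\phi^*}$ vanishes on the interior, so your curvature argument establishes nothing at exactly the points where the theorem asserts $H=0$. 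The fix is short but separate: by Theorem~\ref{Thm:ExistProj}(iii) the endpoints of $\mathrm{dom}(\phi^*)$ coincide with those of $\mathrm{csupp}(P)$, so $F \equiv F^*\in\{0,1\}$ outside this interval, $H$ is constant there, and $H=0$ at the endpoints follows from $H(-\infty)=0$ and $H(\infty)=0$. Second, your plan for the integration-by-parts identity is not coherent as written: a real-valued concave function cannot have compact support, so ``truncated and smoothed concave surrogates of compact support'' must mean restricting the effective domain to a compact interval, which reintroduces precisely the boundary contributions in $\phi'$ you are trying to kill ($\phi'$ may blow up at the edge of $\mathrm{dom}(\phi)$ while $H$ need not vanish there). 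The original argument controls these terms directly, using $H(\pm\infty)=0$, the finite first moment of $P$, the exponential envelope on $f^*$ from Section~2, and the fact that one may restrict attention to $\phi$ with $L(\phi,P)>-\infty$ (so that $P(\mathrm{dom}(\phi))=1$ and $\int|\phi|\,d(P+P^*)<\infty$ in the relevant cases); this estimate has to be carried out rather than gestured at. With these two repairs the architecture of your proof is correct and matches the source.
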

In particular, if $P$ is absolutely continuous with respect to Lebesgue measure with continuous density $f$, and if $\mathcal{S}\bigl(\log \psi^*(P)\bigr)$ contains an open interval $I$, then $\psi^*(P) = f$ on $I$.  Theorem~\ref{Thm:1D} is especially useful as a way of verifying the form of log-concave projection in cases where one can guess what it might be.  For instance, consider the family of symmetrised Pareto densities
\[
f(x;\alpha,\sigma) := \frac{\alpha\sigma^\alpha}{2(|x| + \sigma)^{\alpha+1}}, \quad x \in \mathbb{R}, \alpha > 1, \sigma > 0.
\]
Theorem~\ref{Thm:1D} can be used to verify that the corresponding log-concave projection is
\[
f^*(x;\alpha,\sigma) = \frac{\alpha-1}{2\sigma} \exp\biggl\{-\frac{(\alpha-1)|x|}{\sigma}\biggr\}, \quad x \in \mathbb{R};
\]
see \citet{ChenSamworth2013}.  Since the preimage under $\psi^*$ of any $f \in \mathcal{F}_d$ is a convex set, this shows that the preimage of the Laplace density $x \mapsto e^{-|x|}/2$ is infinite-dimensional.  Theorem~\ref{Thm:1D} can also be used to show results such as the following:
\begin{prop}[\citet{DSS2011}]
\label{Prop:ConcConv}
Suppose that $P \in \mathcal{P}_1$ has log-density $\phi$ that is differentiable, convex on a bounded interval $[a,b]$ and concave on $(-\infty,a] \cup [b,\infty)$.  Then there exist $a' \in (-\infty,a]$ and $b' \in [b,\infty)$ such that $\log \psi^*(P)$ is affine on $[a',b']$ and $\log \psi^*(P) = \phi$ on $(-\infty,a'] \cup [b',\infty)$.   
\end{prop}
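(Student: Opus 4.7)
My plan is to identify $\phi^* := \log \psi^*(P)$ directly by determining the locations where $\phi^*$ is locally affine versus where it equals $\phi$, invoking the remark after Theorem~\ref{Thm:1D} (which applies since $P$ has the continuous density $e^\phi$). Throughout I write $f := e^\phi$, $f^* := e^{\phi^*}$, $F$ and $F^*$ for the corresponding CDFs, and $G(x) := \int_{-\infty}^x \{F^*(t) - F(t)\}\,dt$. I treat the case in which $\phi$ is strictly convex on $(a,b)$ and strictly concave on $(-\infty,a) \cup (b,\infty)$; degenerate subintervals on which $\phi$ happens to be affine can be absorbed into the final affine piece by a routine merging argument.

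Since $\phi \notin \mathcal{F}_1$, $\phi^* \neq \phi$, so $\mathcal{S}(\phi^*)^c$ is a nonempty countable union of disjoint maximal open intervals $\{(p_i, q_i)\}$ on which $\phi^*$ is affine. By the remark and continuity of $\phi^*$, $\phi^* = \phi$ on the interior of $\mathcal{S}(\phi^*)$, and (generically, when successive affine pieces are separated by nontrivial $\mathcal{S}$-intervals) $\phi^*(p_i) = \phi(p_i)$ and $\phi^*(q_i) = \phi(q_i)$, so the chord of $\phi^*$ over $[p_i, q_i]$ connects $(p_i, \phi(p_i))$ and $(q_i, \phi(q_i))$. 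I next show every affine piece carries matching $P$- and $P^*$-mass. On every maximal $\mathcal{S}(\phi^*)$-interval $f^* = f$, so $F^* - F$ is constant there and $G$ is affine. Since $\mathrm{dom}(\phi^*) = \mathbb{R}$ (by Theorem~\ref{Thm:ExistProj}(iii), as $\mathrm{csupp}(P) = \mathbb{R}$), the leftmost $\mathcal{S}(\phi^*)$-interval extends to $-\infty$, so $F^* \equiv F$ there. Inducting across successive $\mathcal{S}(\phi^*)$-intervals and using the equality condition $G(q_i) = 0$ from Theorem~\ref{Thm:1D}, an affine $G$ vanishing on a non-trivial subinterval must vanish identically, forcing $F^*(p_i) = F(p_i)$ and $F^*(q_i) = F(q_i)$ for every $i$, i.e., $\int_{p_i}^{q_i} f^* = \int_{p_i}^{q_i} f$.

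Now I rule out the possibility that $(p_i, q_i)$ lies strictly inside the concave region of $\phi$: if $[p_i, q_i] \subseteq (-\infty, a]$, strict concavity places the chord strictly below $\phi$, so $f^* < f$ on $(p_i, q_i)$, contradicting mass balance; the symmetric argument excludes $[p_i, q_i] \subseteq [b, \infty)$. So every $(p_i, q_i)$ meets $(a, b)$. Moreover the interior of $\mathcal{S}(\phi^*)$ cannot meet $(a, b)$ either, since on any such intersection $\phi^* = \phi$ would be simultaneously concave and strictly convex. Hence $(a, b) \subseteq \bigcup_i [p_i, q_i]$.

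Connectedness of $(a, b)$, disjointness of the $(p_i, q_i)$, and the excluded cases together force exactly one affine piece $(a', b') := (p_{i_0}, q_{i_0})$, with $a' \leq a \leq b \leq b'$, and $\phi^* = \phi$ on $(-\infty, a'] \cup [b', \infty)$, as required. The main obstacle I anticipate is the mass-balance induction combined with a careful handling of any kink points where two maximal affine pieces might abut — at such a shared endpoint the remark does not apply directly, and one needs a separate first-order argument using continuity of $G$ and the sign of $G'$ to conclude $\phi^* = \phi$ at the kink and thereby rule out abutting altogether.
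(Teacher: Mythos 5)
The paper itself contains no proof of this proposition --- it is quoted from D\"umbgen, Samworth and Schuhmacher (2011) --- and the cited argument is \emph{constructive}: one exhibits $a'$, $b'$ and the connecting affine segment so that the resulting concave candidate satisfies the conditions of Theorem~\ref{Thm:1D}, and then invokes uniqueness of the log-concave projection. Your strategy of instead analysing the structure of the unknown $\phi^*$ directly through the ``only if'' direction of Theorem~\ref{Thm:1D} is a legitimate alternative, and several steps are sound, but it contains genuine gaps, only one of which you flag.

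The central gap is the one you defer at the end. Your exclusion of affine pieces inside the concavity region, and your final ``exactly one affine piece'' step, both rest on $\phi^*(p_i)=\phi(p_i)$ and $\phi^*(q_i)=\phi(q_i)$, which you obtain only ``generically'', when the endpoint abuts an interval of $\mathcal{S}(\phi^*)$ with nonempty interior. If two maximal affine pieces abut at a kink, or more generally if $\mathcal{S}(\phi^*)\cap(a,b)$ is nonempty but nowhere dense, nothing in your argument applies --- and this is exactly the configuration you must rule out to get a single interval $[a',b']$, so the main assertion is left unproved. A second, repairable, problem is the mass-balance step: the claim that the leftmost component of $\mathcal{S}(\phi^*)$ is a neighbourhood of $-\infty$ is unjustified ($\phi^*$ could a priori be affine with positive slope on a half-line), so your induction has no base case. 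The repair also makes the induction unnecessary: $G$ is $C^1$ with $G'=F^*-F$, $G\le 0$ everywhere and $G=0$ on $\mathcal{S}(\phi^*)$, so every point of $\mathcal{S}(\phi^*)$ is an interior maximum of $G$ and hence $F^*=F$ there; since the finite endpoints $p_i,q_i$ of maximal affine intervals lie in $\mathcal{S}(\phi^*)$, mass balance follows at once. Pushing this to second order ($G''=f^*-f$ is continuous, so $f^*\le f$ on $\mathcal{S}(\phi^*)$) even lets you run the chord comparison without exact endpoint agreement --- but it still does not exclude abutting affine pieces straddling $(a,b)$. The constructive route via the ``if'' direction of Theorem~\ref{Thm:1D} sidesteps all of these structural case distinctions, which is why it is the one used in the source.
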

These ideas are illustrated in Figure~\ref{Fig:Projs}, taken from \citet{DSS2011}.
\begin{figure}
\centering
\includegraphics[width=0.41\textwidth]{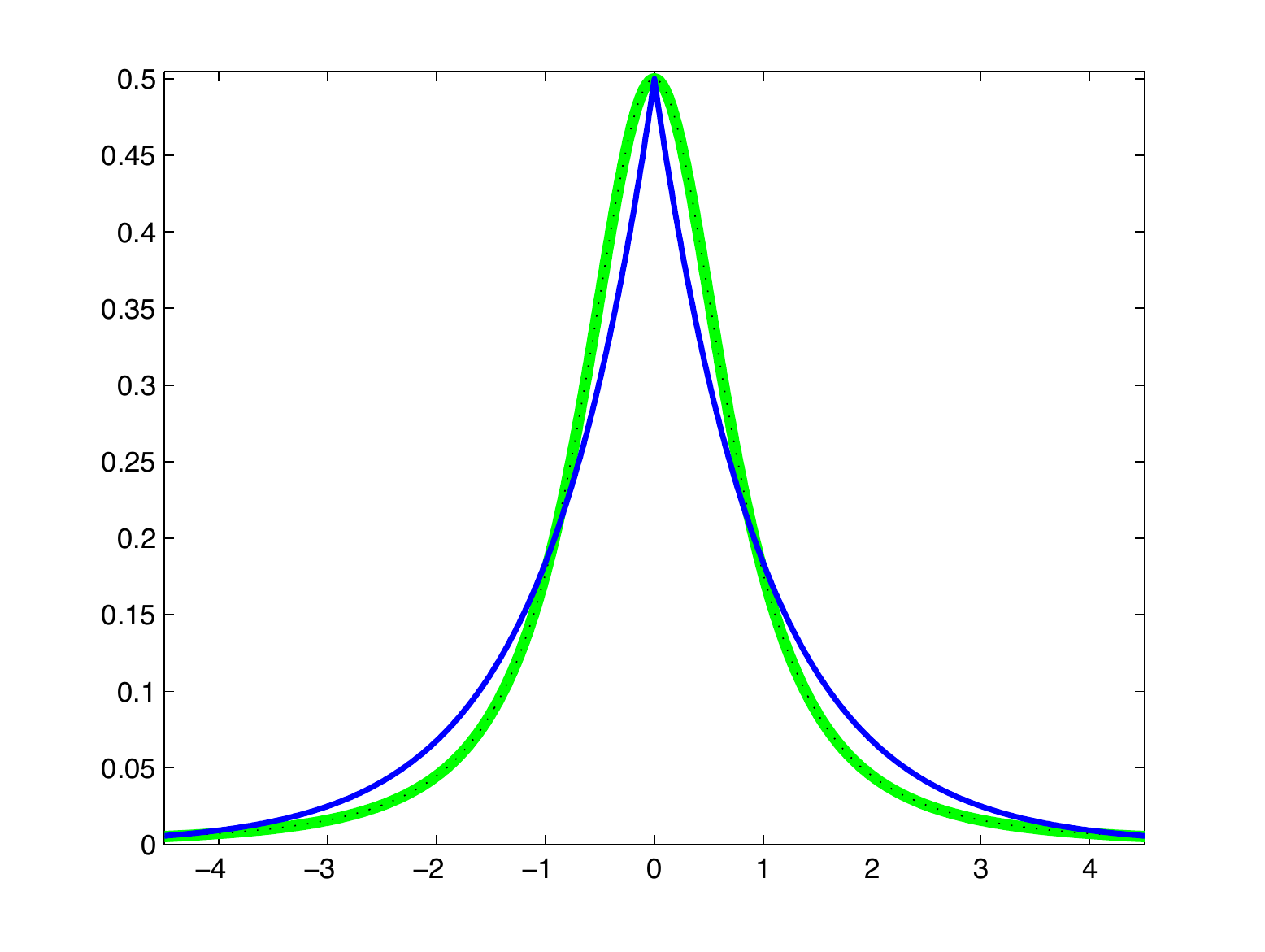} \hspace{0.5cm}
\includegraphics[width=0.4\textwidth]{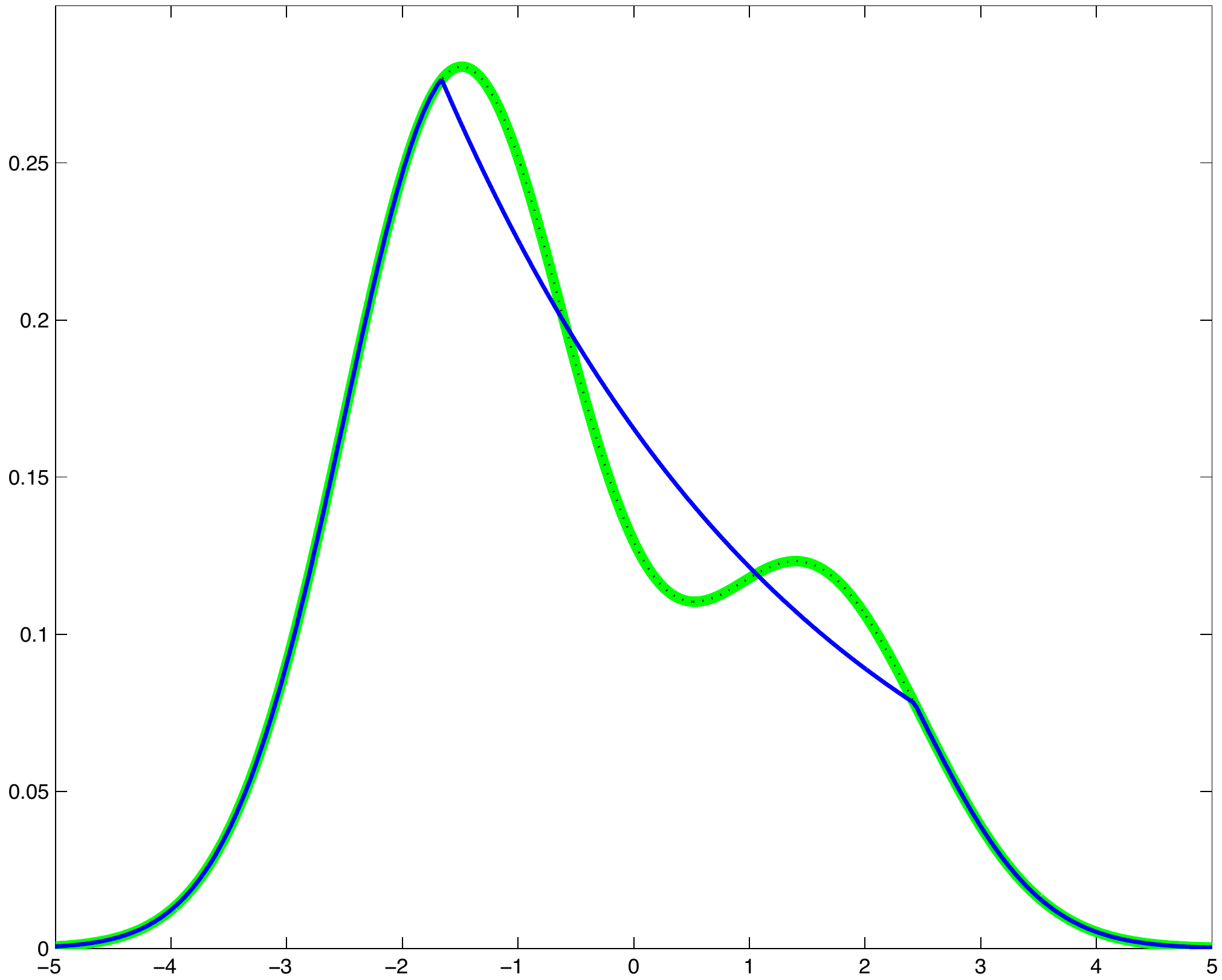}
\caption{\label{Fig:Projs}Left: the scaled $t_2$ density $f(x) = (1+x^2)^{-3/2}/2$ (green) and its Laplace log-concave projection $f^*(x) = e^{-|x|}/2$ (blue).  Right: the density of the normal mixture $0.7N(-1.5,1) + 0.3N(1.5,1)$ (green) together with its log-concave projection (blue); the normal mixture satisfies the conditions of Proposition~\ref{Prop:ConcConv}.}
\end{figure}

\section{Stronger forms of convergence and consistency}

In minor abuse of standard notation, if $(f_n), f$ are densities on $\mathbb{R}^d$, we write $f_n \stackrel{d}{\rightarrow} f$ to mean $\int_{\mathbb{R}^d} g(x)f_n(x) \, dx \rightarrow \int_{\mathbb{R}^d} g(x)f(x) \, dx$ for all bounded continuous functions $g:\mathbb{R}^d \rightarrow \mathbb{R}$.  The constraint of log-concavity rules out certain pathologies and means we can strengthen certain convergence statements:
\begin{thm}[\citet{CuleSamworth2010,SHD2011}]
Let $(f_n)$ be a sequence in $\mathcal{F}_d$ with $f_n \stackrel{d}{\rightarrow} f$ for some density $f$ on $\mathbb{R}^d$.  Then $f$ is log-concave.  Moreover, if $\alpha_0 > 0$ and $\beta_0 \in \mathbb{R}$ are such that $f(x) \leq e^{-\alpha_0\|x\|+\beta_0}$ for all $x \in \mathbb{R}^d$, then for all $\alpha < \alpha_0$,
\[
\int_{\mathbb{R}^d} e^{\alpha\|x\|}|f_n(x) - f(x)| \, dx \rightarrow 0
\]
as $n \rightarrow \infty$.
\end{thm}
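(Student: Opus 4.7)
The plan is to prove both conclusions in three stages: obtain uniform tail bounds on $(f_n)$, use concavity to upgrade weak convergence to locally uniform convergence of the log-densities, and then combine these ingredients with dominated convergence to pass to the exponentially weighted $L^1$ statement.

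\emph{Uniform tails and log-concavity of $f$.} Since $f_n \to f$ weakly to the density $f$, the associated probability measures are tight by Prokhorov's theorem, so one can choose $R > 0$ with $\int_{B(0,R)} f_n \geq 3/4$ for all large $n$. Because each $f_n$ is log-concave, this mass bound can be propagated outward along rays from a fixed interior point using essentially the same slope argument that underlies Lemma~2.2, yielding constants $\alpha_\star > 0$ and $\beta_\star \in \mathbb{R}$ such that $f_n(x) \leq e^{-\alpha_\star \|x\| + \beta_\star}$ for all sufficiently large $n$ and all $x \in \mathbb{R}^d$; in particular $\sup_n \|f_n\|_\infty < \infty$. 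Setting $\phi_n := \log f_n$, the $\phi_n$ are concave and uniformly bounded above on compact sets, so a Helly-type selection argument gives, along any subsequence, a further subsequence along which $\phi_n$ converges pointwise to some $\phi \in \Phi$, locally uniformly on the interior of $\mathrm{dom}(\phi)$. The uniform tail bound then allows dominated convergence to conclude $f_n = e^{\phi_n} \to e^\phi$ in $L^1(\mathbb{R}^d)$, and since $f_n \to f$ weakly this forces $e^\phi = f$ a.e. Thus $f$ is log-concave, and uniqueness of the limit upgrades the subsequential convergence to full-sequence locally uniform convergence of $\log f_n$ to $\log f$ on $\mathrm{int}\bigl(\{f > 0\}\bigr)$.

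\emph{Weighted $L^1$ convergence.} Fix $\alpha < \alpha_0$. The decay rate $\alpha_\star$ from the previous step is a priori strictly smaller than $\alpha_0$, so the bound $f_n(x) \leq e^{-\alpha_\star \|x\| + \beta_\star}$ does not suffice to dominate $e^{\alpha \|x\|} f_n(x)$. The main technical point is therefore to refine the uniform tail bound to match the decay of the limit: for the given $\alpha < \alpha_0$ there should exist $n_0$, $\alpha' > \alpha$ and $\beta' \in \mathbb{R}$ with $f_n(x) \leq e^{-\alpha' \|x\| + \beta'}$ for all $n \geq n_0$ and $x \in \mathbb{R}^d$. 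I would argue this by fixing an interior reference point $x_0$ with $\phi_n(x_0) \to \phi(x_0) > -\infty$ and using concavity together with weak convergence applied to bounded continuous test functions supported on half-spaces to control the slopes of $\phi_n$ along rays uniformly in terms of the slopes of $\log f$; the hypothesis $f(x) \leq e^{-\alpha_0 \|x\| + \beta_0}$ ensures the target slope can be taken strictly greater than $\alpha$. Given this refined bound, split $\int e^{\alpha \|x\|} |f_n - f| \, dx$ into integrals over $B(0,R)$ and its complement: the tail is bounded uniformly by a constant multiple of $\int_{\|x\| > R} e^{(\alpha - \alpha') \|x\|} \, dx$, which is arbitrarily small for $R$ large, while on $B(0,R)$ the integrand is bounded and tends to $0$ a.e.\ by the previous paragraph, so bounded convergence completes the argument.

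The principal obstacle is the tail-rate refinement in the final step: extracting uniform tail decay matching that of $f$ (rather than merely some fixed exponential rate) is the key new input of \citet{SHD2011} beyond the a.e.\ and local-uniform convergence already obtained in \citet{CuleSamworth2010}.
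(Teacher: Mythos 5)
The paper itself gives no proof of this theorem --- it is stated as a survey item with references to \citet{CuleSamworth2010} and \citet{SHD2011} --- so the comparison below is with the argument in those sources. Your architecture reproduces it faithfully: tightness of a weakly convergent sequence plus log-concavity gives a uniform crude envelope $f_n(x)\le e^{-\alpha_\star\|x\|+\beta_\star}$ (the non-degeneracy of the limit, i.e.\ that $f$ is a genuine density and not concentrated near a hyperplane, is what lets you make the constants of Lemma~2.2 uniform in $n$, and is worth saying explicitly); Helly selection for the concave $\phi_n:=\log f_n$ plus dominated convergence identifies every subsequential limit with $\log f$, proving log-concavity of $f$, $L^1$ convergence and locally uniform convergence of $\phi_n$ on $\mathrm{int}(\mathrm{dom}(\log f))$; and you correctly isolate the upgrade of the envelope rate from $\alpha_\star$ to some $\alpha'\in(\alpha,\alpha_0)$ as the crux of the weighted statement. (A minor attribution point: this sharp-rate envelope and the weighted total variation conclusion already appear in \citet{CuleSamworth2010}; \citet{SHD2011} refine and extend these continuity results.)

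The one place where your write-up is not yet a proof is that crux step, and the mechanism you gesture at --- testing weak convergence against bounded continuous functions supported on half-spaces --- is not the tool that closes it. The tools you already have suffice: locally uniform convergence of $\phi_n$ near an interior point $x_0$ of the support, the uniform upper bound $\sup_n\sup_x\phi_n<\infty$, and the standard fact for pointwise-convergent sequences of concave functions that $\limsup_n\phi_n(y_n)\le\phi(y)$ whenever $y_n\to y$ (a consequence of the Helly limit together with upper semi-continuity of $\phi$). If the envelope with rate $\alpha'<\alpha_0$ failed for every choice of constant, there would exist $n_k\to\infty$ and points $x_k$ with $\phi_{n_k}(x_k)>-\alpha'\|x_k\|+k$, and the uniform upper bound forces $\|x_k\|\to\infty$. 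Let $y_k$ be the point of the segment $[x_0,x_k]$ at distance $r$ from $x_0$ and set $\lambda_k:=r/\|x_k-x_0\|\to 0$; concavity gives
\[
\phi_{n_k}(y_k)\;\ge\;(1-\lambda_k)\phi_{n_k}(x_0)+\lambda_k\phi_{n_k}(x_k)\;\ge\;(1-\lambda_k)\phi_{n_k}(x_0)-\lambda_k\alpha'\|x_k\|+\lambda_k k,
\]
so $\liminf_k\phi_{n_k}(y_k)\ge\phi(x_0)-\alpha'r$. Passing to a subsequence along which $y_k\to y$ with $\|y-x_0\|=r$ and applying the $\limsup$ bound together with $\phi(y)\le-\alpha_0\|y\|+\beta_0\le-\alpha_0(r-\|x_0\|)+\beta_0$ yields $(\alpha_0-\alpha')r\le\beta_0+\alpha_0\|x_0\|-\phi(x_0)$, which is false for $r$ large since $\alpha'<\alpha_0$ and $\phi(x_0)>-\infty$. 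With that refined envelope in hand, your ball/complement splitting and bounded convergence finish the weighted $L^1$ statement exactly as you describe.
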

Thus, in the presence of log-concavity, convergence in distribution statements automatically yield convergence in certain exponentially weighted total variation distances.  

A very natural question about log-concave projections, with important implications for the consistency of the log-concave maximum likelihood estimator, is `In what sense does a distribution $Q \in \mathcal{P}_d$ need to be close to $P \in \mathcal{P}_d$ in order for $\psi^*(Q)$ to be close to $\psi^*(P)$'?  To answer this, we first recall that the Mallows-1 distance\footnote{Also known as the Wasserstein distance, Monge--Kantorovich distance and Earth Mover's distance.} $d_1$ between probability measures $P,Q$ on $\mathbb{R}^d$ with finite first moment is given by
\[
d_1(P,Q) := \inf_{(X,Y) \sim (P,Q)} \mathbb{E}\|X-Y\|,
\]
where the infimum is taken over all pairs of random vectors $(X,Y)$ defined on the same probability space with $X \sim P$ and $Y \sim Q$.  It is well-known that $d_1(P_n,P) \rightarrow 0$ if and only if both $P_n \stackrel{d}{\rightarrow} P$ and $\int_{\mathbb{R}^d} \|x\| \, dP_n(x) \rightarrow \int_{\mathbb{R}^d} \|x\| \, dP(x)$.  
\begin{thm}[\citet{DSS2011}]
\label{Thm:Cont}
Suppose that $P \in \mathcal{P}_d$ and that $d_1(P_n,P) \rightarrow 0$.  Then $L^*(P_n) \rightarrow L^*(P)$, $P_n \in \mathcal{P}_d$ for sufficiently large $n$, and, taking $\alpha_0 > 0$ and $\beta_0 \in \mathbb{R}$ such that $\psi^*(P)(x) \leq e^{-\alpha_0\|x\|+\beta_0}$ for all $x \in \mathbb{R}^d$, we have for $\alpha < \alpha_0$ that

\[
\int_{\mathbb{R}^d} e^{\alpha\|x\|}|\psi^*(P_n)(x) - \psi^*(P)(x)| \, dx \rightarrow 0
\]
as $n \rightarrow \infty$.
\end{thm}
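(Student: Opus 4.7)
The plan is to reduce the theorem to three tasks --- verifying $P_n\in\mathcal{P}_d$ eventually, establishing value convergence $L^*(P_n)\to L^*(P)$, and producing a compactness argument for $\psi^*(P_n)$ --- and then invoke the preceding theorem from \citet{CuleSamworth2010,SHD2011} to upgrade the resulting weak convergence to the exponentially weighted $L^1$ statement. For membership, the first-moment condition is immediate from the equivalent characterisation of $d_1$ convergence, and the no-hyperplane condition follows by contradiction: if $P_{n_k}$ were supported on hyperplanes $H_{n_k}$ along a subsequence, sphere compactness of the unit normals together with uniform first-moment control of the offsets produces a limiting hyperplane on which the $d_1$ limit $P$ would be supported, contradicting $P\in\mathcal{P}_d$.

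For value convergence, the lower bound $\liminf L^*(P_n)\geq L^*(P)$ follows (after a small perturbation of $\phi^*:=\log\psi^*(P)$ to handle the case when $P_n$ has mass outside $\mathrm{dom}(\phi^*)$) from $L^*(P_n)\geq\int\phi^*\,dP_n-\int e^{\phi^*}$, exploiting that $\phi^*$ is concave, coercive and has exponential upper envelope by Lemma~2.3, so its variation on the effective domain is dominated by an affine function of $\|x\|$ and $d_1$ convergence passes the integral to the limit. The upper bound requires more work and is obtained in tandem with a uniform exponential envelope on the projections: there exist $\alpha>0$, $\beta\in\mathbb{R}$ and $n_0$ such that $\psi^*(P_n)(x)\leq e^{-\alpha\|x\|+\beta}$ for all $x$ and all $n\geq n_0$. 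The key ingredients are mean preservation via Corollary~\ref{Cor:Moment} applied to $h(x)=\pm t^\top x$ (so the projection measures share the means of $P_n$ and inherit bounded first moments), the convex-ordering bound on $\int\|x\|\,d\psi^*(P_n)$, and lower-boundedness of $L^*(P_n)$ from the previous step; these together with standard envelope estimates for log-concave densities with controlled mean and likelihood value yield uniform exponential decay.

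With the envelope in hand, a selection argument (concave functions with a uniform upper envelope are equi-Lipschitz on compact subsets of the interior of their effective domains, so Arzel\`a--Ascoli and a diagonal extraction apply) produces a subsequence $\psi^*(P_{n_k})\to g$ locally uniformly, with $g$ log-concave by the preceding theorem. Passing to the limit in $L(\phi_{n_k}^*,P_{n_k})=L^*(P_{n_k})$, using the envelope to dominate $e^{\phi_{n_k}^*}$ in the penalty term and $d_1$ convergence to handle $\int\phi_{n_k}^*\,dP_{n_k}$, identifies $g$ as a maximiser of $L(\cdot,P)$; by the uniqueness clause in Theorem~\ref{Thm:ExistProj}, $g=\psi^*(P)$. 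Since every subsequence admits a further subsequence converging to $\psi^*(P)$, the full sequence satisfies $\psi^*(P_n)\stackrel{d}{\rightarrow}\psi^*(P)$, and the cited strengthened convergence theorem promotes this to the exponentially weighted total-variation conclusion for every $\alpha<\alpha_0$. The main obstacle is the uniform exponential envelope: it is needed simultaneously for the upper bound in the value convergence, for dominated convergence of the integrals against $P_n$, and for the tightness of the selection, so in practice the envelope, the value convergence, and the compactness argument must be bootstrapped together rather than established independently.
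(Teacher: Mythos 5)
The survey states Theorem~\ref{Thm:Cont} as a citation to \citet{DSS2011} and does not reproduce a proof, so there is no in-paper argument to compare against; measured against the proof in the cited source, your outline follows essentially the same route (eventual membership of $P_n$ in $\mathcal{P}_d$, two-sided convergence of $L^*$, a uniform exponential envelope feeding an Arzel\`a--Ascoli/uniqueness subsequence argument, and finally the $f_n \stackrel{d}{\rightarrow} f$ upgrade theorem of \citet{CuleSamworth2010,SHD2011}), and you correctly identify the uniform envelope as the crux and the reason the steps must be bootstrapped. The one place your sketch is genuinely under-specified is the uniform upper bound on $\sup_x \psi^*(P_n)(x)$ hiding inside ``standard envelope estimates'': bounded first moments and a lower bound on $L^*(P_n)$ alone do not prevent the projections from spiking along a thin slab, and what is actually needed is a \emph{quantitative} non-degeneracy statement, uniform in $n$ --- namely $\epsilon,\delta>0$ with $P_n(\{x:|v^\top x - c|\leq\delta\})\leq 1-\epsilon$ for all unit vectors $v$, all $c\in\mathbb{R}$ and all large $n$, inherited from $P$ via the Mallows convergence --- since a large peak of a log-concave density forces rapid decay off a slab and hence drives $\int\phi_n^*\,dP_n$ to $-\infty$, contradicting the lower bound on $L^*(P_n)$. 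Your step~1 only establishes the qualitative condition $P_n(H)<1$, so you should strengthen it to this uniform version before the envelope argument goes through; with that addition the plan matches the original proof.
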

The Mallows convergence cannot in general be weakened to $P_n \stackrel{d}{\rightarrow} P$.  In particular, if $P = U\{-1,1\}$ and $P_n = (1-n^{-1})U\{-1,1\} + n^{-1}U\{-(n+1),n+1\}$, then $P_n \stackrel{d}{\rightarrow} P$ but it can be shown that
\[
\int_{-\infty}^\infty |\psi^*(P_n) - \psi^*(P)| \rightarrow \frac{4}{5^{1/2}+1}.
\]
Writing $d_{\mathrm{TV}}(f,g) := \frac{1}{2}\int_{\mathbb{R}^d} |f-g|$, Theorem~\ref{Thm:Cont} implies that the log-concave projection $\psi^*$ is continuous when considered as a map between the metric spaces $(\mathcal{P}_d,d_1)$ and $(\mathcal{F}_d,d_{\mathrm{TV}})$.  However, it is not uniformly continuous: for instance, let $P_n = U[-1/n,1/n]$ and $Q_n = U[-1/n^2,1/n^2]$.  Then $d_1(P_n,Q_n) = \frac{1}{2n} - \frac{1}{2n^2} \rightarrow 0$, but
\[
d_{\mathrm{TV}}\bigl(\psi^*(P_n),\psi^*(Q_n)\bigr) = \int_0^{1/n^2} \frac{n^2}{2} - \frac{n}{2} \, dx + \int_{1/n^2}^{1/n} \frac{n}{2} \, dx \rightarrow 1.
\]
One of the great advantages of working in the general framework of log-concave projections for arbitrary $P \in \mathcal{P}_d$, as opposed to simply focusing on empirical distributions, is that one can study analytical properties of the projection as above, meaning that the only probabilistic arguments required to deduce convergence statements about the log-concave maximum likelihood estimator are simple facts about the convergence of the empirical distribution.  This is illustrated in the following corollary.
\begin{corollary}[\citet{DSS2011}]
\label{Cor:Cons}
Suppose that $X_1,X_2,\ldots$ are independent and identically distributed with distribution $P \in \mathcal{P}_d$, and let $\mathbb{P}_n$ denote the empirical distribution of $X_1,\ldots,X_n$.  Then, with probability one, $\hat{f}_n := \psi^*(\mathbb{P}_n)$ is well-defined for sufficiently large $n$, and taking $\alpha_0 > 0$ and $\beta_0 \in \mathbb{R}$ such that $f^*(x) := \psi^*(P)(x) \leq e^{-\alpha_0\|x\|+\beta_0}$ for all $x \in \mathbb{R}^d$, we have for $\alpha < \alpha_0$ that
\[
\int_{\mathbb{R}^d} e^{\alpha\|x\|}|\hat{f}_n(x) - f^*(x)| \, dx \stackrel{\mathrm{a.s.}}{\rightarrow} 0
\]
as $n \rightarrow \infty$.
\end{corollary}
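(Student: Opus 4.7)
The plan is to apply Theorem~\ref{Thm:Cont} pathwise, after verifying that the empirical measures $\mathbb{P}_n$ converge almost surely to $P$ in Mallows-1 distance. Indeed, since Theorem~\ref{Thm:Cont} already packages both the eventual well-definedness of the projection and the exponentially weighted total variation convergence, the entire probabilistic content of the corollary is reduced to establishing $d_1(\mathbb{P}_n,P) \to 0$ almost surely. This is precisely the point emphasised in the paragraph preceding the corollary.

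The first step is to recall the characterisation quoted in the excerpt: $d_1(\mathbb{P}_n,P) \to 0$ if and only if $\mathbb{P}_n \stackrel{d}{\to} P$ and $\int_{\mathbb{R}^d} \|x\|\,d\mathbb{P}_n(x) \to \int_{\mathbb{R}^d} \|x\|\,dP(x)$. I would verify each of these almost surely. Weak convergence $\mathbb{P}_n \stackrel{d}{\to} P$ almost surely is Varadarajan's theorem (applicable on the separable space $\mathbb{R}^d$); equivalently, one can apply the strong law of large numbers to $n^{-1}\sum_{i=1}^n g(X_i)$ for each $g$ in a countable convergence-determining class and intersect the resulting null sets. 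For the first moments, since $P \in \mathcal{P}_d$ gives $\mathbb{E}\|X_1\| < \infty$, the strong law of large numbers applied to the real-valued random variables $\|X_i\|$ yields
\[
\int_{\mathbb{R}^d} \|x\| \, d\mathbb{P}_n(x) \;=\; \frac{1}{n}\sum_{i=1}^n \|X_i\| \;\stackrel{\mathrm{a.s.}}{\longrightarrow}\; \mathbb{E}\|X_1\| \;=\; \int_{\mathbb{R}^d} \|x\|\,dP(x).
\]
Intersecting the two probability-one events, we conclude that $d_1(\mathbb{P}_n,P) \to 0$ on an event $\Omega_0$ with $\mathbb{P}(\Omega_0)=1$.

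The second step is to apply Theorem~\ref{Thm:Cont} to each $\omega \in \Omega_0$. By that theorem, $\mathbb{P}_n \in \mathcal{P}_d$ for all $n$ sufficiently large (so that $\hat{f}_n = \psi^*(\mathbb{P}_n)$ is well-defined), and for any $\alpha < \alpha_0$,
\[
\int_{\mathbb{R}^d} e^{\alpha\|x\|}\,|\hat{f}_n(x) - f^*(x)|\,dx \;\longrightarrow\; 0,
\]
with $\alpha_0,\beta_0$ as in the statement (these are tail constants for $f^* = \psi^*(P)$, guaranteed by Lemma of \citet{CuleSamworth2010} applied to $f^* \in \mathcal{F}_d$). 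Since this holds for every $\omega \in \Omega_0$, the claimed almost-sure convergence follows.

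There is no serious obstacle: the heavy lifting is entirely in Theorem~\ref{Thm:Cont}, which shows that log-concave projection is continuous as a map from $(\mathcal{P}_d,d_1)$ into the space of densities under exponentially weighted total variation. The only care required is in handling the null sets correctly when intersecting the two pathwise limit statements (weak convergence and convergence of means), and noting that the eventual membership $\mathbb{P}_n \in \mathcal{P}_d$ comes for free from Theorem~\ref{Thm:Cont} rather than needing a separate argument about the convex hull of the data being $d$-dimensional.
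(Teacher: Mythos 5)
Your proposal is correct and follows essentially the same route as the paper: establish $d_1(\mathbb{P}_n,P)\stackrel{\mathrm{a.s.}}{\rightarrow}0$ via Varadarajan's theorem together with the strong law of large numbers applied to $\|X_i\|$, then invoke Theorem~\ref{Thm:Cont} pathwise. The only (immaterial) difference is that the paper establishes the eventual well-definedness of $\hat{f}_n$ separately, via the openness of $\{P:P(H)<1 \text{ for all hyperplanes } H\}$ in the weak topology, whereas you read it off directly from the conclusion of Theorem~\ref{Thm:Cont}, which is legitimate since that theorem explicitly asserts $P_n\in\mathcal{P}_d$ for sufficiently large $n$.
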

\begin{proof}
let $\mathcal{H} := \bigl\{h: \mathbb{R}^d \rightarrow [-1,1]:|h(x) - h(y)| \leq \|x-y\| \ \text{for all} \ x,y \in \mathbb{R}^d\bigr\}$, and define the bounded Lipschitz distance between probability measures $P$ and $Q$ on $\mathbb{R}^d$ by
\[
d_{\mathrm{BL}}(P,Q) := \sup_{h \in \mathcal{H}} \int_{\mathbb{R}^d} h \, d(P-Q).
\]
Then $d_{\mathrm{BL}}$ metrises convergence in distribution for probability measures on $\mathbb{R}^d$, and from Varadarajan's theorem \citep[][Theorem~11.4.1]{Dudley2002}, we deduce that $d_{\mathrm{BL}}(\mathbb{P}_n,P) \stackrel{\mathrm{a.s.}}{\rightarrow} 0$.  In particular, since the set of probability measures $P$ on $\mathbb{R}^d$ with $P(H) < 1$ for all hyperplanes $H$ is an open subset of the set of all probability measures on $\mathbb{R}^d$ in the topology of weak convergence \citep[][Lemma~2.13]{DSS2011}, it follows that with probability one, $\mathbb{P}_n \in \mathcal{P}_d$ for sufficiently large $n$, and $\hat{f}_n$ is well-defined for such $n$.

Since we also have $\int_{\mathbb{R}^d} \|x\| \, d\mathbb{P}_n(x) \stackrel{\mathrm{a.s.}}{\rightarrow} \int_{\mathbb{R}^d} \|x\| \, dP(x)$ by the strong law of large numbers, it follows that $d_1(\mathbb{P}_n,P) \stackrel{\mathrm{a.s.}}{\rightarrow} 0$.  The second part of the result therefore follows by Theorem~\ref{Thm:Cont}.
\end{proof}
Corollary~\ref{Cor:Cons} yields the (strong) consistency of the log-concave maximum likelihood estimator in exponentially weighted total variation distances, and also provides a robustness to misspecification guarantee in the case where the true distribution $P$ does not have a log-concave density.

\section{Rates of convergence and adaptation}

Historically, a great deal of effort has gone into understanding rates of convergence in shape-constrained estimation problems, with both local (pointwise) and global rates being considered.  For the log-concave maximum likelihood estimator, the following result, a special case of \citet[][Theorem~2.1]{BRW2009}, establishes the pointwise rates of convergence in the case $d=1$:
\begin{thm}[\citet{BRW2009}]
Let $X_1,\ldots,X_n \stackrel{\mathrm{iid}}{\sim} f_0 \in \mathcal{F}_1$, let $f_0(x_0) > 0$ and suppose that $\phi_0 := \log f_0$ is twice continuously differentiable in a neighbourhood of $x_0$ with $\phi_0''(x_0) < 0$.  Let $W$ be a standard two-sided Brownian motion on $\mathbb{R}$, and let 
\[
Y(t) := \left\{ \begin{array}{ll} \int_0^t W(s) \, ds - t^4 & \mbox{for $t \geq 0$} \\
\int_t^0 W(s) \, ds - t^4 & \mbox{for $t < 0$.} \end{array} \right.
\]
Then the log-concave maximum likelihood estimator $\hat{f}_n$ satisfies
\begin{equation}
\label{Eq:Limit}
n^{2/5}\{\hat{f}_n(x_0) - f_0(x_0)\} \stackrel{d}{\rightarrow} \biggl(\frac{f_0(x_0)^3|\phi_0''(x_0)|}{24}\biggr)^{1/5}H''(0),
\end{equation}
where $\{H(t):t \in \mathbb{R}\}$ is the `lower invelope' process of $Y$, so that $H(t) \leq Y(t)$ for all $t \in \mathbb{R}$, $H''$ is concave and $H(t) = Y(t)$ if the slope of $H''$ decreases strictly at~$t$.
\end{thm}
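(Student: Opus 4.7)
The plan is to follow the local empirical-process approach of \citet{BRW2009}, which builds on the classical strategy of Groeneboom for shape-constrained estimators. The proof has three stages: (i) localise the variational characterisation of $\hat{f}_n$ at scale $n^{-1/5}$ around $x_0$; (ii) establish functional convergence of the rescaled empirical process to the canonical process $Y$; and (iii) apply a continuous-mapping argument for the \emph{lower invelope} to read off the limit distribution of $\hat{f}_n(x_0)$.

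For stage (i), let $\hat{F}_n$ denote the distribution function of $\hat{f}_n$ and set $x = x_0+n^{-1/5}t$. Introduce the rescaled twice-integrated processes
\[
\hat{H}_n(t) := n^{4/5}\!\!\int_{x_0}^{x_0+n^{-1/5}t}\!\hat{F}_n(s)\,ds - A_n - B_n t - C_n t^2 - D_n t^3, \quad Y_n(t) := n^{4/5}\!\!\int_{x_0}^{x_0+n^{-1/5}t}\!\mathbb{F}_n(s)\,ds - A_n - B_n t - C_n t^2 - D_n t^3,
\]
with the polynomial centring absorbing the diverging constant, linear and cubic pieces of the Taylor expansion of $n^{4/5}\!\int\!\int F_0$, and with the quadratic coefficient $C_n$ fixed so that $\hat{H}_n''(0) = n^{2/5}\{\hat{f}_n(x_0) - f_0(x_0)\}$. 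Theorem~\ref{Thm:1D}, applied after a further integration in $x$, then translates into the key variational relations $\hat{H}_n(t) \leq Y_n(t)$ for all $t$, with equality at the rescaled knots of $\log\hat{f}_n$, while log-concavity of $\hat{f}_n$ carries over to log-concavity of the scaled density $n^{2/5}\hat{f}_n(x_0+n^{-1/5}\cdot)$.

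For stage (ii), Donsker's theorem and Brownian scaling give $n^{4/5}\!\int_{x_0}^{x_0+n^{-1/5}\cdot}(\mathbb{F}_n-F_0)(s)\,ds \Rightarrow f_0(x_0)^{1/2}\!\int_0^{\cdot}W(s)\,ds$ uniformly on compacts, which contributes the stochastic Brownian-integral part of the limit. The deterministic Taylor remainder in $n^{4/5}\!\int\!\int F_0$ contributes a quartic bias; the crucial identification here, which I expect to carry out using the self-adaptation of the MLE to the affine part of $\phi_0$ (the log-concave class is closed under affine tilts of the log-density, so estimating $\phi_0'(x_0)$ is a lower-order problem), is that the dominant bias is governed not by $f_0''(x_0)$ but by $f_0(x_0)\phi_0''(x_0)$. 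Combining these limits yields a process proportional to $f_0(x_0)^{1/2}\!\int_0^t W(s)\,ds - \tfrac{1}{24}f_0(x_0)|\phi_0''(x_0)|t^4$, and a coupled rescaling of time and space by the factor $\{f_0(x_0)^3|\phi_0''(x_0)|/24\}^{1/5}$ brings the limit to the canonical form $Y$, with the same factor appearing as the prefactor in the conclusion.

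For stage (iii), the inherited characterisation in the limit reads: $H \leq Y$ everywhere, with equality at the points where the slope of $H''$ strictly decreases, and $H''$ concave — the local log-concavity constraint collapses in the limit to mere concavity of the second derivative. These are exactly the defining conditions of the lower invelope of $Y$. A continuous-mapping theorem for the invelope operation then gives $\hat{H}_n''(0) \Rightarrow H''(0)$, and together with the rescaling of stage (ii) this produces \eqref{Eq:Limit}. The main obstacle is precisely this continuous-mapping step: it demands tightness of $\hat{H}_n''$ on compacts, uniform control of the spacing of the knots of $\log\hat{f}_n$ on the $n^{-1/5}$ scale (so that the rescaled touching set does not degenerate), and almost-sure regularity properties of $Y$ guaranteeing that the touching set of $H$ with $Y$ is locally discrete with strict decreases of $H''$ at each touching point — technically intricate but now classical ingredients in the theory of $n^{-2/5}$-rate shape-constrained estimators.
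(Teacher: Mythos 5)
This is a result the paper states without proof, as a citation of \citet{BRW2009}, so there is no internal argument to compare against; judged against the source, your outline reproduces its architecture faithfully. The three-stage plan is the right one: the variational inequality from Theorem~\ref{Thm:1D} (with $F = \mathbb{F}_n$) integrates to $\int_{-\infty}^{x}\hat{F}_n \leq \int_{-\infty}^{x}\mathbb{F}_n$ with equality at knots, polynomial recentring up to cubic order preserves both the inequality and the touching set, and the concavity of $\log\hat{f}_n$ linearises to concavity of the rescaled $\hat{H}_n''$. You have also put your finger on the genuinely subtle point, namely why the quartic drift carries $f_0(x_0)\phi_0''(x_0)$ rather than $f_0''(x_0)$: the Taylor expansion of the twice-integrated $F_0$ contributes $f_0''(x_0)t^4/24$, but expanding $e^{\hat{\phi}_n}$ about the tangent line of $\phi_0$ at $x_0$ produces a compensating deterministic quartic $f_0(x_0)\phi_0'(x_0)^2t^4/24$, and since $f_0'' = f_0(\phi_0'' + (\phi_0')^2)$ the difference is exactly $f_0(x_0)\phi_0''(x_0)/24$. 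Your ``affine self-adaptation'' heuristic is the correct explanation of this cancellation.

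What keeps this an outline rather than a proof is concentrated in stage (iii), and you have been candid about it. First, the existence and uniqueness of the lower invelope $H$ of $Y$ is itself a substantial theorem (the analogue for the convex case occupies a companion paper of Groeneboom, Jongbloed and Wellner), and without it the limit in \eqref{Eq:Limit} is not even well defined. Second, there is no off-the-shelf ``continuous-mapping theorem for the invelope operation''; \citet{BRW2009} instead establish tightness of $(\hat{H}_n, \hat{H}_n', \hat{H}_n'', \hat{H}_n''')$ on compacts and identify every subsequential limit through the characterising conditions together with the uniqueness of $H$. That tightness argument in turn requires showing that $\log\hat{f}_n$ has knots within $O_p(n^{-1/5})$ of $x_0$ on both sides, which is where the local-curvature hypothesis $\phi_0''(x_0) < 0$ is actually consumed; this knot-gap bound is the technical heart of the proof and is only named, not executed, in your sketch. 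None of these gaps reflects a wrong idea, but each would need to be filled before the argument is complete.
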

The non-standard limiting distribution is characteristic of shape-constrained estimation problems.  \citet{BRW2009} study the more general case where more than two derivatives of $\phi_0$ may vanish at $x_0$, in which case a faster rate is obtained; they also study the joint convergence of $\hat{f}_n$ with its derivative $\hat{f}_n'$.  The pointwise convergence rate in $d$ dimensions remains an open problem, though \citet{SereginWellner2010} obtained a minimax lower bound for pointwise estimation at $x_0$ with respect to absolute error loss of order $n^{-2/(d+4)}$, provided $\phi_0$ is twice continuously differentiable in a neighbourhood of $x_0$ and the determinant of the Hessian matrix of $\phi_0$ at $x_0$ does not vanish.  This is the familiar rate attained by, e.g.~kernel density estimators, under similar smoothness conditions but without the log-concavity assumption.

An interesting feature of~\eqref{Eq:Limit} is that the limiting distribution depends in a complicated way on the unknown true density.  This makes it challenging to apply this result directly to construct confidence intervals for $f_0(x_0)$.  However, in the special case where $x_0$ is the mode of $f_0$, \citet{DossWellner2016} have recently proposed an approach for confidence interval construction based on comparing the log-concave MLE at $x_0$ with the constrained MLE where the mode of the density is fixed at $m$, say.  The key observation is that, under the null hypothesis, the likelihood ratio statistic is asymptotically pivotal.

We now turn to global rates of convergence, and write $d_{\mathrm{H}}^2(f,g) := \int_{\mathbb{R}^d} (f^{1/2} - g^{1/2})^2$ for the squared Hellinger distance between densities $f$ and $g$.  The same rate as for pointwise estimation had been expected in the light of the facts that any concave function on $\mathbb{R}^d$ is twice differentiable (Lebesgue) almost everywhere in its domain \citep{Aleksandrov1939}, and that for twice continuously differentiable functions, concavity is equivalent to a second derivative condition, namely that the Hessian matrix is non-positive definite.  The following minimax lower bound therefore came as a surprise:
\begin{thm}[\citet{KimSamworth2016}]
\label{Thm:LB}
Let $X_1,\ldots,X_n \stackrel{\mathrm{iid}}{\sim} f_0 \in \mathcal{F}_d$, and let $\tilde{F}_n$ denote the set of all estimators of $f_0$ based on $X_1,\ldots,X_n$.  Then for each $d \in \mathbb{N}$, there exists $c_d > 0$ such that
\[
\inf_{\tilde{f}_n \in \tilde{\mathcal{F}}_n} \sup_{f_0 \in \mathcal{F}_d} \mathbb{E}_{f_0}d_{\mathrm{H}}^2(\tilde{f}_n,f_0) \geq \left\{ \begin{array}{ll} c_1 n^{-4/5} & \mbox{if $d=1$} \\
c_dn^{-2/(d+1)} & \mbox{if $d \geq 2$.} \end{array} \right.
\]
\end{thm}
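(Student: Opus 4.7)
The plan is to prove both bounds via Assouad's lemma, constructing a $2^m$-fold hypothesis family $\{f_\theta\}_{\theta \in \{0,1\}^m} \subset \mathcal{F}_d$ in which flipping a single coordinate of $\theta$ costs squared Hellinger distance of order $n^{-1}$ (so the product measures at neighbouring hypotheses remain at bounded Hellinger affinity), while the number of coordinates $m$ is as large as possible.  Assouad then delivers a lower bound of order $m/n$, and optimising $m$ against the per-bit cost produces the two stated rates.

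For $d \geq 2$, I would perturb the uniform density on the Euclidean unit ball $B$: fix $\delta \asymp n^{-2/(d+1)}$, pack $m \asymp \delta^{-(d-1)/2}$ disjoint spherical caps of angular radius $\asymp \delta^{1/2}$ on $\partial B$, and for each $\theta$ form the convex body
\[
K_\theta := B \cap \bigcap_{i:\theta_i = 1} H_i^-,
\]
where $H_i^-$ is the closed halfspace bounding cap $i$ from outside.  Because each $K_\theta$ is an intersection of halfspaces with $B$, the density $f_\theta := \mathbbm{1}_{K_\theta}/\mathrm{vol}(K_\theta)$ lies in $\mathcal{F}_d$.  A standard spherical cap computation gives $|\mathrm{vol}(K_\theta) - \mathrm{vol}(K_{\theta'})| \asymp \delta^{(d+1)/2}$ whenever $\theta,\theta'$ differ in a single coordinate, hence per-bit squared Hellinger distance $\asymp \delta^{(d+1)/2} \asymp n^{-1}$, and Assouad yields a lower bound of order $m \cdot \delta^{(d+1)/2} \asymp \delta \asymp n^{-2/(d+1)}$.

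For $d=1$ a cap construction is unavailable (the one-dimensional sphere has only two points), so the plan is to perturb a smooth base within the concavity budget of its log-density.  Fix a $C^2$ strictly log-concave base $f_0 = e^{\phi_0}$ on $[0,1]$ with $\phi_0'' \leq -c_0 < 0$, set $w \asymp n^{-1/5}$ and $m \asymp w^{-1}$, and take $C^2$ bumps $g_1,\ldots,g_m$ with disjoint supports of length $w$ and amplitudes of order $w^2$, scaled so that $\|g_i''\|_\infty \leq c_0/2$.  Then $\phi_\theta := \phi_0 + \sum_i \theta_i g_i$ (minus an $O(1)$ normalisation constant) remains strictly concave by a pointwise check on $\phi_\theta''$, so $e^{\phi_\theta} \in \mathcal{F}_1$; a direct Taylor expansion gives per-bit squared Hellinger of order $w^5 \asymp n^{-1}$, and Assouad yields risk of order $m \cdot w^5 \asymp w^4 \asymp n^{-4/5}$.

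The hard part will be the $d \geq 2$ case, and there are three places to be careful.  First, one must verify that $\asymp \delta^{-(d-1)/2}$ mutually disjoint caps actually fit on $\partial B$ with enough separation that slicing them off does not interact across caps (so the volume decrements add cleanly).  Second, one needs matching two-sided estimates for spherical cap volumes in order to pin down both the per-coordinate Hellinger cost and the total mass of $K_\theta$.  Third, uniforms on nested supports have infinite Kullback--Leibler divergence, so the version of Assouad's lemma to invoke is the one formulated in terms of Hellinger affinity (equivalently, total variation) between the product measures $P_\theta^{\otimes n}$, with the affinity controlled via the tensorisation identity $\int \sqrt{dP_\theta^{\otimes n}\, dP_{\theta'}^{\otimes n}} = (1 - d_H^2(f_\theta,f_{\theta'})/2)^n$, which stays bounded away from $0$ precisely because the per-bit cost is $\asymp n^{-1}$.
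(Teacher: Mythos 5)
Your proposal is correct and is essentially the argument behind the cited result: the survey gives no proof of Theorem~\ref{Thm:LB}, but its remark that uniform densities on convex bodies make the support the binding obstruction for $d \geq 2$ is exactly what your spherical-cap perturbation of the unit ball realises, and this Assouad construction (disjoint caps of height $\delta \asymp n^{-2/(d+1)}$, per-bit squared Hellinger $\asymp \delta^{(d+1)/2} \asymp n^{-1}$, $m \asymp \delta^{-(d-1)/2}$ bits, with the Hellinger-affinity form of Assouad since the Kullback--Leibler divergence between uniforms on nested supports is infinite) matches the proof in \citet{KimSamworth2016}, as does the $d=1$ smooth-bump perturbation within the concavity budget of a strictly log-concave base. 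The three caution points you flag are the right ones and all go through by standard cap-volume and packing estimates.
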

Theorem~\ref{Thm:LB} yields the expected lower bound when $d=1,2$ (note that $2/(d+1) = 4/(d+4) = 2/3$ when $d=2$).  However, it also reveals that log-concave density estimation in three or more dimensions is fundamentally more challenging in this minimax sense than estimating a density with two bounded derivatives.  The reason is that although log-concave densities are twice differentiable almost everywhere, they can be badly behaved (in particular, discontinuous) on the boundary of their support; recall that uniform densities on convex, compact sets in $\mathbb{R}^d$ belong to $\mathcal{F}_d$.  It turns out that it is the difficulty of estimating the support of the density that drives the rate in these higher dimensions. 

The following complementary result provides the corresponding global rate of convergence for the log-concave MLE in squared Hellinger distance in low-dimensional cases.
\begin{thm}[\citet{KimSamworth2016}]
\label{Thm:UB}
Let $X_1,\ldots,X_n \stackrel{\mathrm{iid}}{\sim} f_0 \in \mathcal{F}_d$, and let $\hat{f}_n$ denote the log-concave MLE based on $X_1,\ldots,X_n$.  Then
\[
\sup_{f_0 \in \mathcal{F}_d} \mathbb{E}_{f_0}d_{\mathrm{H}}^2(\tilde{f}_n,f_0) = \left\{ \begin{array}{ll} O(n^{-4/5}) & \mbox{if $d=1$} \\
O(n^{-2/3}\log n) & \mbox{if $d=2$} \\
O(n^{-1/2}\log n) & \mbox{if $d=3$}. \end{array} \right.
\]
\end{thm}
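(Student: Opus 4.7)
The plan is to invoke the standard bracketing-entropy machinery for maximum likelihood estimators (Wong and Shen, 1995; van de Geer, 2000). From the defining property of $\hat f_n$ and the duality between MLE and Kullback--Leibler minimisation, one has a basic inequality of the form $d_{\mathrm{H}}^2(\hat f_n, f_0) \leq C(\mathbb{P}_n - P)\bigl\{(\hat f_n/f_0)^{1/2} - 1\bigr\}$ for some absolute constant $C$. A standard peeling argument reduces the rate determination to finding the smallest $\delta_n$ satisfying
\[
\int_{0}^{\delta_n} \sqrt{\log N_{[]}\bigl(u, \mathcal{F}_d^{\dagger}, d_{\mathrm{H}}\bigr)} \, du \leq C\sqrt{n}\, \delta_n^2,
\]
where $\mathcal{F}_d^{\dagger} \subseteq \mathcal{F}_d$ is a suitably truncated subclass containing both $f_0$ and $\hat f_n$ with probability $1 - o(1)$.

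The first preparatory step is tail truncation. The exponential tail bound of \citet{CuleSamworth2010} stated in the excerpt, together with a version for $\hat f_n$ valid uniformly in $n$ with high probability, implies that one may take $\mathcal{F}_d^{\dagger}$ to consist of log-concave densities supported on a ball of radius $R_n$ of order $\log n$ and uniformly bounded above by an $M_n$ of polylogarithmic order, at the cost of a Hellinger error of order $n^{-1}$. This reduction is essential for translating entropy bounds on the log scale into entropy bounds on the density scale without losing exponents.

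The main work, and the main obstacle, is bounding the bracketing entropy of $\mathcal{F}_d^{\dagger}$. For $d = 1$, the map $f \mapsto \log f$ identifies this class with bounded concave functions on a bounded interval, for which Bronshtein's classical bound gives $L_\infty$ bracketing entropy of order $\epsilon^{-1/2}$; transferring this to Hellinger brackets on the density scale (using the uniform upper and lower bounds provided by the truncation) preserves the exponent up to logarithmic factors, and the balance equation yields a rate of order $n^{-4/5}$. For $d = 2$, the analogous Bronshtein-type bound for bounded concave functions on a planar convex set has entropy of order $\epsilon^{-1}$, giving the rate $n^{-2/3}\log n$, the logarithmic factor arising from the growth of $R_n$ and $M_n$. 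For $d = 3$, the interior smoothness bound of order $\epsilon^{-3/2}$ alone would give the faster rate $n^{-4/7}$; however, $\mathcal{F}_3$ also contains uniform densities on arbitrary convex bodies, and the bracketing entropy of convex bodies in $\mathbb{R}^3$ under Nikodym distance is of order $\epsilon^{-2}$. This boundary contribution dominates, yielding an overall Hellinger bracketing entropy of order $\epsilon^{-2}$, a borderline case for the entropy integral in which the balance equation picks up an unavoidable logarithmic factor and produces the rate $n^{-1/2}\log n$.

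The hardest piece is the bracketing entropy bound in $d \geq 3$, where one must quantify how the combinatorics of unknown convex supports takes over from the smoothness of the log-density on a fixed domain. The crossover happens precisely at $d = 3$: the interior Bronshtein exponent $d/2$ is beaten by the boundary exponent $d - 1$, and it is this interplay that explains the minimax exponents $4/5, 2/3, 1/2$ in the three cases. A secondary technical issue, delicate but essentially routine, is the passage from $L_\infty$-brackets for $\log f$ to Hellinger brackets for $f$; this relies on uniform upper and lower bounds within the truncated class, tying the final estimate back to the preliminary tail-control step.
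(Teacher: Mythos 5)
Your overall strategy coincides with the one the paper sketches for \citet{KimSamworth2016}: a basic inequality plus peeling reduces everything to a Hellinger bracketing entropy bound for a suitably standardised subclass of $\mathcal{F}_d$, and your rate calculus (exponent $\alpha=1/2,1,2$ giving $n^{-4/5}$, $n^{-2/3}$, and the borderline $n^{-1/2}\log n$) is correct. The genuine gap is that the entropy bound itself --- which is the entire content of the theorem --- is asserted rather than proved, and the heuristic you offer for it would not survive an attempt to make it rigorous. Your plan is to truncate so that the class becomes ``bounded concave functions on a bounded domain'' and then invoke Bronshtein plus, separately, the entropy of convex bodies. But no truncation gives you a uniform \emph{lower} bound on $f$ on its support: even for $d=1$, $\log f$ can tend to $-\infty$ at the boundary of $\mathrm{dom}(\log f)$ (e.g.\ $f(x)=2x$ on $[0,1]$), so the class of log-densities is not a class of uniformly bounded, uniformly Lipschitz concave functions, and the ``essentially routine'' passage from $L_\infty$-brackets for $\log f$ to Hellinger brackets for $f$ that your last sentence relies on is exactly what fails. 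Likewise, the entropy of the full class is not the superposition of an ``interior smoothness'' term and a ``boundary support'' term: the difficulty is the coupling between the unknown convex domain and a concave function that is unbounded below near its boundary. The actual argument handles this by approximating the domain by convex polytopes, triangulating into simplices, and using the Gao--Wellner Hellinger bracketing bounds for concave-transformed densities on simplices; the fact that a convex polytope with $m$ vertices can be triangulated into $O(m)$ simplices only when $d\leq 3$ is what confines the theorem to $d\leq 3$, a point your crossover-of-exponents heuristic does not explain (that heuristic concerns the size of the rate, not the failure of the proof technique for $d\geq 4$).

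Two smaller inaccuracies: the metric entropy of convex bodies in $\mathbb{R}^3$ under the Nikodym metric at scale $\epsilon$ is of order $\epsilon^{-1}$, not $\epsilon^{-2}$; the exponent $2$ you want appears only after converting to Hellinger scale via $d_{\mathrm{H}}^2\asymp$ Nikodym, so you should state which scale you mean. And for $d=1$ the theorem has no logarithmic factor, so a truncation at radius and height growing polylogarithmically, with entropy bounds ``up to logarithmic factors,'' would prove a weaker statement than the one claimed; obtaining the clean $n^{-4/5}$ requires tracking those factors and showing they cancel.
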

Thus the log-concave MLE attains the minimax optimal rate in terms of squared Hellinger risk when $d=1$, and attains the minimax optimal rate up to logarithmic factors when $d=2,3$.  The proofs of these results rely on empirical process theory and delicate bracketing entropy bounds for the relevant class of log-concave densities, made more complicated by the fact that the domains of the log-densities can be an arbitrary $d$-dimensional closed, convex set.  The argument proceeds by approximating these domains by convex polygons, which can be triangulated into simplices, and appropriate bracketing entropy bounds for concave functions on such domains are known \citep[e.g.][]{GaoWellner2015}.  Critically, when $d \leq 3$, a convex polygon with $m$ vertices can be triangulated into $O(m)$ simplices; however, when $d \geq 4$, such results from discrete convex geometry are not available, which explains why no rate of convergence has yet been obtained in such cases.  We mention, however, that lower bounds on the bracketing entropy obtained in \citet{KimSamworth2016} strongly suggest, but do not prove, that the log-concave MLE will be rate-suboptimal when $d \geq 4$.

Although Theorem~\ref{Thm:UB} provides strong guarantees on the worst case performance of the log-concave MLE in low-dimensional cases, it ignores one of the appealing features of the estimator, namely its potential to adapt to certain characteristics of the unknown true density.  \citet{DumbgenRufibach2009} obtained the first such result in the case $d=1$.  Recall that given an interval $I$, $\beta \in [1,2]$ and $L > 0$, we say $h:\mathbb{R} \rightarrow \mathbb{R}$ belongs to the H\"older class $\mathcal{H}_{\beta,L}(I)$ if for all $x,y \in I$, we have
\begin{alignat*}{2}
 |h(x) - h(y)| &\leq L|x-y|,  &&\text{if} \ \beta=1 \\
|h'(x) - h'(y)| &\leq L|x-y|^{\beta-1}, \quad &&\text{if} \ \beta > 1.
\end{alignat*}
\begin{thm}[\citet{DumbgenRufibach2009}]
Let $X_1,\ldots,X_n \stackrel{\mathrm{iid}}{\sim} f_0 \in \mathcal{F}_1$, and assume that $\phi_0 := \log f_0 \in \mathcal{H}_{\beta,L}(I)$ for some $\beta \in [1,2]$, $L > 0$ and compact interval $I \subseteq \mathrm{int}\bigl(\mathrm{dom}(\phi_0)\bigr)$.  Then
\[
\sup_{x_0 \in I} |\hat{f}_n(x_0) - f_0(x_0)| = O_p\biggl(\Bigl(\frac{\log n}{n}\Bigr)^{\beta/(2\beta+1)}\biggr).
\]
\end{thm}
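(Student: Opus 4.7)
The plan is to combine the one-dimensional characterisation in Theorem~\ref{Thm:1D} with a local modulus-of-continuity bound for the empirical process, in the spirit of the original D\"umbgen--Rufibach argument. Fix $x_0 \in I$ and set $h_n := (\log n/n)^{1/(2\beta+1)}$. Let $[\tau_-,\tau_+]$ denote the maximal interval on which $\log \hat f_n$ is affine and which contains $x_0$; its endpoints are consecutive knots of $\log \hat f_n$ and therefore lie in $\mathcal{S}(\log \hat f_n)$. Writing $\hat F_n$ for the cdf of $\hat f_n$ and $\mathbb{F}_n$ for the empirical cdf of $X_1,\ldots,X_n$, applying Theorem~\ref{Thm:1D} to $P = \mathbb{P}_n$ at $\tau_-$ and $\tau_+$ and subtracting yields the integral identity
\[
\int_{\tau_-}^{\tau_+} \hat F_n(t)\,dt \;=\; \int_{\tau_-}^{\tau_+} \mathbb{F}_n(t)\,dt,
\]
which, together with the log-affine form $\log \hat f_n(x) = \hat\alpha + \hat b\, x$ on $[\tau_-,\tau_+]$, pins down the pair $(\hat\alpha,\hat b)$ in terms of $\mathbb{F}_n$ restricted to this interval.

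The two further inputs are deterministic and stochastic. Deterministically, $\phi_0 \in \mathcal{H}_{\beta,L}(I)$ implies that on any subinterval $J \subseteq I$ with $|J| \le h_n$, $\phi_0$ differs in sup norm from its affine interpolant (the chord, if $\beta=1$; the tangent, if $\beta>1$) by $O(L|J|^\beta)$, so $f_0$ is approximated within the same multiplicative accuracy by a log-affine density on $J$. Stochastically, a chaining (or Talagrand-type) argument for the half-line class supplies the uniform local modulus
\[
\sup_{a,b \in I,\ b-a \le \eta} \bigl|(\mathbb{F}_n - F_0)(b) - (\mathbb{F}_n - F_0)(a)\bigr| \;=\; O_p\!\bigl(\sqrt{\eta \log n/n}\bigr).
\]
Plugging these two inputs into the integral identity, and using that $\inf_I f_0 > 0$ by compactness of $I$ and continuity of $\phi_0$, will give on any interval of affinity of length at most $h_n$ a pointwise bound
\[
|\hat f_n(x_0) - f_0(x_0)| \;=\; O_p(h_n^\beta) + O_p\!\bigl(\sqrt{\log n/(n h_n)}\bigr),
\]
and the choice $h_n = (\log n/n)^{1/(2\beta+1)}$ balances the two terms at the advertised rate $(\log n/n)^{\beta/(2\beta+1)}$.

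The hard part will be showing that $\tau_+ - \tau_-$ is $O_p(h_n)$ uniformly in $x_0 \in I$. If the interval of affinity were much longer than $h_n$, then on an interior subinterval the log-affine MLE could not track the genuinely curved $\phi_0$, and the integral identity at $\tau_\pm$ would force $\hat F_n - F_0$ to oscillate by more than the local empirical-process modulus permits; making this quantitative requires a careful bending estimate on concave envelopes under the H\"older assumption, and this is the technical core of the argument. Once $\tau_+ - \tau_-$ is controlled, a finite covering of $I$ by intervals of length $h_n$, combined with the continuity (indeed piecewise log-affinity) of $\hat f_n$ and the continuity of $f_0$, will upgrade the pointwise bound to the supremum statement and complete the proof.
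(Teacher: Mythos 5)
The paper you are working from is a survey: it states this result with a citation to D\"umbgen and Rufibach (2009) and gives no proof, so there is nothing internal to compare against. Measured against the actual argument in the cited reference, your outline identifies the right ingredients --- the characterisation of Theorem~\ref{Thm:1D} applied at consecutive knots, the H\"older bias bound of order $|J|^\beta$ on short intervals, the local oscillation bound $O_p(\sqrt{\eta\log n/n})$ for the empirical process, and the bias--variance balance at $h_n=(\log n/n)^{1/(2\beta+1)}$ --- and this is indeed how the proof in the literature is organised.

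However, the proposal has a genuine gap that you yourself flag but do not close: the uniform bound $\sup_{x_0\in I}(\tau_+-\tau_-)=O_p(h_n)$ on the spacing of knots of $\log\hat f_n$ over $I$. This is not a routine supporting lemma; it is the technical core of the D\"umbgen--Rufibach proof (their knot-distance theorem), and without it the rest of the argument does not start, since the integral identity between $\tau_-$ and $\tau_+$ is useless if the affine piece containing $x_0$ has length of constant order. Making your heuristic quantitative requires a concavity ``bending'' lemma (a lower bound on how much a concave function that is affine on a long interval must deviate from a strictly curved $\phi_0$, expressed through the second-order behaviour of $\int(\hat F_n-F_0)$) combined with the empirical-process modulus, and it must be carried out uniformly over $x_0\in I$, including near the endpoints of $I$ where one has to work on a slightly enlarged compact subinterval of $\mathrm{int}(\mathrm{dom}(\phi_0))$. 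A second, smaller inaccuracy: the single identity $\int_{\tau_-}^{\tau_+}\hat F_n=\int_{\tau_-}^{\tau_+}\mathbb{F}_n$ cannot by itself ``pin down the pair $(\hat\alpha,\hat b)$''; one equation does not determine two parameters. You need the companion consequence of the characterisation that $\hat F_n(\tau)=\mathbb{F}_n(\tau)+O(n^{-1})$ at every knot $\tau$ (which follows because $x\mapsto\int_{-\infty}^x(\hat F_n-\mathbb{F}_n)$ is nonpositive and touches zero at $\tau$, so its one-sided derivatives there have controlled signs), and it is the combination of these two relations, with errors controlled by the local modulus, that determines the affine piece. As written, the proposal is a correct roadmap with the decisive step left as an assertion rather than a proof.
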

Here the log-concave MLE is adapting to unknown smoothness.  When measuring loss in the supremum norm, the need to restrict attention to a compact interval in the interior of support of $f_0$ is suggested by the right-hand plot in Figure~\ref{Fig:1D}.    

Other adaptation results are motivated by the thought that since the log-concave MLE is piecewise affine, we might hope for faster rates of convergence in cases where $\log f_0$ is made up of a relatively small number of affine pieces.  We now describe two such results.  For $k \in \mathbb{N}$ we define $\mathcal{F}^k$ to be the class of log-concave densities $f$ on $\mathbb{R}$ for which $\log f$ is $k$-affine in the sense that there exist intervals $I_1,\ldots,I_k$ such that $f$ is supported on $I_1 \cup \ldots \cup I_k$, and $\log f$ is affine on each~$I_j$.  In particular, densities in $\mathcal{F}^1$ are uniform or (possibly truncated) exponential, and can be parametrised as
\[
f_{\alpha,s_1,s_2}(x) := \left\{ \begin{array}{ll} \frac{1}{s_2-s_1}\mathbbm{1}_{\{x \in [s_1,s_2]\}} & \mbox{if $\alpha=0$} \\
\frac{\alpha}{e^{\alpha s_2}-e^{\alpha s_1}} e^{\alpha x} \mathbbm{1}_{\{x \in [s_1,s_2]\}} & \mbox{if $\alpha \neq 0$,} \end{array} \right. 
\]
for $(\alpha,s_1,s_2) \in \mathcal{T} := (\mathbb{R} \times \mathcal{T}_0) \, \cup \, \bigl((0,\infty) \times \{-\infty\} \times \mathbb{R}\bigr) \, \cup \, \bigl((-\infty,0) \times \mathbb{R} \times \{\infty\}\bigr)$, where $\mathcal{T}_0 := \{(s_1,s_2) \in \mathbb{R}^2:s_1 < s_2\}$.  Define a continuous, strictly increasing function $\rho:\mathbb{R} \rightarrow (0,\infty)$ by
\begin{equation}
\label{Eq:rho}
\rho(x) := \left\{ \begin{array}{ll} \frac{2e^x(x-1)-x^2+2}{2e^x-2-2x-x^2} & \mbox{if $x \neq 0$} \\
2 & \mbox{if $x = 0$}; \end{array} \right.
\end{equation}
cf.~Figure~\ref{Fig:rho}.  It can be shown that $\rho(x) \leq \max\{\rho(2),\rho(x)\} \leq \max(3,2x)$ for all $x \in \mathbb{R}$.
\begin{figure}
\includegraphics[width=0.5\textwidth]{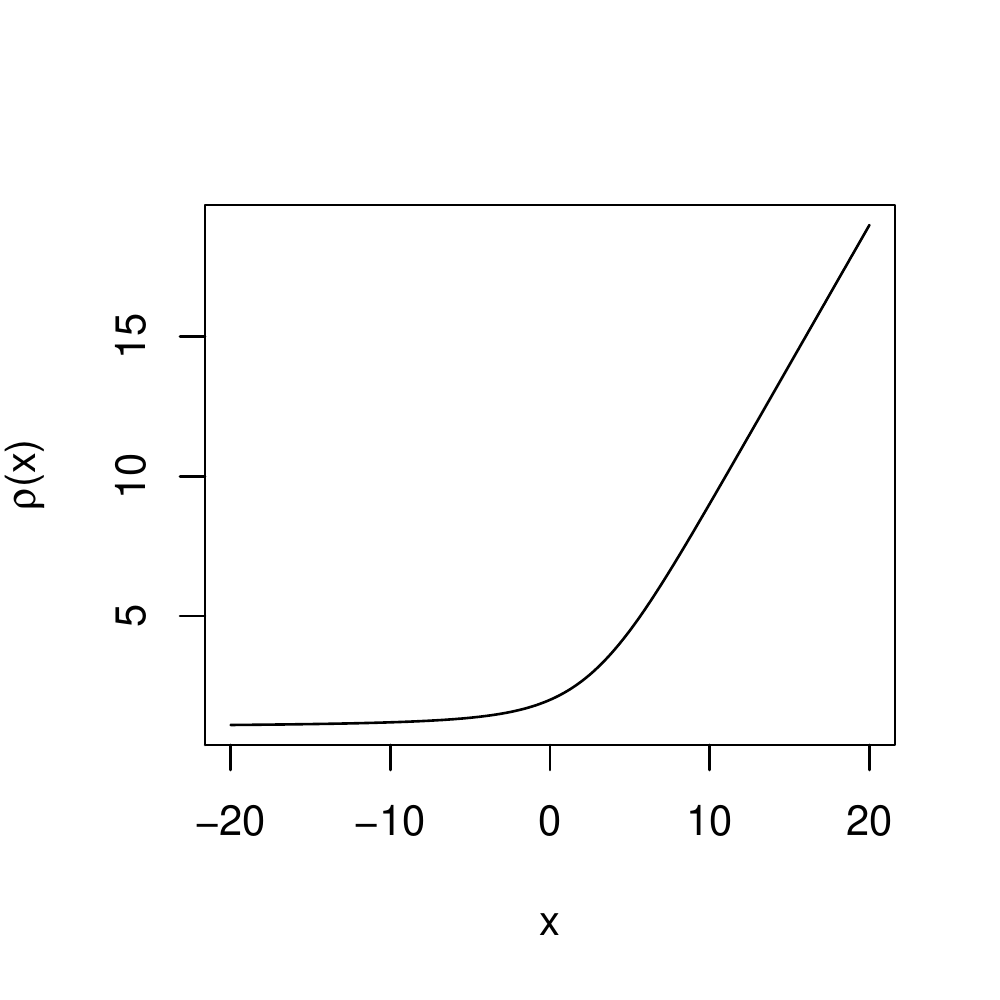}
\caption{\label{Fig:rho}The function $\rho$ defined in~\eqref{Eq:rho}.} 
\end{figure}
\begin{thm}[\citet{KGS2016}]
\label{Thm:Adap1}
Let $X_1,\ldots,X_n \stackrel{\mathrm{iid}}{\sim} f_{\alpha,s_1,s_2} \in \mathcal{F}^1$ with $n \geq 5$, and let $\hat{f}_n$ denote the log-concave MLE.  Then, writing $\kappa^* := \alpha(s_2-s_1)$,
\[
\mathbb{E}_{f_0}d_{\mathrm{TV}}(\hat{f}_n,f_0) \leq \frac{\min\{2\rho(|\kappa^*|),6\log n\}}{n^{1/2}}.
\]
\end{thm}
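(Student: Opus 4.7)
The plan is to combine the basic inequality for the MLE with a two-regime analysis that exploits the (nearly) parametric nature of $\mathcal{F}^1$. Since $f_0 \in \mathcal{F}^1 \subseteq \mathcal{F}_1$, the defining property of $\hat{f}_n$, together with $\int f_0\log(\hat{f}_n/f_0)\,dx = -d_{\mathrm{KL}}^2(f_0,\hat{f}_n)$, yields
\[
d_{\mathrm{KL}}^2(f_0,\hat{f}_n) \leq (\mathbb{P}_n - P)\bigl[\log(\hat{f}_n/f_0)\bigr],
\]
and Pinsker's inequality $2 d_{\mathrm{TV}}^2(\hat{f}_n,f_0) \leq d_{\mathrm{KL}}^2(f_0,\hat{f}_n)$ then reduces the problem to bounding the empirical-process term on the right in expectation. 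A key structural simplification is that $\mathrm{supp}(\hat{f}_n) = [X_{(1)},X_{(n)}] \subseteq [s_1,s_2] = \mathrm{supp}(f_0)$, so on the common support $\log f_0$ is genuinely affine and $\log(\hat{f}_n/f_0)$ is itself concave.

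For the first bound $2\rho(|\kappa^*|)/n^{1/2}$, I would compare $\hat{f}_n$ with the natural parametric competitor $\tilde{f}_n = f_{\hat{\alpha},X_{(1)},X_{(n)}} \in \mathcal{F}^1$ obtained by maximising $L(\cdot,\mathbb{P}_n)$ over the exponential family with endpoints fixed at $X_{(1)},X_{(n)}$. Since $\mathcal{F}^1$ is a regular three-parameter family, classical theory gives $\mathbb{E}\,d_{\mathrm{TV}}(\tilde{f}_n,f_0) \lesssim n^{-1/2}$ with constant governed by the inverse Fisher information of the exponential family, and a direct score computation around the true $\alpha$ produces precisely the factor $\rho(|\kappa^*|)$ in \eqref{Eq:rho}. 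A convex-analytic argument then transfers the bound from $\tilde{f}_n$ to $\hat{f}_n$: using that $\hat{f}_n$ beats $\tilde{f}_n$ in empirical log-likelihood while $\log\hat{f}_n$ is concave and $\log\tilde{f}_n$ is affine on the same interval $[X_{(1)},X_{(n)}]$, one can control the additional contribution of the (concave) deviation $\log(\hat{f}_n/\tilde{f}_n)$ by the analogous quantity for $\tilde{f}_n$ and exponential-tail concentration of $X_{(1)},X_{(n)}$ near $s_1,s_2$.

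For the complementary bound $6\log n/n^{1/2}$, which is needed only in the heavy-skew regime where $\rho(|\kappa^*|) \gtrsim \log n$, I would apply a peeling / localisation argument on Hellinger shells around $f_0$ combined with a bracketing entropy bound for log-concave densities supported on a subinterval of $[s_1,s_2]$ (in the spirit of \citet{GaoWellner2015}). Since such bracketing entropies are only polylogarithmic once one truncates $\log(\hat{f}_n/f_0)$ at a level of order $\log n$, a maximal inequality of van de Geer / Wellner type yields the $\sqrt{\log n / n}$ rate uniformly in $\kappa^*$, with the extra $\log n$ factor simultaneously absorbing the $L^\infty$-envelope of $\log(\hat{f}_n/f_0)$ and a union bound over the possible knots of $\hat{f}_n$.

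The main obstacle is the non-parametric flexibility of $\hat{f}_n$: although $\log f_0$ is a single affine piece, $\log\hat{f}_n$ may introduce arbitrarily many knots and be unbounded near the endpoints $X_{(1)},X_{(n)}$, so the crude likelihood comparison $L(\log\hat{f}_n,\mathbb{P}_n) \geq L(\log\tilde{f}_n,\mathbb{P}_n)$ must be sharpened into a quantitative statement about $d_{\mathrm{TV}}(\hat{f}_n,f_0)$. The explicit form of $\rho$ in \eqref{Eq:rho} emerges from optimising the Fisher-information / score-variance computation for the truncated exponential family with natural parameter $\kappa^*$, and producing the clean constants $2$ and $6$ requires carefully tracking the crossover at $\rho(|\kappa^*|) = 3\log n$ between the parametric and the uniform regimes.
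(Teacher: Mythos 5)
Your overall strategy diverges from the intended argument, and the two steps you leave as assertions are exactly where the proof lives. The key ingredient you are missing is the Marshall-type inequality~\eqref{Eq:Marshall}: writing $\kappa := \alpha(X_{(n)}-X_{(1)})$, one has the comparison $\sup_x|\hat{F}_n(x)-F_0(x)| \leq \rho(|\kappa|)\sup_x|\mathbb{F}_n(x)-F_0(x)|$. This is where $\rho$ actually comes from --- it is the constant in a distribution-function comparison in the spirit of \citet{Marshall1970}, not the output of a Fisher-information/score computation, and your claim that such a computation ``produces precisely the factor $\rho(|\kappa^*|)$'' is unsupported. Once \eqref{Eq:Marshall} is in hand, the structural fact you do observe (that $\log(\hat{f}_n/f_0)$ is concave on the common support, where $\log f_0$ is affine) is used not to run Pinsker but to note that $\hat{f}_n-f_0$ changes sign at most twice, whence $d_{\mathrm{TV}}(\hat{f}_n,f_0)\leq 2\sup_x|\hat{F}_n(x)-F_0(x)|$; combining this with monotonicity of $\rho$ (so $\rho(|\kappa|)\leq\rho(|\kappa^*|)$) and $\mathbb{E}\sup_x|\mathbb{F}_n(x)-F_0(x)| = O(n^{-1/2})$ gives the first bound. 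The second bound $6\log n/n^{1/2}$ also flows from \eqref{Eq:Marshall} together with $\rho(x)\leq\max(3,2x)$: the random quantity $|\kappa|=|\alpha|(X_{(n)}-X_{(1)})$ is of order $\log n$ with high probability for a (possibly truncated) exponential, so $\rho(|\kappa|)$ is of order $\log n$ uniformly in $\kappa^*$. The logarithm is a statement about the sample range, not about entropy.

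Your proposed substitutes would not deliver the theorem as stated. The ``convex-analytic argument'' transferring a total variation bound from the parametric fit $\tilde{f}_n$ to $\hat{f}_n$ is precisely the hard step and is not supplied: the likelihood comparison $L(\log\hat{f}_n,\mathbb{P}_n)\geq L(\log\tilde{f}_n,\mathbb{P}_n)$ gives no direct control on $d_{\mathrm{TV}}(\hat{f}_n,f_0)$, since $\hat{f}_n$ ranges over the full nonparametric class. And the bracketing-entropy/peeling route you propose for the second regime is the machinery behind the \emph{other} adaptation result (the $\frac{Ck}{n}\log^{5/4}(en/k)$ oracle inequality); the text explicitly notes that those techniques are completely different from the proof of Theorem~\ref{Thm:Adap1}. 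That route yields an unspecified universal constant and an extra polylogarithmic factor, so it cannot produce the explicit constants $2$ and $6$, nor the precise $\rho(|\kappa^*|)$ dependence, appearing here.
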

In fact, Theorem~\ref{Thm:Adap1} is a special case of the result given in~\citet{KGS2016}, which allows the true density $f_0$ to be arbitrary, and includes an additional approximation error term that measures the proximity of $f_0$ to the class $\mathcal{F}^1$.  An important consequence of~Theorem~\ref{Thm:Adap1} is the fact that if $|\alpha|$ is small, then the log-concave MLE can attain the parametric rate of convergence in total variation distance.  In particular, if $f_0$ is a uniform density on a compact interval (so that $\kappa^* = 0$), then $\mathbb{E}_{f_0}d_{\mathrm{TV}}(\hat{f}_n,f_0) \leq 4/n^{1/2}$; cf.~the right-hand plot of Figure~\ref{Fig:1D} again.  Interestingly, this behaviour is in stark constrast to that of the least squares convex regression estimator with respect to squared error loss in the random design problem where covariates are uniformly distributed on $[0,1]$ and the responses are uniform on $\{-1,1\}$: in that case, the regression function is zero, but the risk of the estimator is infinite \citep{BGS2015}!  The proof of Theorem~\ref{Thm:Adap1} relies on a version of Marshall's inequality for log-concave density estimation.  A special case of this result states that if $X_1,\ldots,X_n \stackrel{\mathrm{iid}}{\sim} f_{\alpha,s_1,s_2} \in \mathcal{F}^1$, then writing $X_{(1)} := \min_i X_i$, $X_{(n)} := \max_i X_i$ and $\kappa := \alpha(X_{(n)} - X_{(1)})$, we have
\begin{equation}
\label{Eq:Marshall}
\sup_{x \in \mathbb{R}} |\hat{F}_n(x) - F_0(x)| \leq \rho(|\kappa|)\sup_{x \in \mathbb{R}}|\mathbb{F}_n(x) - F_0(x)|,
\end{equation}
where $F_0$ and $\hat{F}_n$ denote the distribution functions corresponding to the true density and the log-concave MLE respectively, and where $\mathbb{F}_n$ denotes the empirical distribution function\footnote{The original Marshall's inequality \citep{Marshall1970} applies to the (integrated) Grenander estimator, in which context $\rho(|\kappa|)$ in~\eqref{Eq:Marshall} may be replaced by 1.}.   

We now aim to generalise these ideas to situations where $f_0$ is close to $\mathcal{F}^k$, but assume only that $X_1,\ldots,X_n \stackrel{\mathrm{iid}}{\sim} f_0 \in \mathcal{F}_1$.  An application of Lemma~\ref{Lemma:Basic} to the function $\Delta(x) = \log \frac{f_0(x)}{\hat{f}_n(x)}$ yields
\[
d_{\mathrm{KL}}^2(\hat{f}_n,f_0) \leq \frac{1}{n}\sum_{i=1}^n \log \frac{\hat{f}_n(X_i)}{f_0(X_i)} =: d_X^2(\hat{f}_n,f_0).
\]
In particular, an upper bound on $d_X^2(\hat{f}_n,f_0)$ immediately provides corresponding bounds on $d_{\mathrm{TV}}^2(\hat{f}_n,f_0)$, $d_{\mathrm{H}}^2(\hat{f}_n,f_0)$ and $d_{\mathrm{KL}}^2(\hat{f}_n,f_0)$.
\begin{thm}[\citet{KGS2016}]
There exists a universal constant $C > 0$ such that for $n \geq 2$,
\[
\mathbb{E}_{f_0}d_X^2(\hat{f}_n,f_0) \leq \min_{k=1,\ldots,n} \biggl\{\frac{Ck}{n}\log^{5/4} \frac{en}{k} + \inf_{f_k \in \mathcal{F}^k} d_{\mathrm{KL}}^2(f_0,f_k)\biggr\}.
\]
\end{thm}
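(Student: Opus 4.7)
Fix $k\in\{1,\ldots,n\}$ and $f_k\in\mathcal{F}^k\subseteq\mathcal{F}_1$. My starting point is the basic inequality $d_{\mathrm{KL}}^2(\hat f_n,f_0)\le d_X^2(\hat f_n,f_0)$ already derived from Lemma~\ref{Lemma:Basic}, combined with the MLE property $\mathbb{P}_n\log(\hat f_n/f_k)\ge 0$. Writing trivially $d_X^2(\hat f_n,f_0)=\mathbb{P}_n\log(\hat f_n/f_k)+\mathbb{P}_n\log(f_k/f_0)$, taking $\mathbb{E}_{f_0}$, and using $\mathbb{E}_{f_0}\mathbb{P}_n\log(f_k/f_0)=-d_{\mathrm{KL}}^2(f_0,f_k)$ together with $\mathbb{P}\log(\hat f_n/f_k)=d_{\mathrm{KL}}^2(f_0,f_k)-d_{\mathrm{KL}}^2(f_0,\hat f_n)$ (valid because $f_0\in\mathcal{F}_1$ makes the requisite integrals finite), the two $d_{\mathrm{KL}}^2(f_0,f_k)$ terms cancel to yield the clean identity
\[
\mathbb{E}_{f_0}d_X^2(\hat f_n,f_0)+\mathbb{E}_{f_0}d_{\mathrm{KL}}^2(f_0,\hat f_n)=\mathbb{E}_{f_0}\bigl[(\mathbb{P}_n-\mathbb{P})\log(\hat f_n/f_k)\bigr].
\]
Dropping the non-negative second summand on the left, it suffices to bound the expected centred empirical process on the right by $\frac{Ck}{n}\log^{5/4}(en/k)+d_{\mathrm{KL}}^2(f_0,f_k)$; the infimum over $f_k\in\mathcal{F}^k$ and minimum over $k$ are taken at the end.

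Since the naive sup of $(\mathbb{P}_n-\mathbb{P})\log(f/f_k)$ over all $f\in\mathcal{F}_1$ diverges, the right-hand side has to be controlled by a localization around $f_k$. Setting $r^2(f):=d_{\mathrm{KL}}^2(f_0,f)+d_{\mathrm{KL}}^2(f_0,f_k)$, I would carry out a dyadic peeling over slices $\{f\in\mathcal{F}_1:r^2(f)\in[2^{j-1}\delta,2^j\delta]\}$ with baseline $\delta=\frac{Ck}{n}\log^{5/4}(en/k)+d_{\mathrm{KL}}^2(f_0,f_k)$. On each slice, Dudley's chaining integral applied to a bracketing entropy bound, combined with Bernstein/Talagrand concentration (the variance of $\log(f/f_k)$ being of order $r^2$ by standard KL-to-variance inequalities), yields uniform control of $(\mathbb{P}_n-\mathbb{P})\log(f/f_k)$. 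A self-bounding step then absorbs a fraction of the $\mathbb{E}_{f_0}d_{\mathrm{KL}}^2(f_0,\hat f_n)$ appearing in the ensuing fixed-point inequality back into the left-hand side of the key identity. Envelope issues from the unboundedness of $\log\hat f_n$ and $\log f_k$ in the tails can be handled by truncation, exploiting the exponential tail bound for log-concave densities recalled earlier.

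The technical core of the proof, and the step I expect to be the main obstacle, is a local bracketing entropy estimate that genuinely exploits the $k$-affine structure of $f_k$. The generic Gao--Wellner-type bound $\log N_{[]}(\epsilon,\mathcal{F}_1,d_H)\lesssim \epsilon^{-1/2}$ is oblivious to $f_k$ and only recovers the worst-case Hellinger rate $n^{-4/5}$ of Theorem~\ref{Thm:UB}. What is required instead is a bound of the form $\log N_{[]}\bigl(\epsilon,\{\log(f/f_k):f\in\mathcal{F}_1,\,r(f)\le r\},\|\cdot\|\bigr)\lesssim k\,\log^{a}(1/\epsilon)$ for an appropriate exponent $a$, proved piece by piece over the $k$ affine segments of $\log f_k$: on each such segment $\log(f/f_k)$ is itself concave with range controlled by the localization, so the bracketing numbers can be compounded multiplicatively (rather than exponentially) across the $k$ pieces. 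Feeding this entropy into Dudley's integral and solving the resulting peeling fixed-point equation is what produces both the linear dependence on $k$ and, through a delicate balance of variance and entropy contributions, the non-standard $\log^{5/4}$ exponent. Minimizing over $f_k\in\mathcal{F}^k$ and $k\in\{1,\ldots,n\}$ then closes the argument.
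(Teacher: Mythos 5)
First, note that this survey does not actually prove the oracle inequality: it states the result from \citet{KGS2016} and records only that the proof ``relies on empirical process theory and local bracketing entropy bounds''.  Your outline is aligned with that description at the level of strategy, but two of its load-bearing steps do not work as written.  The first is the ``clean identity''.  Since $\hat{f}_n$ vanishes outside $[X_{(1)},X_{(n)}]$ while $f_0$ is positive on a strictly larger set (almost surely), we have $d_{\mathrm{KL}}^2(f_0,\hat{f}_n)=\int f_0\log(f_0/\hat{f}_n)=+\infty$ and $\int f_0\log\hat{f}_n=-\infty$ almost surely; correspondingly $(\mathbb{P}_n-\mathbb{P})\log(\hat{f}_n/f_k)=+\infty$.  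Both sides of your identity are therefore $+\infty$, and ``dropping the non-negative second summand'' leaves only $\mathbb{E}_{f_0}d_X^2(\hat{f}_n,f_0)\leq\infty$.  The decomposition has to be set up so that $\log\hat{f}_n$ is never integrated against the population measure --- for instance by localising in Hellinger distance (which is always finite) and indexing the empirical process by suitably truncated or transformed log-ratios; your closing remark about truncation gestures at this, but it does not repair the identity on which the rest of the argument is built.

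The second problem is the entropy heuristic.  Concavity of $\log(f/f_k)$ on each of the $k$ affine pieces, combined with multiplicative compounding of brackets across pieces, yields $\log N_{[]}\lesssim k\,\epsilon^{-1/2}$ --- that is, $k$ copies of the global Gao--Wellner bound, which is \emph{worse} than the unstructured bound and recovers nothing.  The bracketing entropy of concave functions with bounded range on a single interval is already of order $\epsilon^{-1/2}$, not polylogarithmic, so per-piece concavity alone cannot produce a bound of the form $k\log^{a}(1/\epsilon)$.  The gain in \citet{KGS2016} comes from the Hellinger localisation itself: the local class retains the $\epsilon^{-1/2}$ dependence in the bracket size but acquires a prefactor that is small in the localisation radius $\delta$ (together with the factor $k\log^{5/4}$), and extracting this requires distributing the Hellinger budget $\delta^2$ across the $k$ pieces and controlling the oscillation of $\log f-\log f_k$ on each.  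That is the genuinely hard step, and your sketch does not identify the mechanism that makes it work.  (A more minor point: obtaining the leading constant $1$ in front of $\inf_{f_k\in\mathcal{F}^k}d_{\mathrm{KL}}^2(f_0,f_k)$ requires the peeling and self-bounding steps to be carried out without inflating the approximation term; this is plausible in your scheme but nowhere checked.)
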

To help understand this theorem, first consider the case where $f_0 \in \mathcal{F}^k$.  Then $\mathbb{E}_{f_0}d_X^2(\hat{f}_n,f_0) \leq \frac{Ck}{n}\log^{5/4} (en/k)$, which is nearly the parametric rate when $k$ is small.  More generally, this rate holds when $f_0 \in \mathcal{F}_1$ is only close to $\mathcal{F}^k$ in the sense that the approximation error $d_{\mathrm{KL}}^2(f_0,f_k)$ is $O\bigl(\frac{k}{n}\log^{5/4} \frac{en}{k}\bigr)$.  The result is known as a `sharp' oracle inequality, because the leading constant for this approximation error term is 1.  See also \citet{BaraudBirge2016}, who also obtain an oracle inequality for their general $\rho$-estimation procedure.  It is worth noting that the techniques of proof, which rely on empirical process theory and local bracketing entropy bounds, are completely different from those used in the proof of Theorem~\ref{Thm:Adap1}.

\section{Higher-dimensional problems}
\label{Sec:ICA}

The minimax lower bound in Theorem~\ref{Thm:LB} is relatively discouraging for the prospects of log-concave density estimation in higher dimensions.  It is natural, then, to consider additional structures that reduce the complexity of the class $\mathcal{F}_d$, thereby increasing the potential for applications outside low-dimensional settings.  The purpose of this section is two explore two ways of imposing such structures, namely through independence and symmetry constraints.

In the simplest, noiseless case of Independent Component Analysis (ICA), one observes independent replicated of a random vector 
\begin{equation}
\label{Eq:ICA}
X := AS,
\end{equation}
where $A \in \mathbb{R}^{d \times d}$ is a deterministic, invertible matrix, and $S$ is a $d$-dimensional random vector with independent components.  One can think of the model as being the density estimation analogue of mulitple index models in regression.  ICA models have found an enormous range of applications across signal processing, machine learning and medical imaging, to name just a few; see \citet{HKO2001} for an introduction to the field.  The main interest is in estimating the unmixing matrix $W := A^{-1}$, with estimation of the marginal distributions of the components of $S$ as a secondary goal.  Let $\mathcal{W}$ denote the set of all invertible $d \times d$ real matrices, let $\mathcal{B}_d$ denote the set of all Borel subsets of $\mathbb{R}^d$, and let $\mathcal{P}_d^{\mathrm{ICA}}$ denote the set of $P \in \mathcal{P}_d$ with
\[
P(B) = \prod_{j=1}^d P_j(w_j^\top B) \quad \forall B \in \mathcal{B}_d, 
\]
for some $W = (w_1,\ldots,w_d)^\top \in \mathcal{W}$ and $P_1,\ldots,P_d \in \mathcal{P}_1$.  Thus $\mathcal{P}_d^{\mathrm{ICA}}$ is the set of distributions of random vectors $X$ with $\mathbb{E}(\|X\|) < \infty$ satisfying~\eqref{Eq:ICA}.  As stated, the model~\eqref{Eq:ICA} is not identifiable, as we can write $X = ADPP^\top D^{-1}S$, where $D$ is a diagonal $d \times d$ matrix with non-zero diagonal entries, and $P \in \mathbb{R}^{d \times d}$ is a permutation matrix (note that $ADP$ is invertible and $P^\top D^{-1}S$ has independent components).  Fortunately, these can be regarded as `trivial' lack of identifiability problems, because it is typically the directions of the set of rows of $W := A^{-1}$ that are of interest, not their order or magnitude.  \citet{ErikssonKoivunen2004} proved that the pair of conditions that none of $P_1,\ldots,P_d$ are Dirac point masses and at most one of them is Gaussian is necessary and sufficient for the ICA model to be identifiable up to the permutation and scaling transformations described above. 

Now let $\mathcal{F}_d^{\mathrm{ICA}}$ denote the set of $f \in \mathcal{F}_d$ with
\[
f(x) = |\det W|\prod_{j=1}^d f_j(w_j^\top x)
\]
for some $W = (w_1,\ldots,w_d)^\top \in \mathcal{W}$ and $f_1,\ldots,f_d \in \mathcal{F}_1$.  In this way, $\mathcal{F}_d^{\mathrm{ICA}}$ is the set of densities of random vectors $X$ satisfying~\eqref{Eq:ICA}, where each component of $S$ has a log-concave density.  Define the log-concave ICA projection on $\mathcal{P}_d$ by
\[
\psi^{**}(P) := \argmax_{f \in \mathcal{F}_d^{\mathrm{ICA}}} \int_{\mathbb{R}^d} \log f \, dP.
\]
In general, $\psi^{**}(P)$ only defines a non-empty, proper subset of $\mathcal{F}_d^{\mathrm{ICA}}$ rather than a unique element.  However, the following theorem gives uniqueness in an important special case, and the form of the log-concave ICA projection here is key to the success of this approach to fitting ICA models.
\begin{thm}[\citet{SamworthYuan2012}]
\label{Thm:ICAProj}
If $P \in \mathcal{P}_d^{\mathrm{ICA}}$, then $\psi^{**}(P)$ defines a unique element of $\mathcal{F}_d^{\mathrm{ICA}}$.  In fact, the restrictions of $\psi^{**}$ and $\psi^*$ to $\mathcal{P}_d^{\mathrm{ICA}}$ coincide.  Moreover, suppose that $P \in \mathcal{P}_d^{\mathrm{ICA}}$, so
\[
P(B) = \prod_{j=1}^d P_j(w_j^\top B) \quad \forall B \in \mathcal{B}_d, 
\]
for some $W = (w_1,\ldots,w_d)^\top \in \mathcal{W}$ and $P_1,\ldots,P_d \in \mathcal{P}_1$.  Then $f^{**} := \psi^{**}(P)$ can be written explicitly as
\[
f^{**}(x) = |\det W|\prod_{j=1}^d f_j^*(w_j^\top x),
\]
where $f_j^* := \psi^*(P_j)$.
\end{thm}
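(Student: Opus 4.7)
The plan is to show that $\psi^*(P)$ itself already lies in $\mathcal{F}_d^{\mathrm{ICA}}$ whenever $P \in \mathcal{P}_d^{\mathrm{ICA}}$, and moreover that it equals the candidate $x \mapsto |\det W|\prod_j f_j^*(w_j^\top x)$. Once this is done, both the uniqueness of $\psi^{**}(P)$ and the identity $\psi^{**} = \psi^*$ on $\mathcal{P}_d^{\mathrm{ICA}}$ come for free: since $\mathcal{F}_d^{\mathrm{ICA}} \subseteq \mathcal{F}_d$ and $\psi^*(P)$ is the unique maximiser of $f \mapsto \int \log f\, dP$ over $\mathcal{F}_d$ by Theorem~\ref{Thm:ExistProj}(iii), it is \emph{a fortiori} the unique maximiser over the smaller class $\mathcal{F}_d^{\mathrm{ICA}}$, so $\psi^{**}(P) = \{\psi^*(P)\}$.

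To compute $\psi^*(P)$ explicitly I would proceed in two stages. First, I would treat the product case $W = I$: given $Q := P_1 \otimes \cdots \otimes P_d$, an iterated application of Proposition~\ref{Prop:Indep} (splitting off one coordinate at a time) yields $\psi^*(Q)(s) = \prod_{j=1}^d f_j^*(s_j)$, where $f_j^* := \psi^*(P_j)$. Second, I would return to the general case via affine equivariance (Lemma~\ref{Lemma:Affine}). Since $P$ is the pushforward of $Q$ under the invertible linear map $s \mapsto W^{-1}s$, the lemma gives
\[
\psi^*(P)(x) = |\det W|\, \psi^*(Q)(Wx) = |\det W|\prod_{j=1}^d f_j^*(w_j^\top x),
\]
which is exactly the formula claimed and manifestly an element of $\mathcal{F}_d^{\mathrm{ICA}}$.

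The main technical obstacle is the bookkeeping required to invoke Proposition~\ref{Prop:Indep} and the projections $\psi^*(P_j)$, all of which demand $P_j \in \mathcal{P}_1$. Integrability $\int |s_j|\, dP_j(s_j) < \infty$ descends from $\int \|x\|\, dP(x) < \infty$ via $|w_j^\top x| \leq \|w_j\|\,\|x\|$; the non-degeneracy condition $P_j(\{a\}) < 1$ for every $a \in \mathbb{R}$ is the step I would be most careful with, but it follows because otherwise $P$ would concentrate on the hyperplane $\{x : w_j^\top x = a\}$, contradicting $P \in \mathcal{P}_d$. The iterative use of Proposition~\ref{Prop:Indep} also requires $P_1 \otimes \cdots \otimes P_{d-1} \in \mathcal{P}_{d-1}$; both of its defining conditions transfer from the one-dimensional factors by additivity of the first moment and by the observation that any hyperplane in $\mathbb{R}^{d-1}$ of full product measure would force one of the $P_j$ to be a point mass.
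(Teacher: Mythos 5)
Your proposal is correct and follows essentially the same route the paper indicates: the text immediately after the theorem states that the preservation of the ICA structure is a consequence of Lemma~\ref{Lemma:Affine} and Proposition~\ref{Prop:Indep}, which is precisely your two-stage argument (iterate the independence-preservation result for the product case, then pull back by affine equivariance), and the uniqueness and the identity $\psi^{**}=\psi^*$ on $\mathcal{P}_d^{\mathrm{ICA}}$ then follow, as you say, because the unique maximiser over $\mathcal{F}_d$ lands in the subclass $\mathcal{F}_d^{\mathrm{ICA}}$. Your care over the membership conditions $P_j \in \mathcal{P}_1$ and $P_1\otimes\cdots\otimes P_{d-1}\in\mathcal{P}_{d-1}$ is exactly the right bookkeeping.
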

The fact that $\psi^*$ preserves the ICA structure is a consequence of Lemma~\ref{Lemma:Affine} and Proposition~\ref{Prop:Indep}.  However, the most interesting aspect of this result is the fact that the unmixing matrix $W$ is preserved by the log-concave projection.  This suggests that, at least from the point of view of estimating $W$, there is no loss of generality in assuming that the marginal distributions of the components of $S$ have log-concave densities provided they have finite means.  Another crucial result is the fact that the log-concave ICA projection of $P \in \mathcal{P}_d^{\mathrm{ICA}}$ does not sacrifice identifiability: in fact, $\psi^{**}(P)$ is identifiable if and only if $P$ is identifiable. 

Given data $X_1,\ldots,X_n \stackrel{\mathrm{iid}}{\sim} P \in \mathcal{P}_d$ with empirical distribution $\mathbb{P}_n$, we can therefore fit an ICA model by computing $\hat{f}_n := \psi^{**}(\mathbb{P}_n)$.  This estimator has similar consistency properties to the original log-concave projection, and requires the maximisation of   
\[
\ell(W,f_1,\ldots,f_d;X_1,\ldots,X_n) := \log |\det W| + \frac{1}{n} \sum_{i=1}^n \sum_{j=1}^d \log f_j(w_j^\top X_i)
\]
over $W \in \mathcal{W}$ and $f_1,\ldots,f_d \in \mathcal{F}_1$.  For reasons of numerical stability, however, it is convenient to `pre-whiten' the estimator by setting $Z_i := \hat{\Sigma}^{-1/2}X_i$ for $i=1,\ldots,n$, where $\hat{\Sigma}$ denotes the sample covariance matrix.  We can then instead obtain a maximiser $(\hat{O},\hat{g}_1,\ldots,\hat{g}_d)$ of $\ell(O,g_1,\ldots,g_d;Z_1,\ldots,Z_n)$ over $O \in O(d)$, the set of $d \times d$ orthogonal matrices, and $g_1,\ldots,g_d \in \mathcal{F}_1$, before setting $\hat{\hat{W}} := \hat{O}\hat{\Sigma}^{-1/2}$ and $\hat{\hat{f}}_j := \hat{g}_j$.  This estimator has the same consistency properties as the original proposal, provided that $\int_{\mathbb{R}^d} \|x\|^2 \, dP(x) < \infty$.  In effect, it breaks down the estimation of the $d^2$ parameters in $W$ into two stages: first, we use $\hat{\Sigma}$ to estimate the $d(d + 1)/2$ free parameters of the symmetric, positive definite matrix $\Sigma$, leaving only the maximisation over the $d(d - 1)/2$ free parameters of $O \in O(d)$ at the second stage.  Even after pre-whitening, however, there is an additional computational challenge relative to the orginal log-concave MLE caused by the fact that the objective function $\ell$ is only bi-concave\footnote{In other words, $\ell$ is concave in $O$ for fixed $g_1,\ldots,g_d$, and concave in $g_1,\ldots,g_d$ for fixed $O$.} in $O$ and $g_1,\ldots,g_d$, but not jointly concave in these arguments.  Since we only have to deal with computation of univariate log-concave maximum likelihood estimators, however, marginal updates are straightforward, and taking the solution with highest log-likelihood over several random initial values for the variables can lead to satisfactory solutions \citep{SamworthYuan2012}.

Symmetry constraints provide another alternative approach to extending the scope of shape-constrained methods to higher dimensions.  For simplicity of exposition, we focus on the simplest case of spherical symmetry, as studied recently by \citet{XuSamworth2017}, though more general symmetry constraints may also be considered.  We write $\mathcal{F}_d^{\mathrm{SS}}$ for the set of spherically symmetric $f \in \mathcal{F}_d$, and let $\Phi^{\mathrm{SS}}$ denote the class of upper semi-continuous, decreasing, concave functions $\phi:[0,\infty) \rightarrow [-\infty,\infty)$.  The starting point for the symmetry-based approach is the observation that a density $f$ on $\mathbb{R}^d$ belongs to $\mathcal{F}_d^{\mathrm{SS}}$ if and only if $f(x) = e^{\phi(\|x\|)}$ for some $\phi \in \Phi^{\mathrm{SS}}$.  One can then define the notion of spherically symmetric log-concave projection, which has several similarities with the theory presented in Sections~\ref{Sec:LCProjections} and~\ref{Sec:Props} (though with some notable differences, especially with regard to moment preservation properties).  In particular, given data $X_1,\ldots,X_n \in \mathbb{R}^d$ that are not all zero, there exists a unique spherically spherically log-concave MLE $\hat{f}_n^{\mathrm{SS}}$.  This estimator can be computed using a variant of the Active Set algorithm outlined in Section~\ref{Sec:Algorithm}.  Importantly, this algorithm only depends on $d$ through the need to compute $Z_i := \|X_i\|$ for $i=1,\ldots,n$ at the outset, and it therefore scales extremely well to high-dimensional cases, even when $d$ may be in the hundreds of thousands.

The following worst case bound reveals that $\hat{f}_n^{\mathrm{SS}}$ succeeds in evading the curse of dimensionality:
\begin{thm}[\citet{XuSamworth2017}]
Let $f_0 \in \mathcal{F}_d^{\mathrm{SS}}$, let $X_1,\ldots,X_n \stackrel{\mathrm{iid}}{\sim} f_0$, and let $\hat{f}_n^{\mathrm{SS}}$ denote the corresponding spherically symmetric log-concave MLE.  Then there exists a universal constant $C > 0$ such that
\[
\sup_{f_0 \in \mathcal{F}_d^{\mathrm{SS}}} \mathbb{E}d_X^2(\hat{f}_n^{\mathrm{SS}},f_0) \leq Cn^{-4/5}.
\]
\end{thm}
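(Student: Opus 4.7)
The plan is to reduce the problem to a univariate shape-constrained density estimation problem in the radial coordinate $Z := \|X\|$ and then apply the empirical process / bracketing entropy machinery used in the $d=1$ case of Theorem~\ref{Thm:UB}. First I would rewrite the MLE in radial form: $\hat f_n^{\mathrm{SS}}(x) = e^{\hat\phi_n(\|x\|)}$, where $\hat\phi_n \in \Phi^{\mathrm{SS}}$ maximises
\[
M_n(\phi) := \frac{1}{n}\sum_{i=1}^n \phi(Z_i) - \int_0^\infty c_d\, z^{d-1} e^{\phi(z)}\,dz
\]
over $\phi \in \Phi^{\mathrm{SS}}$, with $Z_i := \|X_i\|$ and $c_d$ the surface area of the unit sphere in $\mathbb{R}^d$. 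The true radial density of $Z$, namely $g_0(z) := c_d\, z^{d-1} e^{\phi_0(z)}$, is itself log-concave on $(0,\infty)$ (sum of the concave functions $(d-1)\log z$ and $\phi_0$), so the estimation problem is genuinely one-dimensional.

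Since $\Phi^{\mathrm{SS}}$ is convex, $(1-t)\hat\phi_n + t\phi_0 \in \Phi^{\mathrm{SS}}$ for $t \in [0,1]$, so differentiating $M_n$ along this line at $t=0^+$ yields the SS analogue of Lemma~\ref{Lemma:Basic},
\[
\frac{1}{n}\sum_{i=1}^n [\phi_0(Z_i) - \hat\phi_n(Z_i)] \leq \int_{\mathbb{R}^d} [\log f_0 - \log \hat f_n^{\mathrm{SS}}]\,\hat f_n^{\mathrm{SS}}\, dx = -d_{\mathrm{KL}}^2(\hat f_n^{\mathrm{SS}}, f_0),
\]
which rearranges to $d_{\mathrm{KL}}^2(\hat f_n^{\mathrm{SS}}, f_0) \leq d_X^2(\hat f_n^{\mathrm{SS}}, f_0)$. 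Decomposing $d_X^2$ into its centred and $P_0^Z$-integrated parts then gives
\[
d_{\mathrm{KL}}^2(\hat f_n^{\mathrm{SS}}, f_0) + d_{\mathrm{KL}}^2(f_0, \hat f_n^{\mathrm{SS}}) \leq (\mathbb{P}_n^Z - P_0^Z)[\hat\phi_n - \phi_0],
\]
where $P_0^Z$ is the law of $Z$ and $\mathbb{P}_n^Z$ its empirical counterpart. The task is thus reduced to bounding an empirical process over a univariate shape-constrained class.

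The probabilistic core is then a standard chaining and peeling argument. Using the exponential tail bound of \citet{CuleSamworth2010} applied uniformly over $\mathcal{F}_d^{\mathrm{SS}}$, one can with high probability localise both $\hat\phi_n$ and $\phi_0$ on a compact radial interval of length $O(\log n)$, with the contribution from outside being negligible on the scale $n^{-4/5}$. On such a bounded interval, the class of differences of decreasing concave functions carries the classical $L_2$-bracketing entropy bound $\log N_{[\,]}(\varepsilon,\cdot,L_2(P_0^Z)) \leq K\varepsilon^{-1/2}$, exactly as in the proof of Theorem~\ref{Thm:UB}. A Dudley integral then gives
\[
\mathbb{E}\sup_{\|\phi - \phi_0\|_{L_2(P_0^Z)} \leq \delta}(\mathbb{P}_n^Z - P_0^Z)[\phi - \phi_0] \leq C\delta^{3/4}/\sqrt{n},
\]
and combining with a quadratic lower bound on the symmetrised KL in terms of $\|\hat\phi_n - \phi_0\|_{L_2(P_0^Z)}$ and peeling over dyadic shells $\delta = 2^j n^{-2/5}$ yields $\mathbb{E}d_X^2(\hat f_n^{\mathrm{SS}}, f_0) \leq Cn^{-4/5}$.

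The main obstacle, I expect, lies in making the tail and envelope estimates uniform over $\mathcal{F}_d^{\mathrm{SS}}$ with $d$-independent constants: the weight $z^{d-1}$ makes $g_0$ degenerate at the origin when $d \geq 2$ and pushes mass outwards as $d$ grows, so one must check carefully that $d$ does not sneak back in through the effective radial support of $\hat g_n$ or through constants in the bracketing entropy. Once this is controlled, the dimension-free rate $n^{-4/5}$ is essentially just the one-dimensional log-concave MLE rate applied to the radii $Z_1,\ldots,Z_n$, which is precisely why the symmetry constraint allows $\hat f_n^{\mathrm{SS}}$ to evade the curse of dimensionality seen in Theorem~\ref{Thm:LB}.
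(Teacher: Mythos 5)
The survey states this theorem without proof (it is quoted from Xu and Samworth, 2017), but your outline follows precisely the route the surrounding text indicates: pass to the radii $Z_i=\|X_i\|$, whose density $c_d z^{d-1}e^{\phi_0(z)}$ lies in the univariate log-concave subclass $\mathcal{H}$, use convexity of $\Phi^{\mathrm{SS}}$ to obtain the basic inequality $d_{\mathrm{KL}}^2(\hat f_n^{\mathrm{SS}},f_0)\leq d_X^2(\hat f_n^{\mathrm{SS}},f_0)$ and the centred empirical-process bound, and then run the one-dimensional bracketing-entropy and peeling argument with dimension-free constants because the common weight $(d-1)\log z$ cancels in the differences $\hat\phi_n-\phi_0$. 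This is essentially the intended argument; the only caution is that $\hat f_n^{\mathrm{SS}}$ corresponds to the MLE over the constrained class $\mathcal{H}\subseteq\mathcal{F}_1$ rather than the unconstrained univariate log-concave MLE of the density of $Z$, so Theorem~\ref{Thm:UB} cannot be invoked as a black box and the $d=1$ machinery must be rerun for $\mathcal{H}$ (which works, since its bracketing numbers are dominated by those of the corresponding subclasses of $\mathcal{F}_1$).
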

Similar to the ordinary log-concave MLE, we have $d_X^2(\hat{f}_n^{\mathrm{SS}},f_0) \geq d_{\mathrm{KL}}^2(\hat{f}_n^{\mathrm{SS}},f_0)$, and the interesting feature of this bound is that it does not depend on $d$.  Nevertheless, a viable alternative, which also satisfies the same worst case risk bound, and which is equally straightforward to compute, is to let $\tilde{h}_n$ denote the (ordinary) log-concave MLE based on $Z_1,\ldots,Z_n$, and then set
\begin{equation}
\label{Eq:fntilde}
\tilde{f}_n(x) := \left\{ \begin{array}{ll} \tilde{h}_n(\|x\|)/(c_d\|x\|^{d-1}) & \mbox{if $x \neq 0$} \\
0 & \mbox{if $x=0$,} \end{array} \right.
\end{equation}
where $c_d := 2\pi^{d/2}/\Gamma(d/2)$.  This estimator, however, ignores the fact that the density of $Z_1$ is a `special' log-concave density, belonging to the class 
\[
\mathcal{H} := \biggl\{r \mapsto r^{d-1}e^{\phi(r)}: \phi \in \Phi^{\mathrm{SS}}, \int_0^\infty r^{d-1}e^{\phi(r)} \, dr = 1\biggr\},
\]
and means that $\tilde{f}_n$ does not belong to $\mathcal{F}_d^{\mathrm{SS}}$ in general.  Moreover, $\tilde{f}_n$ is inconsistent at $x=0$ (the estimator is zero for $\|x\| < \min_i Z_i$) and behaves badly for small $\|x\|$; cf.~Figure~\ref{Fig:SS}, taken from \citet{XuSamworth2017}.
\begin{figure}[ht!]
  \centering
    \includegraphics[scale=0.25, trim=140 0 0 0]{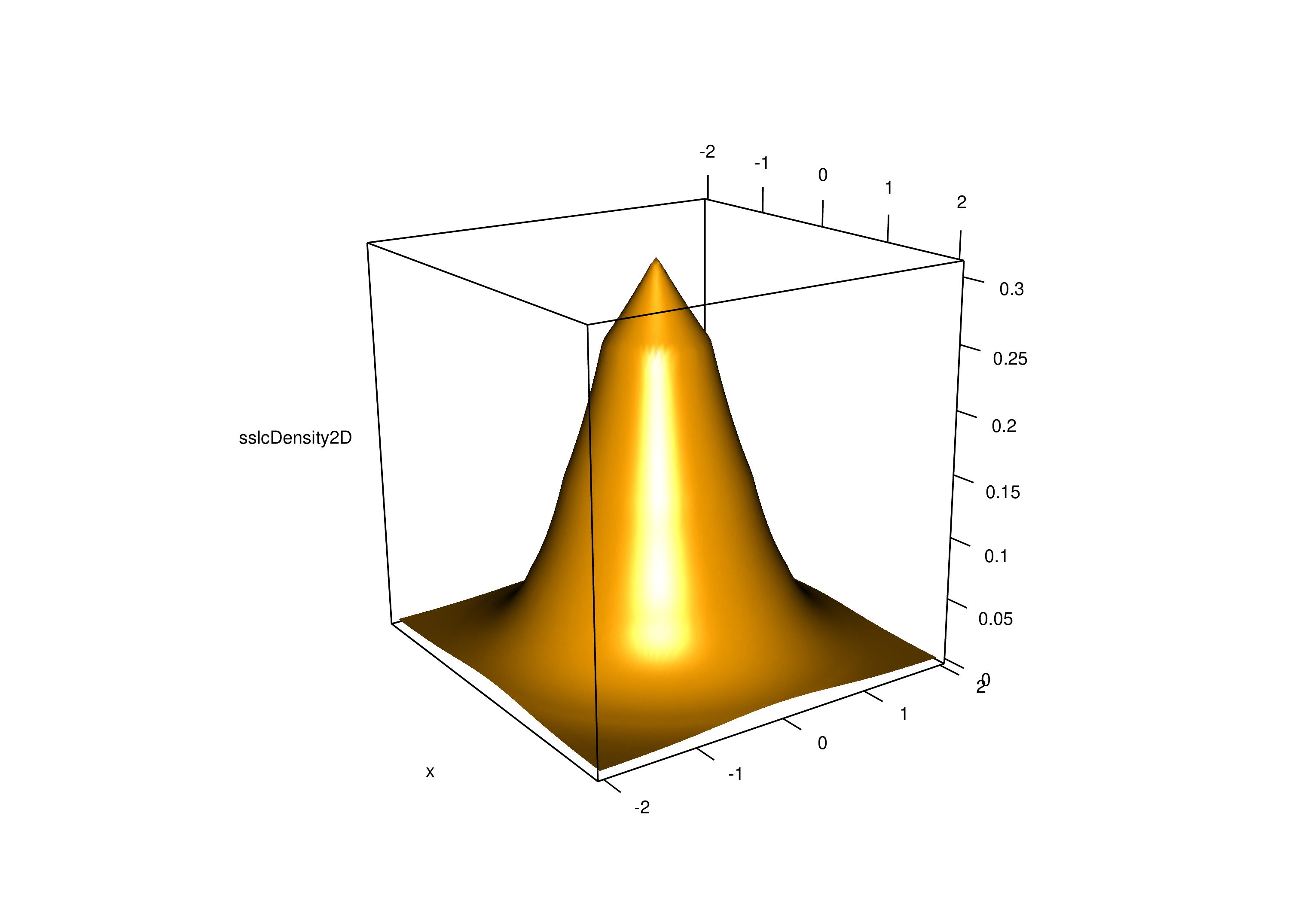}
    \includegraphics[scale=0.25, trim=140 0 0 0]{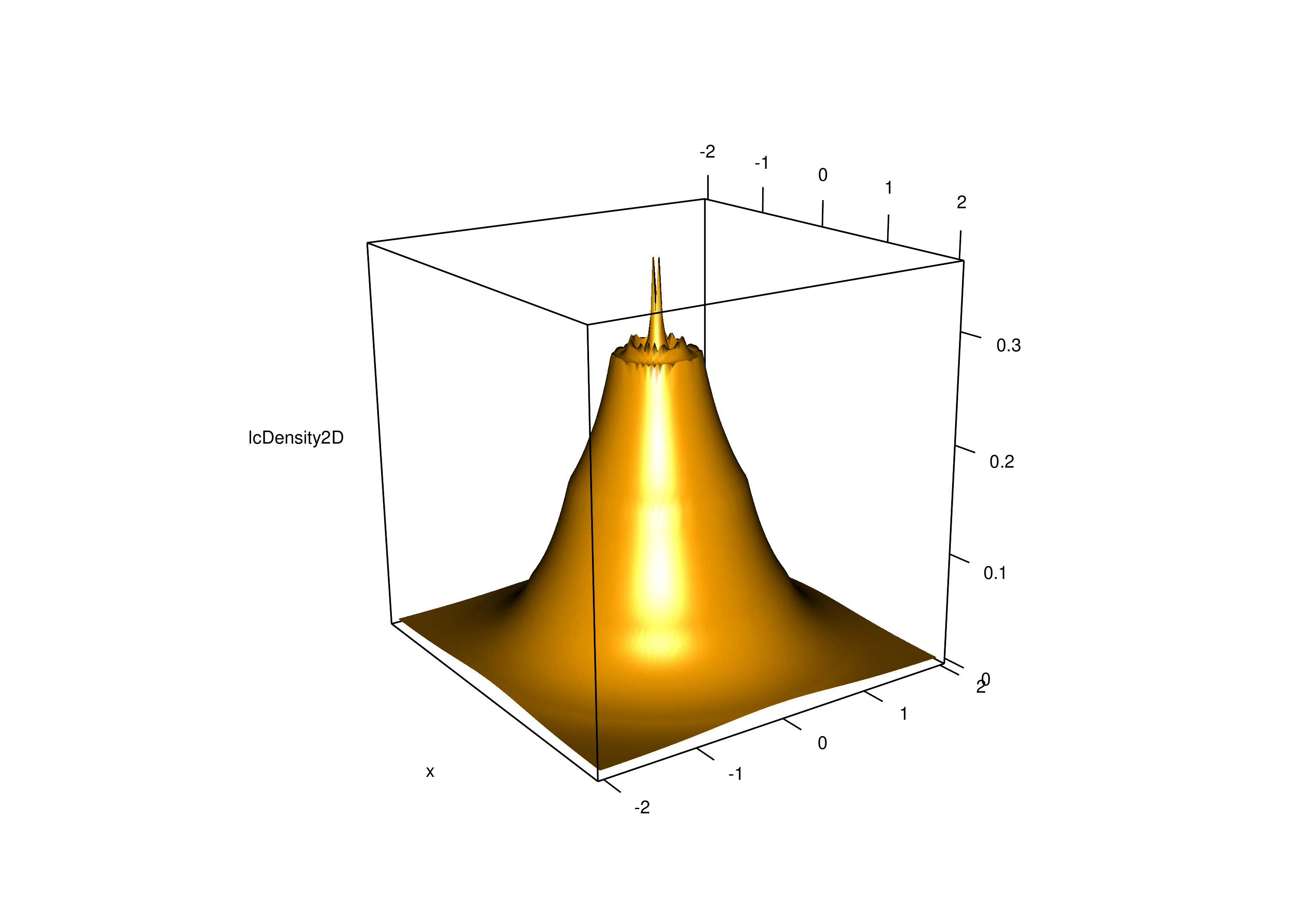}
\caption{\label{Fig:SS}A comparison of the spherically-symmetric log-concave MLE $\hat{f}_n^{\mathrm{SS}}$ (left) and the estimator $\tilde{f}_n$ defined in~\eqref{Eq:fntilde} (right) based on a sample of size $n=1000$ from a standard bivariate normal distribution.}
\end{figure}

A further advantage of $\hat{f}_n^{\mathrm{SS}}$ in this context relates to its adaptation behaviour.  To describe this, for $k \in \mathbb{N}$, we say $\phi \in \Phi^{\mathrm{SS}}$ is \emph{$k$-affine}, and write $\phi \in \Phi^{\mathrm{SS},k}$, if there exist $r_0 \in (0,\infty]$ and a partition $I_1,\ldots,I_{k}$ of $[0,r_0)$ into intervals such that $\phi$ is affine on each $I_j$ for $j=1,\ldots,k$, and $\phi(r) = -\infty$ for $r > r_0$.  Define $\mathcal{H}^{k} := \bigl\{ h \in \mathcal{H}\,:\, h(r) = r^{d-1}e^{\phi(r)} \textrm{ for some } \phi \in \Phi^{\mathrm{SS},k}\bigr\}$.  
\begin{thm}[\citet{XuSamworth2017}]
\label{Thm:Adaptation}
Let $f_0 \in \mathcal{F}_d^{\mathrm{SS}}$ be given by $f_0(x) = e^{\phi_0(\|x\|)}$, where $\phi_0 \in \Phi^{\mathrm{SS}}$ and let $X_1,\ldots, X_n \stackrel{\mathrm{iid}}{\sim} f_0$. Let $\hat{f}_n^{\mathrm{SS}}$ be the spherically symmetric log-concave MLE.  Define $h_0 \in \mathcal{H}$ by $h_0(r) := r^{d-1}e^{\phi_0(r)}$ for $r \in [0,\infty)$.  Then, writing $\nu_k^2 := 2 \wedge \inf_{h \in \mathcal{H}^{k}} d_{\mathrm{KL}}^2(h_0,h)$, there exists a universal constant $C > 0$ such that
  \[
    \mathbb{E}d_X^2(\hat{f}_n^{\mathrm{SS}},f_0) \leq C \min_{k =1,\ldots,n} \biggl(\frac{k^{4/5}\nu_k^{2/5}}{n^{4/5}} \log \frac{en}{k\nu_k} + \frac{k}{n}\log^{5/4}\frac{en}{k}\biggr).
  \]
\end{thm}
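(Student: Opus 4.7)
My plan would be to reduce the multivariate problem to a one-dimensional one. Writing $\hat{f}_n^{\mathrm{SS}}(x) = e^{\hat{\phi}(\|x\|)}$ for some $\hat{\phi} \in \Phi^{\mathrm{SS}}$ and setting $Z_i := \|X_i\| \stackrel{\mathrm{iid}}{\sim} h_0$, the univariate function $\hat{h}_n(r) := c_d r^{d-1}\hat{f}_n^{\mathrm{SS}}(re_1)$ is the MLE of $h_0$ over the class $\mathcal{H}$ based on $Z_1,\ldots,Z_n$; indeed, the log-likelihoods differ only by a term not depending on $\phi$. The $r^{d-1}$ factors cancel in the log-ratio, so
\[
d_X^2(\hat{f}_n^{\mathrm{SS}},f_0) = \frac{1}{n}\sum_{i=1}^n \log\frac{\hat{h}_n(Z_i)}{h_0(Z_i)},
\]
and the whole task becomes proving an adaptation bound for the one-dimensional MLE $\hat{h}_n \in \mathcal{H}$ relative to the $k$-affine sub-classes $\mathcal{H}^k$.

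For the reduced problem I would follow the oracle-inequality strategy of \citet{KGS2016}. Fix $k$ and pick $h_k \in \mathcal{H}^k$ with $d_{\mathrm{KL}}^2(h_0,h_k) \leq \nu_k^2 + 1/n$ (handling separately the regime where the infimum exceeds $2$, in which case $\nu_k^2 = 2$ and the target bound is trivial). Splitting $\log(\hat{h}_n/h_0) = \log(h_k/h_0) + \log(\hat{h}_n/h_k)$, taking expectations, and using $\mathbb{E}\log(h_k(Z)/h_0(Z)) = -d_{\mathrm{KL}}^2(h_0,h_k)$ together with $P\log(\hat{h}_n/h_k) = d_{\mathrm{KL}}^2(h_0,h_k) - d_{\mathrm{KL}}^2(h_0,\hat{h}_n)$ yields the basic identity
\[
\mathbb{E}d_X^2(\hat{h}_n,h_0) + \mathbb{E}d_{\mathrm{KL}}^2(h_0,\hat{h}_n) = \mathbb{E}\bigl\{(\mathbb{P}_n - P)\log(\hat{h}_n/h_k)\bigr\},
\]
where $P$ denotes the law of $Z_1$. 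Because $d_{\mathrm{KL}}^2 \geq 2 d_{\mathrm{H}}^2$, this simultaneously controls $\mathbb{E}d_X^2(\hat{h}_n,h_0)$ and the Hellinger risk $\mathbb{E}d_{\mathrm{H}}^2(\hat{h}_n,h_k)$, so one can localise $\hat{h}_n$ to a Hellinger ball around $h_k$ of radius tied to the risk being bounded. A Talagrand-type maximal inequality together with a peeling argument then reduces the problem to computing two bracketing-entropy integrals for $\mathcal{H}$ localised near $\mathcal{H}^k$.

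The main technical obstacle is establishing the requisite entropy bounds. For $\mathcal{H}^k$, which can be described by $O(k)$ free parameters (the slopes, breakpoints and support endpoint of the $k$ affine pieces of $\phi$), a standard $k\log(1/\epsilon)$ estimate feeds into the parametric term $\frac{k}{n}\log^{5/4}(en/k)$. For $\mathcal{H}$ in a Hellinger neighbourhood of an element of $\mathcal{H}^k$, one needs a local analogue of the classical $\epsilon^{-1/2}$ bracketing entropy bound for univariate log-concave densities. The difficulty is that members of $\mathcal{H}$ are not themselves log-concave (because of the $r^{d-1}$ weight), so the argument has to be carried out at the level of $\phi \in \Phi^{\mathrm{SS}}$ and brackets transferred through the weight; moreover, the local entropy near a $k$-piece $h_k$ must be built piecewise on and around each affine segment, with the correct dependence on the approximation scale $\nu_k$. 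Once the two entropy bounds are in hand, balancing the nonparametric contribution against $\nu_k$ produces the mixed term $\frac{k^{4/5}\nu_k^{2/5}}{n^{4/5}}\log\frac{en}{k\nu_k}$, and the outer minimum over $k \in \{1,\ldots,n\}$ yields the inequality in the statement.
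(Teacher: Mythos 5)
The survey states Theorem~\ref{Thm:Adaptation} only by citation to \citet{XuSamworth2017}; no proof appears in this paper, so there is no argument here to compare yours against step by step. Judged against the techniques the survey explicitly attributes to this line of work, your plan is the right one. The reduction to one dimension is correct and is exactly how the estimator is structured: the text notes that $\hat{f}_n^{\mathrm{SS}}$ depends on the data only through $Z_i = \|X_i\|$, the $r^{d-1}$ weights cancel in the log-ratio, and so $d_X^2(\hat{f}_n^{\mathrm{SS}},f_0)$ coincides with the corresponding quantity for the MLE over $\mathcal{H}$ based on $Z_1,\ldots,Z_n$. Your basic identity $\mathbb{E}d_X^2(\hat{h}_n,h_0) + \mathbb{E}d_{\mathrm{KL}}^2(h_0,\hat{h}_n) = \mathbb{E}\bigl\{(\mathbb{P}_n-P)\log(\hat{h}_n/h_k)\bigr\}$ checks out, and the localisation-plus-peeling scheme driven by two bracketing-entropy integrals is the same architecture as \citet{KGS2016}, which is indeed the template for the cited result and is consistent with the survey's remark that these adaptation proofs ``rely on empirical process theory and local bracketing entropy bounds.''

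That said, what you have is a correct strategy rather than a proof, because the two steps you defer are precisely where the substance lies. First, $\log(\hat{h}_n/h_k)$ can be infinite off the common support of $\hat{h}_n$ and $h_k$, so the empirical-process step requires the truncation and boundary-surgery devices of \citet{KGS2016} to be re-engineered for $\mathcal{H}$; you do not indicate how. Second, the local bracketing entropy of $\mathcal{H}$ near $\mathcal{H}^k$ --- for a class whose members are not log-concave because of the $r^{d-1}$ weight, and with the joint dependence on $k$, the Hellinger radius and $\nu_k$ needed to produce the mixed term $k^{4/5}\nu_k^{2/5}n^{-4/5}\log\bigl(en/(k\nu_k)\bigr)$ --- is identified as the obstacle but not resolved; asserting that brackets for $\phi \in \Phi^{\mathrm{SS}}$ can be ``transferred through the weight'' with the right constants is the theorem's real content. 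Until those two lemmas are supplied, the final balancing step that yields the stated bound cannot be carried out.
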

Interestingly, this result implies the following sharp oracle inequality: there exists a universal constant $C > 0$ such that
  \[
    \mathbb{E}d_X^2(\hat{f}_n^{\mathrm{SS}},f_0) \leq \min_{k =1,\ldots,n} \biggl(\nu_k^2 +  C\frac{k}{n}\log^{5/4}\frac{en}{k}\biggr).
  \]

\section{Other topics}

\subsection{$s$-concave densities}

As an attempt to allow heavier tails than are permitted by log-concavity, say a density $f$ is \emph{$s$-concave} with $s < 0$, and write $f \in \mathcal{F}_{d,s}$, if $f = (-\phi)^{1/s}$ for some $\phi \in \Phi$.  Such densities have convex upper level sets, but allow polynomial tails, and satisfy $\mathcal{F}_{d,s_2} \supseteq \mathcal{F}_{d,s_1} \supseteq \mathcal{F}_d$ for $s_2 < s_1 < 0$.  Some, but not all, of the properties of $\mathcal{F}_d$ translate over to these larger classes \citep[e.g.][]{DJ1988}.  Results on the maximum likelihood estimator in the case $d=1$ are recently available \citep{DossWellner2016}, but estimation techniques based on R\'enyi divergences are also attractive here \citep{KoenkerMizera2010,HanWellner2016}. 

\subsection{Finite mixtures of log-concave densities}

Finite mixtures offer another attractive way of generalising the scope of log-concave modelling \citep{ChangWalther2007,EilersBorgdorff2007,CSS2010}.  The main issue concerns identifiability: for instance the mixture distribution $p N_d(-\mu,I) + (1-p)N_d(\mu,I)$ with $p \in (0,1)$ has a log-concave density if and only if $\|\mu\| \leq 1$ \citep{CSS2010}.  However, all is not lost: for instance, consider distribution functions on $\mathbb{R}$ of the form
\[
G(x) := pF(x-\mu_1) + (1-p)F(x-\mu_2),
\]
where $p \in [0,1]$, $\mu_1 \leq \mu_2$ and $F(-x) = 1 - F(x)$, so that the distribution corresponding to $F$ is symmetric about zero.  \citet{HWH2007} proved that if $p \notin \{0,1/2,1\}$ and $\mu_1 < \mu_2$, then $p$, $\mu_1$, $\mu_2$ and $F$ are identifiable.  \citet{BalabdaouiDoss2017} have recently exploited this result to fit a two-component location mixture of a symmetric, log-concave density.  One can imagine this as a model for a population of adult human heights, where the two components correspond to men and women.

\subsection{Regression problems}

Consider the basic regression model
\[
Y = m(x) + \epsilon,
\]
where $x \in \mathbb{R}^d$ is considered fixed for simplicity, $m$ belongs to a class of real-valued functions $\mathcal{M}$ and $\epsilon \sim P$ with $\mathbb{E}(\epsilon) = 0$.  There is a large literature on estimating $m$ under different shape constraints \citep[e.g.][]{vanEeden1958, GJW2001, HanWellner2016b, ChenSamworth2016}.  But log-concavity does not seem to be a natural constraint to impose on a regression function.  On the other hand, it may well represent a sensible model for the distribution of the error vector $\epsilon$.  Given covariates $x_1,\ldots,x_n \in \mathbb{R}^d$ and corresponding independent responses $Y_1,\ldots,Y_n$, \citet{DSS2011,DSS2013} considered estimating $(m,\log \psi^*(P))$ by
\[
(\hat{m},\phi^*) \in \argmax_{(m,\phi) \in \mathcal{M} \times \Phi} \frac{1}{n}\sum_{i=1}^n \phi\bigl(Y_i - m(x_i)\bigr) - \int_{\mathbb{R}^d} e^\phi + 1.
\]
Such a maximiser exists, assuming only that $\mathcal{M}$ is closed under the addition of constant functions, and that $\mathcal{M}(x) := \bigl\{\bigl(m(x_1),\ldots,m(x_n)\bigr):m \in \mathcal{M}\bigr\}$ is a closed subset of $\mathbb{R}^n$.  Under a triangular array scheme, it can be shown that in the case of linear regression with a fixed number of covariates, the estimator of the vector of regression coefficients is consistent \citep[][Corollary~2.2]{DSS2013}, while numerical evidence suggests that the estimator can yield significant improvements over the ordinary least squares estimator in settings where $\epsilon$ has a log-concave, but not Gaussian, density.  Similar to the Independent Component Analysis problem studied in Section~\ref{Sec:ICA}, the optimisation problem is again only bi-concave, though stochastic search algorithms offer a promising approach \citep{DSS2013}. 

\appendix


\end{document}